\begin{document}

\markboth{Alexander Merle}
{keV Neutrino Model Building}

%
\catchline{}{}{}{}{}
%

\title{KEV NEUTRINO MODEL BUILDING}

\author{ALEXANDER MERLE}

\address{Physics and Astronomy, University of Southampton,\\
Highfield, Southampton, SO17 1BJ, United Kingdom\\
A.Merle@soton.ac.uk}

\maketitle

\begin{history}
\received{Day Month Year}
\revised{Day Month Year}
\end{history}

\begin{abstract}
We review the model building aspects for keV sterile neutrinos as Dark Matter candidates. After giving a brief discussion of some cosmological, astrophysical, and experimental aspects, we first discuss the currently known neutrino data and observables. We then explain the purpose and goal of neutrino model building, and review some generic methods used. Afterwards certain aspects specific for keV neutrino model building are discussed, before reviewing the bulk of models in the literature. We try to keep the discussion on a pedagogical level, while nevertheless pointing out some finer details where necessary and useful. Ideally, this review should enable a grad student or an interested colleague from cosmology or astrophysics with some prior experience to start working on the field.
\end{abstract}

\keywords{Neutrinos; Dark Matter; Model Building.}

\ccode{PACS numbers: 14.60.Pq; 14.60.St; 12.90.+b; 95.35.+d}

\tableofcontents

\section{\label{sec:intro}Introduction}	

The \emph{Dark Matter} problem has been a long standing one in physics: since Oort in 1932 (by studying stellar motions~\cite{FirstOort}) and Zwicky in 1933 (by studying galaxy clusters~\cite{Zwicky:1933gu}) independently concluded its existence, Dark Matter (DM) has puzzled the scientific community. Even though we know that it must exist~\cite{Komatsu:2008hk,Komatsu:2010fb,Hinshaw:2012fq,Ade:2013lta}, we do not know much about its true Nature. While alternative explanations such a Massive Compact Halo Objects (MACHOs)~\cite{Evans:2004gd} or Modified Newtonian Dynamics (MOND)~\cite{Milgrom:1983ca} now seem to be strongly disfavored by the observation of the Bullet Cluster~\cite{Clowe:2006eq}, we tend to believe more and more that DM consists of some new, yet unknown, type of elementary particle~\cite{Bertone}. Depending on its velocity, one classifies particulate DM into \emph{Cold} (CDM), \emph{Warm} (WDM),\footnote{One might also find references to \emph{Cool} (CoolDM) Dark Matter, see \emph{e.g.} Ref.~\refcite{Shi:1998km}.} or \emph{Hot} (HDM) Dark Matter. While HDM is constrained by cosmological structure formation to make up at most about $1\%$ of all DM~\cite{Abazajian:2004zh,dePutter:2012sh}, it is not clear which of the other two possibilities is the correct choice. Even mixed DM scenarios are possible~\cite{Boyanovsky:2007ba,Boyarsky:2008xj}.

An equally puzzling topic in elementary particle physics is the Nature of the neutrino: after its postulate by Pauli~\cite{Neutrino_Wiki} in 1930 and its first detection by Cowan and Reines~\cite{Cowan:1992xc} in 1956, we have gone a long way to measuring all light neutrino mixing angles~\cite{Tortola:2012te,Fogli:2012ua,GonzalezGarcia:2012sz} and narrowing down its mass range to be below the eV-scale~\cite{Komatsu:2008hk,Komatsu:2010fb,Hinshaw:2012fq,Ade:2013lta,Kraus:2004zw,Lobashev:1999tp,KamLANDZen:2012aa,Auger:2012ar,Andreotti:2010vj}. Nevertheless, not only have we not been able to experimentally pin down the actual neutrino mass, its apparent smallness and also the oddly large mixing angles look as if some unknown pattern was behind them. While at the moment we have no fully accepted explanation for this pattern, we have yet been successful in building models that \emph{can} explain the measured patterns and at the same time lead to testable predictions.

The present review deals with a crossover topic that relates both fields described above: \emph{model building for keV sterile neutrino Dark Matter}. In this approach, typically the first-generation sterile neutrino $N_1$ is the lightest and plays the role of the DM particle. The minimal setting yielding such a pattern is the so-called \emph{neutrino-minimal Standard Model} ($\nu$MSM)~\cite{Asaka:2005an}, a framework where the SM is simply extended by three right-handed neutrinos with a very specific mass pattern. This framework can accommodate for a vast variety of phenomena, such as neutrino oscillations~\cite{Boyarsky:2009ix}, the baryon asymmetry of the Universe~\cite{Asaka:2005pn,Shaposhnikov:2008pf}, or the DM problem~\cite{Asaka:2005an}. However, the $\nu$MSM does not contain any \emph{explanation} for the required mass pattern, \emph{i.e.}, it has no mechanism enforcing it. This is the reason why more specific models are required, which contain a suitable mechanism that can to some extent explain the appearance of the required mass pattern. We will review the current models on the market, and explain their respective advantages and disadvantages whenever possible. The key point is that a sterile neutrino with a mass of a few keV could be a very suitable (typically warm) DM candidate. If this is the case, we will have a direct relation between Dark Matter and neutrinos. Connected to that, limits from either sector will constrain the other~\cite{Boyarsky:2006jm}. This raises the question of whether it is actually possible to find complete models that can at the same time lead to a keV sterile neutrino and to successful active neutrino mixing. It will turn out that this is indeed possible. However, due to constraints from the DM sector one has to take into account different and new aspects compared to ordinary model building for light neutrinos. Generically the models reviewed here are quite strongly constrained, since they have to explain data from various sectors. On the other hand, this is exactly what makes keV sterile neutrino DM so interesting: if such a neutrino remains to be a good DM candidate, it will be necessary that astrophysicists, cosmologists, and particle physicists collaborate on the topic in the future, in order to prove or disprove the role of the keV sterile neutrino as DM.

While even the astrophysical and cosmological aspects of keV sterile neutrino DM have hardly been discussed for much more than one decade, the field of model building for keV sterile neutrinos is even younger and less developed, which suggests the question why at all one could need a review on the subject at this stage. However, since the field strongly requires experts from different disciplines to collaborate in order to investigate it thoroughly in all its aspects, it is certainly useful to have a review at hand which explains the goals and methods of one of the related disciplines in a preferably pedagogical manner. While very good overviews of the astrophysical and cosmological aspects of keV sterile neutrinos already exist~\cite{SterShap,Canetti:2012kh}, the present manuscript comprises an attempt to achieve a similar work for the model building aspects of keV neutrinos. Ideally, the text should serve as a travel guide for astrophysicists and cosmologists who have been puzzled for long by the essences of particle physics model building. At the same time the review is supposed to provide the reader with the argumentations that are typically used by model builders when considering certain aspects of a model as ``good'' or ``bad'', or rather ``advantageous'' or ``disadvantageous''. This is not meant in any way as a criticism of certain models, but rather it is supposed to exemplify which arguments are used to get an opinion about a model. It is important to understand that any model will have certain drawbacks, depending on its purpose and perspective. Furthermore, particle physics graduate students should be able to equally profit from the text which gives a pedagogical overview of the matter without going into formal details whenever they \emph{can} be avoided. Finally, even one or the other more experienced researcher in particle theory might be interested in a concise summary of the topic or in a motivation for why to study keV neutrinos. Hopefully these multiple needs are, at least to some extent, covered by the present review.

The text is structured as follows. After shortly discussing some astrophysical and experimental aspects of keV sterile neutrinos in Sec.~\ref{sec:astro}, we review the experimental status and the construction of models for active neutrinos in Sec.~\ref{sec:neutrino}. We then give in Sec.~\ref{sec:keV_general} a brief discussion of some aspects which are specific for keV sterile neutrinos and do not usually play a role for light neutrinos. The central chapter of the review is Sec.~\ref{sec:keV}, where most of the models currently present in the literature are discussed in various detail, depending on feasibility of and suitability for a pedagogical discussion. We finally conclude in Sec.~\ref{sec:conc}. Some technical details on the neutrino mass matrix are discussed in \ref{sec:seesaw}, and a discussion of the calculation of vacuum alignments is given in Sec.~\ref{sec:vacuum}.

Last but not least, it is worth to note that right-handed neutrinos of course do have many other applications, ranging from the use in the seesaw mechanism to leptogenesis, see Ref.~\refcite{Drewes:2013gca} for a recent and concise review of all these multiple aspects. In particular very light sterile neutrinos with a mass around the eV scale~\cite{Abazajian:2012ys,Palazzo:2013me} have attracted considerable attention in the recent years, fueled by several experimental anomalies in accelerator-based experiments (LSND~\cite{Aguilar:2001ty}, KARMEN~\cite{Armbruster:2002mp}, and MiniBooNE~\cite{AguilarArevalo:2012va}, which could not be ruled by a test measurement performed with ICARUS~\cite{Antonello:2012pq}), by re-evaluations of the neutrino flux from nuclear reactors~\cite{Mueller:2011nm,Huber:2011wv} which led to the so-called \emph{reactor anomaly}~\cite{Mention:2011rk}, and by cosmological observations pointing to a non-standard contribution to the content of very light ($\lesssim$eV) particles in the Universe~\cite{Dunkley:2010ge,Keisler:2011aw,Hinshaw:2012fq,Ade:2013lta}. A very detailed analysis of the resulting parameter values from accelerator and reactor experiments has recently been provided in Ref.~\refcite{Kopp:2013vaa}, and first bounds~\cite{Mirizzi:2013kva} on these scenarios using the new data from Planck~\cite{Ade:2013lta} have been derived, too. The big difference of these settings compared to keV neutrinos is that the data, although not entirely conclusive, suggest relatively large [\emph{i.e.}, $\mathcal{O}(0.1)$] mixings between the ordinary active neutrinos and the eV steriles while, as we will see \emph{e.g.} in Secs.~\ref{sec:astro_X-ray} and~\ref{sec:keV_general_X}, settings with keV neutrinos require very small active-sterile mixings, $\mathcal{O}(10^{-5})$. Apart from this difference, several of the models and mechanisms presented in Sec.~\ref{sec:keV} would in principle also be suitable to motivate or even explain sterile neutrinos with eV scale masses. While we do focus on keV neutrinos in this review, we will at some places in the main text nevertheless point out relations or differences to the case of eV sterile neutrinos.

Before starting the discussion, it is worth to point out that the reader should at least be familiar with the basics of the fields. Some knowledge on the Standard Model (SM) of Elementary Particle Physics and on the aspects of quantum field theory required to understand the SM is absolutely necessary to understand the text. Furthermore, some very basic group theory is used without sufficient explanation, while the more complicated aspects are explained in some detail. Finally, since the text focuses on the particle physics aspects of keV sterile neutrinos, an elementary knowledge of cosmology is not strictly necessary but strongly recommended. To finish with a comment on the notation, throughout this review we will make use of natural units: $\hbar = c = 1$.

\section{\label{sec:astro}Astrophysical and Experimental Aspects}	

While the question about the nature of DM is mainly a particle physics problem, many of the properties of DM are known from astrophysics. Although we cannot review all the astrophysical requirements here, we want to at least mention them and give a flavor of their importance. We also give some information on the experimental aspects of keV neutrinos and suggest references for further reading.

\subsection{\label{sec:astro_WCDM}Warm or cold?!?}	

HDM is excluded as the dominant DM component by cosmological structure formation~\cite{Abazajian:2004zh,dePutter:2012sh}, since the corresponding top-down formation (\emph{i.e.}, large structures form first) leads to contradictions with observations. However, the situation of WDM and CDM is not quite as clear. While historically mostly CDM has been considered, due to the variety of possible candidates in theories with, \emph{e.g.}, supersymmetry or extra dimensions~\cite{Bertone}, at the moment we unfortunately do not seem to have any solid direct evidence for such a DM candidate. Even worse, recent experiments such as XENON~\cite{Aprile:2010um,Aprile:2011hi} cut more and more chunks from the allowed parameter space, and some CDM candidates may at some point start being in trouble.

WDM, on the other hand, has also been investigated for some time by the structure formation community: while pioneering studies appeared already more than one decade ago~\cite{Bode:2000gq,Hansen:2001zv}, further collaborations have formed by now which investigate the topic~\cite{Boyarsky:2008xj,Lovell:2011rd}. One particular focus of the simulations is the structure at relatively small scales~\cite{Boyanovsky:2010pw,Boyanovsky:2010sv,VillaescusaNavarro:2010qy,Destri:2012yn,Destri:2013pt}, since this is the main point distinguishing WDM from CDM. Thus smaller galaxies, and in particular the so-called \emph{dwarf satellite galaxies}, are considered to be the key to distinguish the two types of DM. The decisive differences between WDM and CDM arise only at scales of about $0.1$~Mpc, where WDM typically start washing out structures, while CDM only washes out objects of roughly a tenths of that size or below. Since $0.1$~Mpc is just about the size of a typical dwarf galaxy, such objects are in the observational focus, since structure formation on larger scales cannot distinguish between WDM and CDM. Indeed, WDM simulations tend to yield a smaller number of dwarf satellites~\cite{Lovell:2011rd}, which seems to be in coincidence with the fact that we have observed considerably less dwarf satellites in space than predicted by CDM simulation. This is often called the \emph{missing satellite problem}~\cite{Kauffmann:1993gv,Klypin:1999uc,Moore:2005jj} in the literature. However, there could also be astrophysical reasons for this, \emph{e.g.}, it may be that supernova explosions occur in the formation of dwarf satellites at a high enough rate to simply blow away much of the visible material~\cite{Cole:2000ex,Benson:2001au}, such that the dwarf satellite galaxies appear much fainter than larger galaxies. In any case, there is agreement that the core profiles of dwarf satellites are the key to distinguish between CDM and WDM, and hence the dwarf satellites which have already been observed should be investigated in more detail~(see, \emph{e.g.}, Refs.~\refcite{Lovell:2011rd,Destri:2012yn,BoylanKolchin:2011dk,deVega:2009ky,deVega:2010yk}). Note also that, on the more observational side, model-independent surveys~\cite{Papastergis:2011xe} and data analyses~\cite{deVega:2009ku} seem to point towards a keV-mass DM particle.

To give a flavor of the lower bounds one can obtain on the keV neutrino mass, the most model-independent limit one can derive originates from the Tremaine-Gunn bound~\cite{Tremaine:1979we}. The principle idea is that the averaged phase base density of a fermionic DM candidates cannot be smaller that of a degenerate Fermi gas. When applied to the case of keV sterile neutrinos, the typical lower bound one obtains on the DM mass is around $M \gtrsim 1$~keV~\cite{Boyarsky:2008ju,Gorbunov:2008ka}. Stronger limits can be obtained when the production mechanism is taken into account, which results in different lower bounds ranging roughly from $1.6$ to $10$~keV, see \emph{e.g.} Refs.~\refcite{Abazajian:2012ys,Bezrukov:2009th,Boyarsky:2008mt,Kusenko:2006rh}.

Let us end this section by stressing that, although often used more or less equivalently in the literature, \emph{a keV mass of the DM particle does not necessarily mean that it is WDM}. The decisive point is the velocity profile rather than the mere mass (\emph{e.g.}, axions have tiny masses but are typically CDM~\cite{Dine:1982ah,Preskill:1982cy,Abbott:1982af,Stecker:1982ws}). Nevertheless, it is true in many cases that keV mass particles, and in particular keV sterile neutrinos, indeed turn out to be WDM. But the exact velocity profile depends on the production mechanism under consideration, and sometimes on subsequent effects modifying the expansion rate and hence the cooling of (parts of) the Universe.

\subsection{\label{sec:astro_prod}Production mechanisms}	

A necessary condition for any DM candidate particle is to participate in a mechanism which can produce enough of this particle to make up all or at least a significant part of the amount of the DM in the Universe. This is non-trivial, since the requirement of having not too much and not too little DM can often only be fulfilled in a very narrow parameter range. The most generic DM production mechanism is the so-called \emph{thermal freeze-out}, which is discussed in many textbooks (see, \emph{e.g.}, Ref.~\refcite{KolbTurner}). However, for this production mechanism to be effective a particle needs an interaction strength comparable to that of weak interactions, which is generically difficult for a \emph{sterile} neutrino. On the other hand, if the SM gauge group is extended, thermal freeze-out might be revived, but this will lead to other difficulties. An alternative thermal production mechanism is \emph{freeze-in} production~\cite{Hall:2009bx}, which is well-suited for extremely weakly interacting particles. Finally, non-thermal production is also a valid possibility.

We will now briefly discuss the most generic mechanisms used to produce keV sterile neutrino DM.

\subsubsection{\label{sec:astro_prod_DW}The Dodelson-Widrow mechanism}	

The easiest and most natural possibility is the so-called \emph{Dodelson-Widrow} (DW) mechanism~\cite{Dodelson:1993je}. The idea, based on Refs.~\refcite{Dolgov:1980cq,Manohar:1986gj,Barbieri:1989ti,Barbieri:1990vx,Enqvist:1990dq,Enqvist:1990ek,Enqvist:1991qj,Cline:1991zb}, is that although the interaction strength of keV sterile neutrinos is too small for them to be in thermal equilibrium with the plasma, they could nevertheless be produced by the plasma. This is because, as we will see in Sec.~\ref{sec:neutrino_modeling_mass}, sterile neutrinos are typically not absolutely sterile but instead have tiny admixtures to active neutrinos. Although these admixtures are not large enough to keep the keV sterile neutrinos in thermal equilibrium with the plasma, from time to time they are produced in the plasma by processes which mainly generate active neutrinos, but which can also produce steriles by their admixtures. Thus a certain amount of keV sterile neutrinos gradually builds up, and the corresponding interaction strengths and densities are small enough so that the keV neutrinos do not annihilate again. This is very similar to what happens in freeze-in~\cite{Hall:2009bx}, where more general DM particles (or particles which later on decay into DM particles) have so feeble interactions with the SM that they never enter thermal equilibrium, but are only produced from time to time in the early Universe, thereby gradually building up a significant abundance. While the details of the DW mechanism can be quite tricky (\emph{e.g.} due to hadronic contributions to the sterile neutrino production~\cite{Asaka:2006rw}), the general idea is nevertheless very easy to understand.

Despite its simplicity, this mechanism is by now known to be excluded by the strong X-ray bound (cf.\ Secs.~\ref{sec:astro_X-ray} and~\ref{sec:keV_general_X}) in the minimal version, \emph{i.e.}, if no primordial lepton asymmetry, which must arise from further new physics, is present~\cite{Canetti:2012vf,Canetti:2012kh}. Hence, the DW mechanism alone is not enough to produce a sufficient amount of DM. On the other hand, the DW contribution to the DM relic density is \emph{unavoidable} as long as the active-sterile mixing is not completely switched off. Thus, many other production mechanisms should actually be accompanied by an additional amount of keV sterile neutrino production by the DW mechanism.

\subsubsection{\label{sec:astro_prod_SF}The Shi-Fuller mechanism}	

An alternative possibility is the so-called \emph{Shi-Fuller} (SF) mechanism~\cite{Shi:1998km}, which can arise in addition to the DW mechanism. The basic idea is that a significant primordial lepton-antilepton asymmetry in the early Universe can act onto active-sterile neutrino transitions in a way similar to a background made of ordinary matter acts onto ordinary neutrino oscillations. In such an environment it is well known that \emph{resonant} flavor transitions can occur in the presence of a certain matter density, which is known as the \emph{Mikheev-Smirnov-Wolfenstein} (MSW) effect~\cite{Wolfenstein:1977ue,Wolfenstein:1979ni,Mikheev:1986gs,Mikheev:1986wj,Mikheev:1986if}. Such transitions can, in the early Universe, produce a large amount of keV sterile neutrinos at a very specific temperature, in case that a suitable lepton asymmetry is present~\cite{Laine:2008pg}.

While this mechanism cannot stand alone, as it will be necessarily accompanied by a DW contribution, its great benefit lies in the fact that the neutrino spectrum is changed: the non-thermal component produced by the resonant transition leads to an overall cooler DM spectrum than for pure DW production~\cite{Shi:1998km}. This makes the keV sterile neutrinos in the Universe more similar to ordinary CDM, which helps to avoid problems associated with a too warm spectrum, \emph{i.e.}, the washout of structures larger than dwarf satellites, which is not observed.

\subsubsection{\label{sec:astro_prod_Scal}Scalar decays}	

Another option for producing keV sterile neutrinos non-thermally is by the decay of scalar singlets~\cite{SterShap,Kusenko:2006rh,Petraki:2007gq} which could, \emph{e.g.}, freeze-out by Higgs portal interactions and decay at a temperature similar to their mass (or, rather, at their freeze-out temperature which is in turn similar to their mass). One case that has been particularly studied is the one of this scalar particle being the inflaton~\cite{Shaposhnikov:2006xi,Bezrukov:2009yw,Boyanovsky:2008nc}. This possibility is particularly attractive from the point of view of solving the additional problem of inflation. On the other hand, a general singlet scalar field is hardly constrained from the particle physics side in case that it only has feeble interactions with the SM Higgs sector. Moreover, the mechanism of producing keV sterile neutrinos from scalar decays also leads to a somewhat cooler DM spectrum if the scalar is a frozen-out relic~\cite{Kusenko:2006rh}, which can help to evade certain bounds.

In addition it is worth to stress the versatility of scenarios with scalar production: not only do they have an immediate connection to the scalar sector of the SM and theories beyond, potentially leading to interesting collider signatures~\cite{Shoemaker:2010fg}, a non-zero vacuum expectation value of the scalar field can also immediately lead to Majorana neutrino masses~\cite{Bezrukov:2009yw} and to other lepton number violation phenomenology~\cite{Helo:2010cw}. From the particle physics point of view, this production mechanism might be one of the most interesting ones.

\subsubsection{\label{sec:astro_prod_Therm}Thermal overproduction with subsequent entropy dilution}	

Finally, it could also be that the keV sterile neutrinos are only sterile with respect to SM interactions, but are non-trivially charged under the actual (larger) gauge group which breaks down to the SM at a low enough temperature. The prime example of such a situation is Left-Right symmetry~\cite{Beg:1977ti,Pati:1977jh,Mohapatra:1974hk,Mohapatra:1974gc,Senjanovic:1975rk}, $SU(3)_C \times SU(2)_L \times SU(2)_R \times U(1)_{B-L}$, where all right-handed fermions including neutrinos are singlets under $SU(2)_L$, but charged non-trivially under $SU(2)_R$. Hence, at temperatures above the breaking $SU(3)_C \times SU(2)_L \times SU(2)_R \times U(1)_{B-L} \to SU(3)_C \times SU(2)_L \times U(1)_Y$, the keV ``sterile'' neutrino can actually be in thermal equilibrium with the thermal plasma in the early Universe, and at some point it can freeze-out like nearly all the other particles.

This idea has been applied to the case of keV sterile neutrinos in Refs.~\refcite{Bezrukov:2009th,Nemevsek:2012cd,Asaka:2006ek,Asaka:2006nq}. However, it turns out that, just as for the freeze-out of SM neutrinos if they had masses of a few keV, the keV steriles would freeze-out at a temperature much larger than their mass, and hence their abundance would not be suppressed. Accordingly, they would overclose the Universe by comprising far too much DM. While this is seemingly a big problem, it can be compensated to some extent by a subsequent dilution by entropy production~\cite{Scherrer:1984fd} from the decay of a frozen-out non-relativistic species that temporarily dominates the energy density of the Universe. Typically, the second and third generation sterile neutrinos $N_{2,3}$ do this job, as their decays are generically much faster than those of the $N_1$ due to their larger mass, cf.\ Sec.~\ref{sec:astro_X-ray}. Indeed, such a mechanism can lead to consistency with most bounds~\cite{Bezrukov:2009th,Nemevsek:2012cd}, but one has to mention that such settings are nevertheless under a certain amount of tension when all the data is put together. One particularly hard bound is the lower value of the ``reheating'' temperature before the event of Big Bang Nucleosynthesis~\cite{Kawasaki:2000en,Hannestad:2004px}, which can drastically reduce the allowed parameter space even for relatively general settings~\cite{King:2012wg}.

\subsection{\label{sec:astro_X-ray}The X-ray bound}	

Probably the most important bound in practice on models with keV sterile neutrinos is their radiative decay, $N_1 \to \nu \gamma$~\cite{Pal:1981rm,Barger:1995ty}. This decay proceeds via a 1-loop diagram, which looks very similar to typical 1-loop diagrams for the lepton flavor violating decay $\mu \to e \gamma$~\cite{Cheng:1977nv,Petcov:1976ff,Marciano:1977wx,Lee:1977tib}. As generic for such a decay, if the final state lepton is practically massless, the decay rate is proportinal to the fifth power of the initial state mass~\cite{Lavoura:2003xp}. Accordingly, a decay $N_i \to \nu \gamma$ is proportional to $M_i^5$, where $M_i$ is the mass of $N_i$. If out of three sterile neutrinos the first generation fermion $N_1$ is the keV neutrino, while $N_{2,3}$ are considerably heavier, this implies that the two heavy neutrinos decay extremely quickly, whereas the keV neutrino is practically stable even on cosmological scales. Otherwise it could not play the role of DM.

Nevertheless, if enough of the keV neutrinos are present in the Universe, some of them will decay and produce a certain amount of photons. Since the light neutrino mass is practically zero compared to the mass of the keV neutrino, the photon resulting from the decay is practically monoenergetic. Accordingly, the smoking gun signature to search for is a monoenergetic X-ray line which can be searched for by satellite experiments, see \emph{e.g.} Refs.~\refcite{Dolgov:2000ew,Abazajian:2001vt,Boyarsky:2005us,Boyarsky:2006fg,RiemerSorensen:2006fh,Abazajian:2006yn,Watson:2006qb,Boyarsky:2006ag,Abazajian:2006jc,Boyarsky:2007ay,Boyarsky:2007ge,Loewenstein:2008yi,Watson:2011dw,Loewenstein:2012px}.

Turning the argumentation round, the non-observation of this X-ray line produces strong bounds on the active-sterile mixing and hence on the full neutrino mass matrix. However, we will postpone the discussion of this constraint to the dedicated Sec.~\ref{sec:keV_general_X}. The reason is that we will first need to introduce some more aspects of neutrino physics before we can appreciate the effect of this bound on a concrete model, which will be done in Sec.~\ref{sec:neutrino}.

\subsection{\label{sec:astro_exp}Other experimental aspects keV sterile neutrinos}	

Apart from astrophysical signals, one could ask the question whether there are alternative experiments, ideally in a laboratory, which could have the potential to detect signals related to keV sterile neutrinos. In general, one would search for reactions in which typically an ordinary neutrino is produced, but which could at least in principle also produce sterile neutrinos by their admixtures to the active sector. A generic example for such a reaction would be a nuclear $\beta$ decay, \emph{e.g.} $(Z, A) \to (Z-1, A) + e^+ + \nu_e$, where the electron neutrino $\nu_e$ is a superposition of all kinematically accessible mass eigenstates. In case the $Q$-value of the transition (\emph{i.e.}, the energy release) is larger than the mass $M_1$ of the keV neutrino, a certain fraction of such decays will produce the keV mass eigenstate. However, this fraction is proportional to the (tiny) active-sterile mixing angle squared, which is the biggest challenge for the experiments. Nevertheless more detailed studies of such settings~\cite{Liao:2010yx,deVega:2011xh} have revealed that at least in the future a possible detection is thinkable, where in particular the improvements in the technology of ion traps may become interesting~\cite{Bezrukov:2006cy}. Alternative reactions which could be interesting are electron captures, where a keV neutrino could either be (resonantly) absorbed or simply produced in the final state~\cite{Li:2011mw,Li:2010vy}. Interestingly, experimental studies are done for the cases of tritium beta decay (KATRIN~\cite{Osipowicz:2001sq} and Project~8~\cite{Monreal:2009za} collaborations), of rhenium beta decay (MARE collaboration~\cite{Nucciotti:2010tx}), and of electron capture on Holmium (MARE and ECHo collaborations~\cite{ECHo}). A concise summary with references to several useful talks can be found in Ref.~\refcite{deVega:2013ysa}.

Other experimental searches are less promising: for example, one could think of detecting matter effects related to keV sterile neutrinos if neutrinos propagate \emph{e.g.} through the Earth over macroscopic distances. However, these matter interactions can be computed and their effects turn out to be tiny~\cite{Ando:2010ye}. Furthermore, one could think of contributions to neutrino-less double beta decay (to be discussed in more detail in Sec.~\ref{sec:neutrino_mix_masses}), where a nucleus decays into another one via the emission of only two electrons, $(Z,A) \to (Z+2,A) + e^- + e^-$. However, also there the corresponding effects are hardly visible within the experimental accuracy, again due to the strong observational bound on active-sterile mixing~\cite{Bezrukov:2005mx,Asaka:2011pb,keV0nbb}. Finally, one could also think of detecting the heavier brothers $N_{2,3}$ of the keV neutrino in $pp$-collisions and meson decays~\cite{Gorbunov:2007ak}, which is at least not hopeless.

Apart from these searches, however, there is little hope to experimentally detect keV sterile neutrinos, which is one of the reasons why it is so beneficial to have more concrete models, cf.\ Sec.~\ref{sec:keV}, which entangle keV sterile neutrino DM with the light neutrino sector. This can lead to very concrete predictions of the light neutrino mass spectrum and/or mixing pattern. By this the models to be discussed here offer a considerably increased potential for testability by ordinary neutrino oscillation experiments, cf.\ Sec.~\ref{sec:neutrino_mix_osc}. This is why the models presented are very interesting from the experimental point of view, too.

\subsection{\label{sec:astro_other}Other aspects of keV sterile neutrinos}	

There are many further interesting aspects of keV sterile neutrinos, which are too numerous to all be mentioned here. However, to give at least an idea of the many existing interrelations, we mention a few.

For example, there are astrophysical implications different from structure formation or DM production, where keV sterile neutrinos could be interesting: for example, they could help to explain the observed motions of pulsars (``pulsar kicks'')~\cite{Kusenko:1997sp,Fuller:2003gy,Kusenko:2006rh,Kusenko:2008gh}, they could influence star formation~\cite{Stasielak:2006br}, or their decays could lead to the production of X-rays causing a partial reionization of the Universe~\cite{Biermann:2006bu}. Note that at least the latter two processes rely on a significant amount of keV sterile neutrinos being present in the Universe, such that the amount X-rays originating from their decays is at least non-negligible. However, this does not mean that they have to make up all of the DM observed, but only a significant fraction

In general, the topic of keV sterile neutrino Dark Matter exhibits considerable interrelations with other sectors in particle and/or astrophysics, which is why it makes sense to study it in some more detail.

\section{\label{sec:neutrino}Neutrino Model Building}	

In this section, we will review the model building aspects of \emph{light}, \emph{i.e.}, eV-scale \emph{active} neutrinos. Although the main focus of this review is on keV-scale sterile neutrinos, reproducing the correct low energy neutrino data is a necessary requirement for any model to be built. Furthermore, the basic principles of model building can be nicely illustrated for light neutrinos.

\subsection{\label{sec:neutrino_mix}Neutrino masses and mixings}	

After having been theoretically predicted by Pauli~\cite{Neutrino_Wiki} in 1930 to rescue the conservation laws in nuclear beta decay, the neutrino was detected for the first time in the famous Cowan-Reines experiment~\cite{Cowan:1992xc}. At that time, the neutrino was assumed to be essentially massless. On the other hand, a small mass would not have had a big effect in any experiment, and could hence not be excluded. However, as proposed by Pontecorvo first for meson-oscillations~\cite{Pontecorvo:1957cp} and later for neutrino-antineutrino systems~\cite{Pontecorvo:1957qd}, even a tiny neutrino mass can still have an observable effect if the neutrino mass basis is not equal to the flavor basis. The resulting phenomenon is known as \emph{neutrino oscillations}, and it essentially means that a neutrino produced in one definite flavor, say $\nu_e$, can change its identity while traveling from one place to another, to be finally detected as, \emph{e.g.}, $\nu_\mu$: neutrinos are \emph{mixed}. An excellent although slightly outdated review on this subject can be found in Ref.~\refcite{Barger:2003qi}. Note that, since neutrino oscillation probabilities only depend on \emph{mass square differences} $\Delta m_{ij}^2 \equiv m_i^2 - m_j^2$, neutrino oscillations cannot give any information about the actual scale of neutrino masses. However, the observation of at least one non-zero $\Delta m_{ij}^2$, \emph{i.e.} of neutrino oscillations, immediately implies that not all neutrinos can be massless.

\subsubsection{\label{sec:neutrino_mix_osc}Neutrino mixing and neutrino oscillations}	

What neutrino mixing means in practice is that, \emph{e.g.}, an electron neutrino $\nu_e$ (\emph{i.e.}, the neutrino which couples to the electron via $W$-bosons) does not have a definite ``electron neutrino mass'', but it is rather a superposition of (usually three) mass eigenstates $\nu_i$, each of which has a well-defined and definite mass $m_i$. This implies that a neutrino produced with a definite flavor will change its identity when propagating, since the different masses of the different mass eigenstate components will lead to different quantum mechanical phases. In the two-flavor approximation, \emph{i.e.}, when having two weak interaction eigenstates $(\nu_e, \nu_\mu)$ and two mass eigenstates $(\nu_1, \nu_2)$, the corresponding oscillation formula for a neutrino produced as a pure $\nu_e$ is given by~\cite{Barger:2003qi}:
\begin{equation}
 P(\nu_e \to \nu_\mu, L) = \sin^2 (2 \theta) \sin^2 \left( \frac{L\ \Delta m^2}{4 E} \right),
 \label{eq:2-flav-osc}
\end{equation}
where $E$ is the neutrino energy, $L$ is the distance between production and detection of the neutrino, and $\Delta m^2 = m_2^2 - m_1^2$. Indeed, if $m_1 = m_2$ (and hence in particular if $m_{1,2}=0$), the transition probability from Eq.~\eqref{eq:2-flav-osc} is zero. Furthermore, if the \emph{mixing angle} $\theta$ which describes the mismatch between the two bases is zero, then we also have $P\equiv 0$. In that case, the flavor would completely determine the mass and any quantum mechanical uncertainty would vanish, thereby killing the interference-like oscillation. More information on the correct treatment of neutrino oscillations in the context of quantum mechanics and quantum field theory can, e.g., be found in Refs.~\refcite{Giunti:1991ca,Giunti:1993se,Giunti:1997wq,Beuthe:2001rc,Kienert:2008nz,Akhmedov:2009rb,Merle:2009re,Merle:2010qq,Akhmedov:2010ms,Akhmedov:2010ua,Kayser:2010bj,Akhmedov:2012uu}.

Formally, 3-flavor neutrino mixing can be described by a rotation of the basis, and the corresponding mixing matrix~\cite{Maki:1962mu} is known as the Pontecorvo-Maki-Nakagawa-Sakata (PMNS) matrix $U\equiv U_{\rm PMNS}$:
\begin{equation}
 \begin{pmatrix}
 \nu_e\\
 \nu_\mu\\
 \nu_\tau
 \end{pmatrix} = U \begin{pmatrix}
 \nu_1\\
 \nu_2\\
 \nu_3
 \end{pmatrix},
 \label{eq:PMNS_def}
\end{equation}
or $\nu_\alpha = U_{\alpha i} \nu_i$ for short, in a basis where the charged lepton mass matrix is diagonal. Explicitly, the PMNS matrix can be written in terms of three mixing angles $(\theta_{12}, \theta_{13}, \theta_{23})$, one \emph{Dirac} $CP$-violating phase $\delta$, and two \emph{Majorana} $CP$-violating phases $(\alpha, \beta)$, the latter two having trivial values $0$ or $\pi$ in the case of Dirac neutrinos:
\begin{equation}
  U = 
  \begin{pmatrix} c_{12} c_{13} & s_{12} c_{13} & s_{13} \, e^{-i \delta}  \\
  -s_{12} c_{23} - c_{12} s_{23} s_{13} e^{i \delta} & c_{12} c_{23} - s_{12} s_{23} s_{13} e^{i \delta} & s_{23} c_{13}  \\ 
  s_{12} s_{23} - c_{12} c_{23} s_{13} e^{i \delta} & - c_{12} s_{23} - s_{12} c_{23} s_{13} e^{i \delta} & c_{23} c_{13}
  \end{pmatrix}
 {\rm diag}(1, e^{i \alpha}, e^{i (\beta + \delta)}),
 \label{eq:PMNS_expl}
\end{equation}
where $c_{ij}\equiv \cos \theta_{ij}$ and $s_{ij}\equiv \sin \theta_{ij}$. If these mixing angles were small, we could approximate the flavor by the mass eigenstates, similar to the quark sector: there, a weak interaction eigenstate down-quark $d'$ is not very different from the corresponding strong interaction (and mass) eigenstate $d$, a fact that is reflected in the quark mixing angles being relatively small~\cite{Beringer:1900zz}. However, experimentally -- and indeed surprisingly at first sight -- the neutrino mixing angles have been measured to partially be very large~\cite{Tortola:2012te}: for a normal mass ordering, $m_1 < m_2 < m_3$, the mixing angles are given by
\begin{eqnarray}
 \sin^2\theta_{12} &=& 0.320\ \ \text{(0.303--0.336, 0.27--0.37)},\nonumber\\
 \sin^2\theta_{13} &=& 0.0246\ \ \text{(0.0218--0.0275, 0.017--0.033)},\nonumber\\
 \sin^2\theta_{23} &=& 0.613\ \ \text{(0.400--0.461 \& 0.573--0.635, 0.36--0.68)},
 \label{eq:angles_NH}
\end{eqnarray}
while for inverted ordering, $m_3 < m_1 < m_2$, we have
\begin{eqnarray}
 \sin^2\theta_{12} &=& 0.320\ \ \text{(0.303--0.336, 0.27--0.37)},\nonumber\\
 \sin^2\theta_{13} &=& 0.0250\ \ \text{(0.0223--0.0276, 0.017--0.033)},\nonumber\\
 \sin^2\theta_{23} &=& 0.600\ \ \text{(0.569--0.626, 0.37--0.67)},
 \label{eq:angles_IH}
\end{eqnarray}
where we have quoted the best-fit value (1$\sigma$ region, $3\sigma$ region). Note that the global fit data quoted in Eqs.~\eqref{eq:angles_NH} and~\eqref{eq:angles_IH} includes the very recent (2012) measurements of the previously unknown mixing angle $\theta_{13}$ by the Daya Bay~\cite{An:2012eh}, RENO~\cite{Ahn:2012nd}, and Double Chooz~\cite{Abe:2011fz,Abe:2012tg} collaborations. Other mass orderings different from normal and inverted ordering are not possible, since the MSW effect~\cite{Wolfenstein:1977ue,Wolfenstein:1979ni,Mikheev:1986gs,Mikheev:1986wj,Mikheev:1986if} for solar neutrinos~\cite{Cleveland:1998nv,Abdurashitov:2009tn,Hosaka:2005um,Cravens:2008aa,Abe:2010hy,Aharmim:2008kc,Aharmim:2009gd}, supported by reactor neutrino data~\cite{Kaether:2010ag,Arpesella:2008mt}, forces the mass square difference $\Delta m_{21}^2 = m_2^2 - m_1^2$ to be positive, \emph{i.e.}, $\Delta m_{21}^2 = 7.62\ \ \text{(7.43--7.81, 7.12--8.20)}\cdot 10^{-5}~{\rm eV}^2$~\cite{Tortola:2012te}.

Finally, measurements of atmospheric neutrinos~\cite{Wendell:2010md} together with experiments detecting neutrinos that are artificially produced in accelerators~\cite{Abe:2011sj,Adamson:2011qu,Adamson:2011ig,Adamson:2011fa,Adamson:2012rm} can be used to obtain $|\Delta m_{31}^2| = 2.55\ \ \text{(2.46--2.61, 2.31--2.74)}\cdot 10^{-3}~{\rm eV}^2$ (normal ordering) and $|\Delta m_{31}^2| = 2.43\ \ \text{(2.37--2.50, 2.21--2.64)}\cdot 10^{-3}~{\rm eV}^2$ (inverted ordering)~\cite{Tortola:2012te}. At the moment, we do not know the sign of $\Delta m_{31}^2$, \emph{i.e.}, if $m_1 < m_3$ or $m_1 > m_3$, which means that the mass ordering has not yet been determined.

Note that we could, of course, have used any of the other currently available global fits to neutrino mixing data~\cite{Fogli:2012ua,GonzalezGarcia:2012sz}, but for definiteness we had to decide for one and have taken Ref.~\refcite{Tortola:2012te}.

\subsubsection{\label{sec:neutrino_mix_masses}Limits on neutrino masses}	

Even though we have considerable knowledge about the two mass square differences $\Delta m_{21}^2$ and $|\Delta m_{31}^2|$, this is not enough to draw conclusions about the absolute neutrino mass scale. We do, however, have experimental bounds on certain observables. At the moment, we know more or less three realistic probes of the absolute neutrino mass scale, which are its kinematical determination in laboratory experiments on single beta decay, the measurement of the effective mass in neutrino-less double beta decay, and the cosmological determination of the sum of all neutrino masses. Although several present and future neutrino detectors might also be interesting in the context of supernova neutrinos (see, \emph{e.g.}, Refs.~\refcite{Abe:2011ts,Wurm:2011zn}), they necessarily would need a supernova to happen in our Galaxy in the first place, which is not guaranteed. Alternative proposals to get information on the absolute neutrino mass (\emph{e.g.}, Refs.~\refcite{Lindroos:2009mx,Kopp:2009yp,Yoshimura:2006nd,Fukumi:2012rn,Dinh:2012qb}) seem even less realistic.

In single beta decays, one tries to use nuclear reactions such as $(Z,A) \to (Z+1,A) + e^- + \bar{\nu}_e$ to determine the neutrino mass, which is the simplest and most model-independent way to get information on the neutrino mass. The trick is to use only the small fraction of all decays in which the electron carries away nearly all the kinetic energy, \emph{i.e.}, the emitted neutrino is essentially at rest. Since current and near future experiments are not sensitive enough to resolve the three mass eigenvalues separately, the corresponding observable arising from an \emph{incoherent sum} over all possible final states, and it is called the \emph{effective electron neutrino mass} $m_\beta$, defined by~\cite{Aalseth:2004hb}
\begin{equation}
 m_\beta^2 \equiv m_1^2 c_{12}^2 c_{13}^2 + m_2^2 s_{12}^2 c_{13}^2 + m_3^2 s_{13}^2.
 \label{eq:m_beta}
\end{equation}
Note that this mass is really just an effective observable and in no sense related to something like a mass of the $\bar{\nu}_e$, since that cannot even exist as argued above. Although such single beta decay experiments require an enormous amount of precision, there are impressive limits by past experiments using tritium: the MAINZ experiment measured $m_\beta^2 = (-1.2 \pm 2.2_{\rm stat.} \pm 2.1_{\rm sys.})~{\rm eV}^2$~\cite{Kraus:2004zw}, and TROITSK measured $m_\beta^2 = (-1.0 \pm 3.0_{\rm stat.} \pm 2.1_{\rm sys.})~{\rm eV}^2$~\cite{Lobashev:1999tp}, both amounting to an upper limit of about $2$~eV on the absolute neutrino mass scale. The next generation experiment KATRIN~\cite{Osipowicz:2001sq} will have a sensitivity on $m_\beta$ that is better by about one order of magnitude. A new technique for kinematical experiments, measuring the coherent cyclotron radiation emitted by  fast electrons in a magnetic field, is investigated by Project~8~\cite{Monreal:2009za}. Furthermore, an alternative approach will be taken by the MARE experiment~\cite{Nucciotti:2010tx}, using rhenium, which has a lower $Q$-value and hence an improved sensitivity on the neutrino mass, but one pays the price of the transition of interest being a first forbidden decay.

One of the biggest questions in neutrino physics is about their actual nature -- are the neutrinos \emph{Dirac} particles, \emph{i.e.} distinct from their antiparticles, $\nu^c \neq e^{i\phi} \nu$, or are they \emph{Majorana} particles, \emph{i.e.} identical to their charge conjugates up to a phase, $\nu^c = e^{i\phi} \nu$? If the latter was the case, then an antineutrino $\bar{\nu}_e$ emitted in a $\beta^-$ decay $(Z,A) \to (Z+1,A) + e^- + \bar{\nu}_e$ could be absorbed as neutrino in an inverse $\beta$ decay $\nu_e + (Z+1,A) \to (Z+2,A) + e^-$. If these two transitions happen in one single nucleus, with the neutrino being purely virtual, the resulting net reaction is called \emph{neutrino-less double beta decay} ($0\nu\beta\beta$), $(Z,A) \to (Z+2,A) + e^- + e^-$. There are many more interesting connections of $0\nu\beta\beta$ to particle physics, as can be seen from the very detailed and informative review by Rodejohann, Ref.~\refcite{Rodejohann:2011mu}. A review that is dedicated a bit more to the nuclear physics side can be found in Ref.~\refcite{Vergados:2012xy}. Recent overviews of the on-going and future experimental activities are provided by Refs.~\refcite{Barabash:2011fg,Sarazin:2012ct}.

It is not a priori clear if light neutrino exchange is indeed the dominant contribution to $0\nu\beta\beta$ (see, \emph{e.g.}, Refs.~\refcite{Hirsch:1995vr,Hirsch:1996qw,Pas:2000vn,Bhattacharyya:2002vf,Bhattacharyya:2004kb,Deppisch:2006hb,Simkovic:2010ka,Faessler:2011qw,Bergstrom:2011dt,Deppisch:2012nb} for discussions of situations where this is not the case). However, if it does dominate, then the decay rate is proportional to the square of a quantity called the \emph{effective mass} $|m_{ee}|$ given by
\begin{equation}
 |m_{ee}| \equiv |m_1 c_{12}^2 c_{13}^2 + m_2 s_{12}^2 c_{13}^2 e^{i\alpha} + m_3 s_{13}^2 e^{i\beta}|.
 \label{eq:m_ee}
\end{equation}
Note that $\alpha$ and $\beta$ are the Majorana phases from Eq.~\eqref{eq:PMNS_expl}, which did \emph{not} appear in the observable $m_\beta^2$. This is a reflection of lepton number being violated in the reaction $(Z,A) \to (Z+2,A) + e^- + e^-$. Also, contrary to some statements in the literature, the effective mass $|m_{ee}|$ does \emph{not} depend on the Dirac phase $\delta$, since this phase can be absorbed in a redefinition of the neutrino mass eigenstate $\nu_3$, cf.\ Refs.~\refcite{Lindner:2005kr,Merle:2006du}, which can be seen easily by adopting the more suitable symmetric parametrization~\cite{Rodejohann:2011vc} of the PMNS matrix. In  contrast to $m_\beta^2$, $|m_{ee}|$ cannot be split into different parts even with an infinitely accurate experimental setup, since the contributions of the different neutrino mass eigenstates are purely virtual, and hence the total amplitude is given by a \emph{coherent sum} over partial amplitudes.

In order to be experimentally observable, one needs a nucleus in which single beta decay is kinematically forbidden, as otherwise this process would completely dominate the decay. Even then the process of double beta decay with the emission of two neutrinos, $(Z,A) \to (Z+2,A) + e^- + e^- + \bar{\nu}_e + \bar{\nu}_e$, cannot be forbidden, and one has to search for the smoking gun signature of the sum of the two electron energies being equal to the $Q$-value of the decay. Among the isotopes investigated so far and in the future are $^{76}{\rm Ge}$ (GERDA~\cite{Abt:2004yk}, Heidelberg-Moscow~\cite{KlapdorKleingrothaus:2000sn}, IGEX~\cite{Aalseth:2002rf}, Majorana~\cite{Aalseth:2004yt}), $^{130}{\rm Te}$ (CUORE~\cite{Ardito:2005ar}, CUORICINO~\cite{Andreotti:2010vj}), or $^{136}{\rm Xe}$ (EXO~\cite{Auger:2012ar}, KamLAND-Zen~\cite{Gando:2012zm}). Alternative decay modes are mainly investigated in COBRA~\cite{Bloxham:2007aa}. Up to now, $0\nu\beta\beta$ has not been observed\footnote{The claim made in Ref.~\refcite{KlapdorKleingrothaus:2001ke} is under strong pressure, if not even excluded, by Refs.~\refcite{Auger:2012ar,Gando:2012zm}.}, and typical lower limits on the half-lives are of the order of $10^{23}$ to $10^{25}$~years, which translates, modulo nuclear physics complications, into upper limits on $|m_{ee}|$ of a few tenths of an eV, cf.\ Ref.~\refcite{Rodejohann:2011mu}.

Finally, note that a positive signal of $0\nu\beta\beta$ would unambiguously show that lepton number is violated and that neutrinos are hence Majorana particles~\cite{Schechter:1981bd}. However, the corresponding mass contribution generated by the resulting \emph{Butterfly diagram} would be at most about $10^{-24}$~eV or in some cases even vanishing at 4-loop level~\cite{Duerr:2011zd}, and hence by far too tiny to explain the minimum neutrino mass scale enforced by the measured value of $|\Delta m_{31}^2|$.

A third way to get information on the absolute neutrino mass scale is by cosmological observations, where neutrinos also play an important role~\cite{Dolgov:2002wy}. Using basic cosmological arguments, one can determine the ratio of the neutrino temperature in the Universe after $e^+ e^-$ annihilations to the ``reheated'' photon temperature. This allows to relate the sum $\Sigma$ of all light neutrino masses to their relic density $\Omega_\nu h^2$:
\begin{equation}
 \Sigma = 94~{\rm eV}\cdot \Omega_\nu h^2.
 \label{eq:Sigma_Omega}
\end{equation}
if one now uses the observations of the \emph{cosmic microwave background} (CMB) to determine the total matter density $\Omega_{\rm mat}$, as well as the baryon density $\Omega_b$ and the dark matter relic density $\Omega_{\rm DM}$\footnote{To be precise, the standard cosmological analysis actually uses the $\Lambda$CDM model, \emph{i.e.}, the cosmological standard model including Dark Energy and \emph{Cold} Dark Matter.}, one can determine the neutrino relic density as $\Omega_\nu = \Omega_{\rm mat} - \Omega_b - \Omega_{\rm DM}$~\cite{Komatsu:2008hk}. The newest analysis of the Wilkinson Microwave Anisotropy Probe (WMAP) 9-year data then yields a limit of $\Sigma < 1.3$~eV\ @\ 95\% C.L.~\cite{Hinshaw:2012fq}, which can be improved to $0.44$~eV when including data from other CMB-measurements, from baryon acoustic oscillations, and from supernova redshift surveys. The even newer results from the Planck satellite, combined with the polarization measurements from WMAP, the data from terrestrial telescopes, and from baryon acoustic oscillation can even push this number down to $\Sigma < 0.230$~eV\ @\ 95\% C.L.~\cite{Ade:2013lta} However, it should be mentioned that all these limits depend strongly on which data set and which type of statistics are used. Note that there has been a recent hint for a non-zero value $\Sigma = (0.32 \pm 0.11)$~eV by the South Pole Telescope (SPT) collaboration~\cite{Hou:2012xq}. However, the similar Atacama Cosmology Telescope (ACT)~\cite{Sievers:2013wk} currently does not seem to confirm this observation, and also the combined Planck results (which include the SPT data!) seem to indicate that this value is unlikely. Furthermore, recent studies seem to indicate that there might be some inconsistency between the analyses of the SPT and ACT data sets~\cite{DiValentino:2013mt,Archidiacono:2013lva}, which could lead to different results even though the primary data of both telescopes do not look very different, cf.\ Fig.~1 in Ref.~\refcite{Hinshaw:2012fq}. It is probably fair to say that there is currently no fully consistent picture.

The aim would be to combine the cosmological value of $\Sigma$ with laboratory data, which could positively influence each other~\cite{Host:2007wh}. Ideally, we would therefore like $\Sigma$ to be the sum of only the three active neutrino masses,
\begin{equation}
 \Sigma = m_1 + m_2 + m_3.
 \label{eq:Sigma_m}
\end{equation}
However, $\Sigma$ actually denotes the sum over \emph{all} light neutrino species, which could also include sterile neutrinos if they are light enough~\cite{Hamann:2010bk}. In this context, ``light enough'' means that these species are highly relativistic at recombination time, when the CMB is produced. Note that, because of this requirement, keV neutrinos themselves will not give a significant contribution to $\Sigma$, as they are required to be slow enough not to spoil structure formation. Extrapolating the expansion of the Universe from that point back to recombination, it is easy to see that keV neutrinos only give a negligible contribution to $\Sigma$. It should be obvious that it can be a bit subtle to truly understand what contributes to $\Sigma$, and what does not. Even worse, there might be unknown sources of such a contribution, in which case the interpretation of $\Sigma$ as the sum of the three active neutrino masses could fail completely. In such a case, if one erroneously assumed Eq.~\eqref{eq:Sigma_m} but the measured parameter $\Sigma$ is in fact dominated by other contributions (\emph{i.e.}, by systematic errors), a naive analysis could lead to wrong limits, or even to a fake ``measurement'' of the neutrino mass scale~\cite{Maneschg:2008sf}.

Summing up, we have, in particular within the last decades, gained considerable knowledge on the neutrino masses and mixing parameters. Yet we are puzzled by their values, and in particular by the smallness of their masses and by the large mixings compared to the quark sector. To the best of our knowledge, this seems to suggest a structure, or symmetry principle, behind these parameters. It is exactly at this point where neutrino model building comes into the game.

\subsection{\label{sec:neutrino_modeling}Model building for light neutrinos}	

The goal of model building in neutrino physics can be summarized in one sentence: we aim to find a particle physics explanation for the sizes and values of neutrino masses and leptonic mixing angles.

Not every approach is powerful enough to explain both these aspects, so it makes sense to divide the models on the market into \emph{mass models}, which give an explanation for the tininess of the active neutrino masses, and \emph{flavor models}, which can explain the mixing pattern in the leptonic sector.\footnote{Often, the quark sector is disregarded in that context, since quark mixing is not as significant as the leptonic mixing. However, even in the more recent literature one can find examples for, \emph{e.g.}, the successful prediction of the Cabibbo angle based on symmetry principles~\cite{Blum:2007nt,Blum:2009nh,Ishimori:2010xk}.} Typically, mass models do not predict an exact value for the mass, which is intrinsically difficult in a quantum field theory, but rather some hierarchy among certain masses, \emph{e.g.}, why the light neutrinos should be much lighter than the other SM fermions. Flavor models, in turn, often predict very specific values for certain mixing angles, however at the prize of introducing practically unobservable high energy sectors. On top of that, the magnitude of mixing angles is intrinsically fixed -- they are either practically zero or of $\mathcal{O}(1)$, which is a qualitative difference to mass hierarchies.

\subsubsection{\label{sec:neutrino_modeling_mass}Neutrino mass models}	

Starting with mass models, the question to be answered is why the three known active neutrinos have masses of at most $\mathcal{O}(1~{\rm eV})$, while even the electron already has a mass of $511$~keV, and the other SM fermions are by far more massive. In the SM itself, neutrinos are even strictly massless, but this is known to be phenomenologically invalid from the observation of neutrino oscillations, cf.\ Sec.~\ref{sec:neutrino_mix}. However, the reason for neutrinos being massless in the SM is actually a rather artificial one: any fermionic mass term would need to couple left-handed (LH) to right-handed (RH) fermion fields, \emph{e.g.} $\mathcal{L}_\nu = -m_\nu \overline{\nu_L} \nu_R + h.c.$, but while (in the 1-family approximation) left-handed neutrinos $\nu_L$ appear together with the left-handed electron $e_L$ as components of $SU(2)$ doublet fields in the SM, their right-handed counterparts $\nu_R$ are simply not included in the particle content. But even introducing right-handed neutrinos to create an extended version of the SM does not really cure the problem: in that case, we would be forced by the SM-gauge symmetry to write down a SM-like Yukawa coupling term,
\begin{equation}
 \mathcal{L}_Y \supset - \overline{L} \tilde H y_\nu \nu_R + h.c. \to - \overline{L} \langle \tilde H \rangle y_\nu \nu_R + h.c. = - \overline{\nu_L} m_D \nu_R + h.c., 
 \label{eq:nu-SM-mass}
\end{equation}
where $L=(\nu_L, e_L)^T$ is the SM-lepton doublet, $\tilde H = i \sigma^2 H$, $H=(H^+, H^0)$ is the SM-Higgs doublet, and $\langle H \rangle = (0, v)^T$ with $v=174$~GeV is its \emph{vacuum expectation value} (VEV) which is generated by electroweak symmetry breaking. Such a mechanism induces a mass $m_D = y_\nu v$ for the neutrino, which we have silently given a lower index $D$. This stands for \emph{Dirac mass}, and it emphasizes that the neutrino in the SM-extended by RH neutrinos is a \emph{Dirac particle}, \emph{i.e.}, distinct from its antiparticle. The problem with this way of generating the neutrino mass is that the VEV $v$ is fixed to a value of $174$~GeV by the known masses of the $W$- and $Z$-bosons, and unless the \emph{Yukawa coupling} $y_\nu$ is extremely tiny, $y_\nu \lesssim 10^{-11}$, the corresponding neutrino mass will be too large.\footnote{Note that one can also find the alternative convention $\langle H \rangle = (0, v/\sqrt{2})^T$ in the literature, in which case $v=246$~GeV. This does, of course, not change the principal argumentation.} Hence, the SM intrinsically predicts a wrong scale for the neutrino mass, unless we find an explanation for $y_\nu$ to be tiny.

The introduction of a Dirac mass reflects the fact that the SM conserves lepton number at the perturbative level. However, lepton number conservation is only an \emph{accidental} symmetry of the SM: it has not been imposed, and it is by far not sacrosanct. Actually, the SM itself does violate lepton number, but only at the non-perturbative level by so-called \emph{sphaleron} processes~\cite{tHooft:1976up,Klinkhamer:1984di,Kuzmin:1985mm,Fukugita:1986hr,Arnold:1987mh}, or by higher-dimensional operators which are not renormalizable~\cite{Weinberg:1979sa,deGouvea:2007xp}. If we accept that lepton number may be violated, then gauge symmetry immediately allows a new term in the SM extended by RH neutrinos, the so-called \emph{Majorana mass term},
\begin{equation}
 \mathcal{L}_M = -\frac{1}{2} \overline{(\nu_R)^c} M_R \nu_R + h.c.,
 \label{eq:RH-Majorana}
\end{equation}
where $(\nu_R)^c$ is the charge-conjugate of a right-handed field and, as such, left-handed. The factor $\frac{1}{2}$ in Eq.~\eqref{eq:RH-Majorana} arises since the spinors used are Majorana spinors, which are effectively real quantities, just as the mass term of a real scalar has a factor of $\frac{1}{2}$ compared to the mass term of a complex scalar. The term in Eq.~\eqref{eq:RH-Majorana} clearly violates lepton number by two units. Even more importantly, the coefficient $M_R$ of this term has the dimension of a mass which is \emph{not} in any way related to the electroweak scale $v$. Hence, $M_R$ can be \emph{arbitrary}, as long as it is smaller than the Planck mass. In particular, $M_R$ will in general not be zero.

Although there is actually not too much of a reason to suspect $M_R$ to be much larger than $v$, it is often assumed to be. This notion is also the reason for the notation $N_R$ instead of $\nu_R$, which is typically used in the literature and which we will also use from now on. If $M_R$ is indeed relatively large, an interesting thing happens: the full neutrino mass term can be rewritten in terms of a larger matrix,
\begin{equation}
 \mathcal{L}_\nu = -\frac{1}{2} (\overline{\nu_L} , \overline{(N_R)^c})
 \begin{pmatrix}
 0 & m_D\\
 m_D^T & M_R
 \end{pmatrix}
 \begin{pmatrix}
 (\nu_L)^c\\
 N_R
 \end{pmatrix} + h.c.
 \label{eq:seesawI_1}
\end{equation}
As shown explicitly in \ref{sec:seesaw}, this large matrix can be approximately block-diagonalized for $M_R \gg m_D$ and leads to a light neutrino mass matrix given by
\begin{equation}
 m_\nu = - m_D M_R^{-1} m_D^T.
 \label{eq:seesawI_2}
\end{equation}
What have we gained by this? Since $m_D \ll M_R$, a term of $\mathcal{O}\left(\frac{m_D}{M_R}\right)$ is much smaller than one, and by this it leads to a suppression of the mass scale $m_D$. If the scale of $M_R$ is large enough, say $10^{14}$~GeV, then a mass $m_D\sim v = \mathcal{O}(100~{\rm GeV})$, as we would expect it, would lead to an active neutrino mass of
\begin{equation}
 m_\nu \sim \frac{(100~{\rm GeV})^2}{10^{14}~{\rm GeV}} = 10^{-10}~{\rm GeV} = 0.1~{\rm eV},
 \label{eq:seesaw_3}
\end{equation}
which is just below the upper bounds mentioned in Sec.~\ref{sec:neutrino_mix}. This is the famous \emph{seesaw type~I mechanism}~\cite{Minkowski:1977sc,Yanagida:1979as,GellMann:1980vs,Glashow:1979nm,Mohapatra:1979ia}. Note that this mechanism can be cast in a Feynman diagram, see left panel of Fig.~\ref{fig:seesaw}: integrating out the heavy neutrino, we exactly retrieve Eq.~\eqref{eq:seesawI_2}.

\begin{figure}[pb]
\centerline{
\begin{tabular}{lr}
\psfig{file=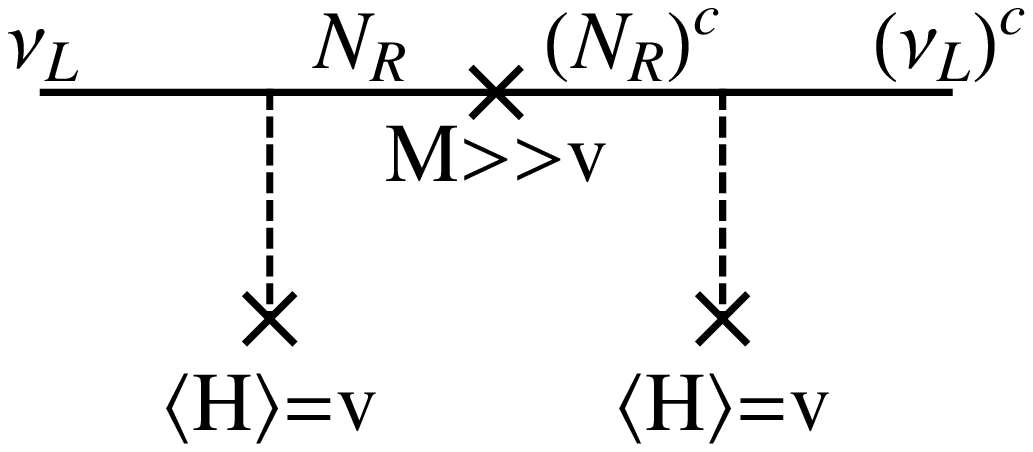,width=5.0cm} & \psfig{file=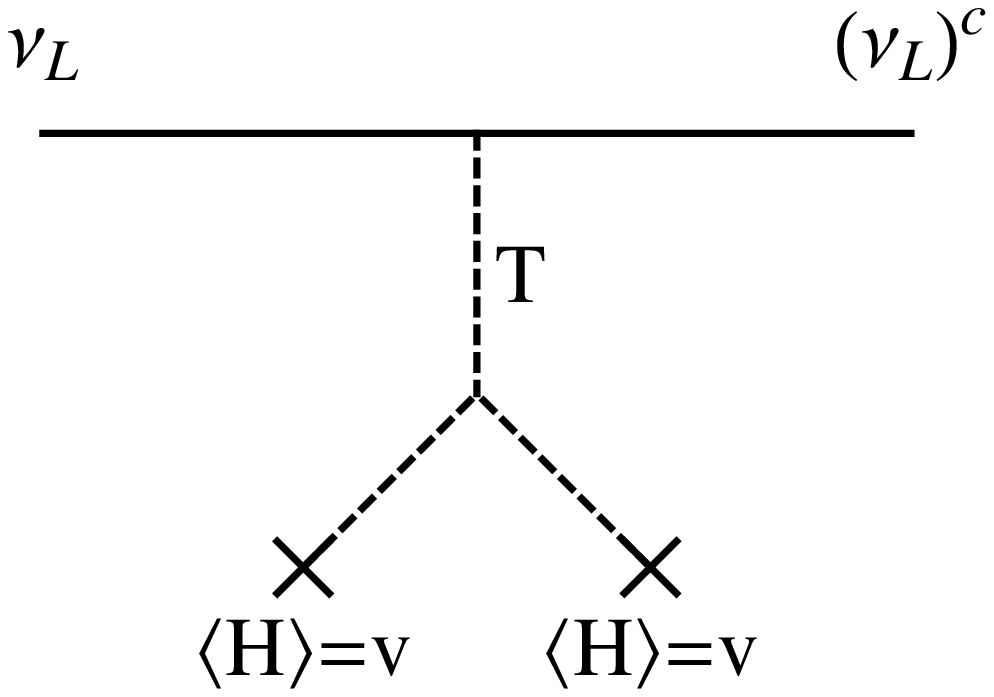,width=5.0cm}
\end{tabular}
}
\vspace*{8pt}
\caption{\label{fig:seesaw}The Feynman diagrams for the seesaw mechanisms of type~I and type~II.}
\end{figure}

This mechanism beautifully exhibits many principles of model building: we try to explain certain facts, in this case the smallness of the neutrino mass, by introducing new ingredients which are less constrained, in this case the Majorana mass term and the heavy fields. At first sight, this might seem to be only a shift of the problem to a different sector, or even sweeping it under the carpet. However, the key point is that a new model should always lead to certain new \emph{predictions}, which make it \emph{testable}. In the seesaw type~I case, these predictions include the mere existence of the right-handed neutrinos, as well as their possible cosmological consequences by \emph{leptogenesis}~\cite{Fukugita:1986hr}.\footnote{Note that, if the seesaw mechanism is associated with an even lower scale, a lot of additional cosmological probes might be possible~\cite{deGouvea:2006gz}.} By this we can get more insights into the possibilities of Nature: many simple-minded models do not work, and our task in model building is to find the simplest and most beautiful models which are not only consistent with current data, but which also lead to testable predictions, ideally in the near future. It is then the task of the experimentalists to measure certain parameters -- either more precisely or for the first time at all -- which we theorists can then use to exclude the various candidate models one by one. In the best case, we would by this procedure end up with the single model that is as close to the truth as possible.

To name a few more possibilities, seesaw-type mechanisms can also be achieved by adding either a Higgs triplet $T$ (\emph{type~II}~\cite{Magg:1980ut,Lazarides:1980nt}) or (typically three) fermion triplets $\Sigma$ (\emph{type~III}~\cite{Foot:1988aq,Ma:2002pf}), or by extending the type~I seesaw by additional singlets which are not right-handed neutrinos, as in the \emph{inverse seesaw}~\cite{Mohapatra:1986bd}. For example, in type~II seesaw this leads, among other terms, to a triplet Yukawa coupling given by
\begin{equation}
 \mathcal{L}_{Y_T} = - \overline{L} i \sigma_2 T y_T L^c + h.c.,
 \label{eq:triplet_Yukawa}
\end{equation}
where the Higgs triplet fields is written in components as
\begin{equation}
 T=\begin{pmatrix}
 T^+/\sqrt{2} & T^{++}\\
 T^0 & -T^-/\sqrt{2}
 \end{pmatrix} \to \langle T \rangle =\begin{pmatrix}
 0 & 0\\
 v_T & 0
 \end{pmatrix}.
 \label{eq:triplet_Higgs}
\end{equation}
The point is that the VEV $v_T$ is actually induced by the ordinary Higgs doublet VEV, due to the term
\begin{equation}
 \mathcal{L}_{\rm scalar} \supset \mu H^T i \sigma_2 T H + h.c. - M_T^2 {\rm Tr} \left( T^\dagger T \right),
 \label{eq:triplet_tadpole}
\end{equation}
which induces a tadpole term for the Higgs triplet, cf.\ right panel of Fig.~\ref{fig:seesaw}. Indeed, this leads to a LH neutrino mass as in Eq.~\eqref{eq:Majorana_app}, where
\begin{equation}
 \mathcal{L}_{Y_T} \to - \overline{\nu_L} v_T y_T (\nu_L)^c + h.c. = - \overline{\nu_L} \mu \frac{v^2}{M_T^2} y_T (\nu_L)^c + h.c. \equiv -\frac{1}{2} \overline{\nu_L} m_L (\nu_L)^c + h.c.
 \label{eq:LH_mass}
\end{equation}
Note that the potentially large mass $M_T$ suppresses the triplet VEV, which would in any case be bound by the deviation of the so-called $\rho$-parameter, $\rho = \frac{M_W^2}{M_Z^2 \cos^2 \theta_W}$ where $M_W$ ($M_Z$) is the mass of the SM $W$- ($Z$-) boson, from its SM value of $1$ to $v_T \lesssim \mathcal{O}(1~{\rm GeV})$~\cite{Kanemura:2012rs}. If in addition the Yukawa coupling matrix $y_T$ has somehow small entries, a small LH neutrino mass $m_L$ is justified.

An interesting alternative for neutrino mass models is to suppress the neutrino mass not by a tree-level diagram but to let it instead vanish exactly, such that it only arises at loop-level. The most simple such extensions of the SM which are not excluded are probably Ma's \emph{scotogenic model}~\cite{Ma:2006km} (1-loop) and the \emph{Zee-Babu model}~\cite{Zee:1985rj,Zee:1985id,Babu:1988ki} (2-loop). Even higher suppressions are present in the \emph{Aoki-Kanemura-Seto model}~\cite{Aoki:2008av,Aoki:2011zg} (3-loop) or in the Butterfly diagram (4-loop) by Schechter and Valle~\cite{Schechter:1981bd,Duerr:2011zd}.

\subsubsection{\label{sec:neutrino_modeling_FN}The Froggatt-Nielsen mechanism}	

What practically all neutrino mass models have in common is that, even though they provide an explanation for the smallness of neutrino masses, they usually do not give immediate predictions for the mixing angles.\footnote{Notable exceptions are, \emph{e.g.}, the \emph{Zee-Wolfenstein model}~\cite{Zee:1980ai,Wolfenstein:1980sy} or the left-right symmetric extension of the scotogenic model~\cite{Adulpravitchai:2009re}. However, the former is excluded by data, and the latter employs an additional symmetry, which actually does not make it a pure mass model.} Indeed, this is a requirement that is non-trivial to achieve.

The state of the art of most of the models is to make use of so-called \emph{flavor symmetries}. The idea behind this approach is to extend the undoubtably successful application of symmetries, which for example predict the structure of the gauge sector of the SM. \emph{E.g.}, the color $SU(3)_C$ symmetry in Quantum Chromodynamics (QCD) predicts the existence of eight different gluons, and the $SU(2)_L \times U(1)_Y$ symmetry of the electroweak sector predicts the $\rho$-parameter to be exactly one at tree-level in the SM~\cite{Beringer:1900zz}. Similar considerations could, in principle, be applied to the flavor sector to predict certain mass and mixing patterns.

The first question to ask is whether the flavor symmetry used should be a \emph{continuous} or a \emph{discrete} symmetry. Having the SM in mind, a continuous symmetry seems to be the natural choice, as such symmetries predict the structures of the SM: \emph{e.g.}, the gauge sector exhibits an $SU(3)_C \times SU(2)_L \times U(1)_Y$ symmetry~\cite{Beringer:1900zz} and also lepton number could be described by one or more $U(1)$ rotations~\cite{Davidson:2006bd}. Unfortunately, a practical problem arises: typically, flavor symmetries are broken for phenomenological reasons. One could in principle assume the symmetry to be explicitly broken, but then the question might arise in how far the symmetry under consideration is actually present in a certain model. Breaking the symmetry spontaneously, however, will by the so-called \emph{Goldstone theorem}~\cite{Goldstone:1961eq,Goldstone:1962es} generically introduce unwanted massless scalar particles, which are normally a phenomenological disaster. One way out would be to gauge the flavor symmetry, as was done, \emph{e.g.}, in Refs.~\refcite{Grinstein:2010ve,Guadagnoli:2011id}. Alternatively, one could simply use a discrete symmetry, which would not suffer from unwanted Goldstones at all as long as no accidental continuous symmetry appears that gets broken at some point.

If not the first actual flavor symmetry, at least the most intuitive to understand was provided by Froggatt and Nielsen in 1979, by now termed the \emph{Froggatt-Nielsen (FN) mechnanism}~\cite{Froggatt:1978nt}. Although originally developed for the quark sector in order to simultaneously explain the mass and mixing pattern as well as the occurrence of $CP$ violation~\cite{CPV}, the FN mechanism is equally suited for the leptonic sector. The basic idea is very simple: since any fermionic mass matrix in the SM arises at the electroweak scale $v$, it is tempting to consider a mass matrix $M= v Y$ to originate from a Yukawa coupling matrix $Y$ of the form
\begin{equation}
 Y_{ij} = Y_{ij}^{\rm nat} \lambda^{a_i + b_j}, 
 \label{eq:FN_1}
\end{equation}
where the ``natural'' entries $Y_{ij}^{\rm nat}$ are all of the same order, and the suppression of certain elements comes from powers of a suppression factor $\lambda$, which depend on certain generation-dependent ``charges'' $a_i$ and $b_j$. The trick is to re-interpret the actual couplings $Y_{ij}^{\rm nat} \lambda^{a_i + b_j}$ as arising from the VEV of a new SM-singlet scalar field $\Theta$, which is integrated out at a high energy scale $\Lambda$. The ``static'' couplings are hence promoted to dynamic quantities arising from a field which would, in a more modern language, be termed a \emph{spurion}. The charges $a_i$ and $b_j$ are interpreted as being the quantum numbers of a new $U(1)_{\rm FN}$ symmetry under which the different fermion generations are charged differently. Furthermore, the new scalar $\Theta$, usually called \emph{flavon},\footnote{Note that this term is generally used for scalar fields that are charged only under the flavor symmetry, but are SM singlets otherwise and obtain a VEV. If such singlets do not obtain a VEV, they are often termed \emph{driving fields} or \emph{waterfall fields}, since they are important for the vacuum structure of the scalar potential.} has a non-trivial charge under $U(1)_{\rm FN}$, and a new set of heavy singlet fermions $S_\alpha$, all suitably charged under the new symmetry, is introduced as well.

What we gain by the introduction of all these new ingredients is that most of the Yukawa couplings, as they appear in the SM, are now \emph{forbidden} by the $U(1)_{\rm FN}$ symmetry. Instead, new couplings made of higher order terms appear, which couple the flavon to either the SM Higgs, one SM fermion, and one heavy fermion, or to two heavy fermions, \emph{e.g.}:
\begin{equation}
 \mathcal{L}_{\Theta, \rm SM} = Y^\Theta_{11} \overline{L_1} H S_1 \frac{\Theta}{\Lambda} + h.c.\ \ \ {\rm or}\ \ \ \mathcal{L}_{\Theta, \rm singlet} = Y^S_{12} \Theta S_1 S_2  + h.c.
 \label{eq:flavon-Yukawas}
\end{equation}
The exact set of terms is determined by the charges chosen. Now, since the flavon $\Theta$ obtains a VEV $\langle \Theta \rangle$, this VEV breaks all symmetries to which the flavon couples, \emph{i.e.}, the $U(1)_{\rm FN}$. Hence, one can use a ``chain'' of couplings as the ones in Eq.~\eqref{eq:flavon-Yukawas} to construct extended seesaw-like diagrams, see Fig.~\ref{fig:FN-diagram} (cf.\ left panel of Fig.~\ref{fig:seesaw}), where each VEV breaks the $U(1)_{\rm FN}$ charge by $k$ units, with $k$ being the charge of the flavon field $\Theta$. Integrating out each of these heavy fermions introduces powers of the small quantity $\lambda \equiv \frac{\langle \Theta \rangle}{\Lambda}$ and, since these suppressions are different for different SM fermion generations, we end up with hierarchical Yukawa couplings at low energies, exactly as in Eq.~\eqref{eq:FN_1}. This allows for a first, though simple, understanding of what ``flavor'' actually could be.

\begin{figure}[pb]
\centerline{
\psfig{file=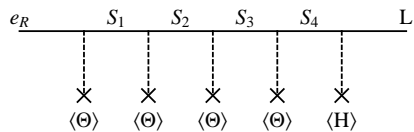,width=6.0cm}
}
\vspace*{8pt}
\caption{\label{fig:FN-diagram}Example Feynman diagram for a mass term generated by the FN mechanism.}
\end{figure}

For example, let us take an easy one-generation model of an electron, contained in a LH doublet $L$ and a RH singlet $e_R$, where the FN charges of ($\Theta$, $L$, $e_R$, $H$) are taken to be $(k_\Theta, k_L, k_e, k_H)=(1, 2, 0, 0)$. Then, a term $\overline{L} H e_R + h.c.$ would have a $U(1)_{\rm FN}$ charge of $-2 + 0 + 0 = -2$ (note the conjugate in the bar!), and hence be forbidden. However, if we introduce a new fermion $S$, we can write down a term $y_L \overline{L} H \frac{\Theta}{\Lambda} S + h.c.$ if we take the $U(1)_{\rm FN}$ charge of $S$ to be $k_S = 1$ (and also account for the correct electrical charge). Equally fine is a term $y_R \overline{S} \Theta e_R + h.c.$, which would have a FN charge of $-1 +1 +0 = 0$. The decisive part of the Lagrangian would then be
\begin{equation}
 \mathcal{L} \supset - y_L \overline{L} H \frac{\Theta}{\Lambda} S - y_L \overline{S} \frac{\Theta^*}{\Lambda} H^* L - y_R \overline{S} \Theta e_R - y_R \overline{e_R} \Theta^* S - M \overline{S} S ,
 \label{eq:FN_ex_1}
\end{equation}
where we have written the $h.c.$-terms explicitly, and we have chosen all couplings to be real by absorbing their phases into the fields. We have furthermore introduced a large mass $M$ of the new fermion $S$. If now $M$ is so large that $S$ is essentially static (\emph{i.e.}, its kinetic term is negligible), then we can integrate out $S$ and $\overline{S}$ by using the Euler-Lagrange equations, \emph{e.g.},
\begin{equation}
 \frac{\delta \mathcal{L}}{\delta \overline{S}} = - y_L \frac{\Theta^*}{\Lambda} H^* L - y_R \Theta e_R - M S = \partial_\mu \frac{\delta \mathcal{L}}{\delta (\partial_\mu \overline{S} )} \simeq 0,
 \label{eq:FN_ex_2}
\end{equation}
and hence
\begin{equation}
 S \simeq \frac{-1}{M} \left( y_L \frac{\Theta^*}{\Lambda} H^* L + y_R \Theta e_R \right).
 \label{eq:FN_ex_3}
\end{equation}
Inserting this value back into Eq.~\eqref{eq:FN_ex_1} and inserting the VEVs  $\langle \Theta^{(*)} \rangle = w$, we obtain:
\begin{equation}
 \mathcal{L} \supset \frac{2 y _L y_R}{M} \overline{L} H \frac{w^2}{\Lambda} e_R + h.c.
 \label{eq:FN_ex_4}
\end{equation}
If the mass $M$ is taken to be identical to the high scale $\Lambda$, which was not specified up to now, then we finally arrive at an electron Yukawa coupling given by
\begin{equation}
 \mathcal{L} \supset 2 y _L y_R \lambda^2  \overline{L} H {\Lambda} e_R + h.c.,
 \label{eq:FN_ex_5}
\end{equation}
where $\lambda\equiv w/\Lambda$. This has precisely the form anticipated in Eq.~\eqref{eq:FN_1}: the Yukawa coupling has a natural size $2 y _L y_R$, which is suppressed by a certain power $\lambda^{k_L + k_e} = \lambda^2$, due to $\lambda$ being smaller than one.

Although the basic principle of the FN mechanism is easy to understand, it already reveals certain problems that are typically associated with flavor symmetries:
\begin{itemize}

\item \emph{Goldstone bosons}:\\
As mentioned earlier, the breaking the $U(1)_{\rm FN}$ may introduce undesirable massless Goldstone bosons~\cite{Goldstone:1961eq,Goldstone:1962es}. To go around that, we either have to gauge the continuous symmetry, or otherwise give mass to the Goldstone modes by some non-perturbative mechanism as in, \emph{e.g.}, Refs.~\refcite{Kallosh:1995hi,Berenstein:2010ta}, or have the symmetry explicitly broken from the very beginning, as is the case for pions~\cite{Bernard:1983ix}. This problem does not only appear when we explicitly impose a continuous symmetry from the very beginning, as in the FN case, but also if there is an \emph{accidental} symmetry in the problem. Even worse, we might not even be aware of such a symmetry, and wonder why massless states arise at all. This problem can be, to some extent, circumvented when using discrete flavor symmetries instead, but a model builder should still be careful to precisely determine the actual symmetry of the problem. Treatments of how to obtain a discrete from the breaking of a continuous symmetry might help in that respect~\cite{Etesi:1997jv,Koca:1997td,Koca:2003jy,Adulpravitchai:2009kd,Berger:2009tt,Luhn:2011ip}, and in particular the use of so-called \emph{invariant polynomials} may be of advantage~\cite{Merle:2011vy}.

\item \emph{higher-order terms}:\\
As in Eq.~\eqref{eq:flavon-Yukawas}, the actual Yukawa couplings at low energies are in fact higher order terms. In other words, we are often working in an \emph{effective field theory} (EFT), and hence our Lagrangians are \emph{non-renormalizable}. While this may not be a big problem in practice, since one can very well work with EFTs (and since the neutrino mass induced by the seesaw diagram itself only exists in the framework of an EFT), one should still keep in mind that in reality we would need to know the fundamental high energy theory in order to truly derive predictions such as the sizes of certain couplings from a model. However, since we treat the couplings as free parameters, and since not even a plausible candidate theory to explain their absolute values is known, we can neglect this problem in most practical considerations. While many models rely on non-renormalizable operators, it should be noted that there exist also models based on flavor symmetries which are fully renormalizable, in particular when being constructed within a Grand Unified Theory context (see, for example, Refs.~\refcite{Babu:2002dz,Frampton:2008bz,Patel:2010hr,Blankenburg:2011vw,Ferreira:2012ri}).

\item \emph{unknown high energy sector}:\\
A bigger problem in practice is the existence of the high energy sector itself, in the FN case the flavon (or several flavons) as well as the heavy singlet fermions. This is, first of all, an aesthetic problem, since a completely unknown sector is introduced but only partially used and often not even discussed in what concerns its possible effects or signatures. However, at least in the early Universe, this scalar sector could potentially be accessible, and by this the associated scalar SM singlet fields could, \emph{e.g.}, be responsible for inflation~\cite{Antusch:2008gw}.

\item \emph{symmetry and charge assignments}:\\
Already in the FN example, it appears that the charges $a_i$ and $b_j$ can essentially be freely chosen.\footnote{Note that, although typically chosen to have integer values in FN-inspired models, $U(1)$ charges can actually be arbitrary real numbers.} Not only that, even the whole symmetry group can be more or less arbitrarily chosen, and the $U(1)_{\rm FN}$ is only an example. However, in the spirit of the unification of three generations of fermions, one might be tempted to focus on symmetry groups that actually do have \emph{three}-dimensional \emph{irreducible representations} (which will be discussed in a second), as this might signal a common origin for, \emph{e.g.}, the three known charged lepton flavors. On the other hand, also two-dimensional and even one-dimensional representations are used, the latter being applied in the $U(1)_{\rm FN}$ example.

\end{itemize}
The Froggatt-Nielsen mechanism already contains practically all the important features and characteristics of modern flavor symmetry models. In fact, it can serve as example throughout the remainder of this review, and actually it is still used by many modern models, which are often only variants (though more complicated ones) of the ``old'' idea behind FN.

\subsubsection{\label{sec:neutrino_modeling_flavor}Neutrino flavor models}	

We will now consider a more modern model, to make clear where we actually need group-theoretical aspects. Although the field of flavor model building is an industry by itself, we will try to give a \emph{flavor} of what is going on, which will turn out to be very useful in the later sections. A very detailed review on the details of discrete flavor symmetries by King and Luhn has recently become available in Ref.~\refcite{King:2013eh}.

The model to be presented here is a slight variant the minimal $A_4$ model from Ref.~\refcite{Chen:2009um}, which is a particularly economic example. Without aiming to present a complete and realistic model, it nicely illustrates how group theory is used to predict mass matrices, and hence mixing patterns.

The basic group theory of $A_4$ is concisely outlined in Ref.~\refcite{Altarelli:2005yx}. The group $A_4$ is an alternating group, to be precise, the group of even permutations of four objects. Accordingly, it has exactly $4!/2=24\div 2 = 12$ elements. All these elements can be built out of combinations of powers of two \emph{generators}, denoted by $S$ and $T$, which fulfill the \emph{presentation rules} of $A_4$: $S^2 = T^3 = (S T)^3 = {\bf 1}$. These rules unambiguously define the group. It turns out that the group $A_4$ has exactly four \emph{irreducible representations} (irreps), cf.\ Tab.~\ref{tab:gens}, out of which three are one-dimensional ($\mathbf{1}$, $\mathbf{1'}$, $\mathbf{1''}$) and one is three-dimensional ($\mathbf{3}$). This essentially means that there are three different ways to denote the group elements by (complex) numbers, and one way to do that using $3\times 3$ matrices.\footnote{However, some of these representations are \emph{non-faithful}: the same number or matrix could be used for more than one group element.} Since we want to explain some structure in the mass matrices, we make use of the $\mathbf{3}$ in order to unify the three generations. In other words, we simply postulate that the three LH lepton doublets of the SM transform under $A_4$ just as the triplet does, while the right-handed charged leptons transform as singlets:
\begin{equation}
 L = \begin{pmatrix}
 L_e\\
 L_\mu\\
 L_\tau
 \end{pmatrix} \sim \mathbf{3},\ \ e_R \sim \mathbf{1},\ \ \mu_R \sim \mathbf{1''},\ \ \tau_R \sim \mathbf{1'}.
 \label{eq:trafos_ex}
\end{equation}
Note that this choice is, to some extent, completely \emph{ad hoc}. An experienced model builder would of course know how to assign transformation behaviors in a successful way, but we could also take the more naive viewpoint of simply choosing some assignments to see where they lead to.

\begin{table}[hp]
\tbl{Generators of $A_4$ (with $\omega \equiv e^{2\pi i/3} $).}
{\begin{tabular}{@{}lcc@{}} \toprule
Irrep & $S$ & $T$ \\
\colrule
$\mathbf{1}$ & $1$ & $1$\\
$\mathbf{1'}$ & $1$ & $\omega^2$\\
$\mathbf{1''}$ & $1$ & $\omega$\\
$\mathbf{3}$ & $\frac{1}{3} \begin{pmatrix} -1 & 2 & 2 \\ 2 & -1 & 2 \\ 2 & 2 & -1 \end{pmatrix}$ & $\begin{pmatrix} 1 & 0 & 0 \\ 0 & \omega^2 & 0 \\ 0 & 0 & \omega \end{pmatrix}$ \\
\botrule
\end{tabular}
\label{tab:gens}}
\end{table}

The group theory now comes in because in a Lagrangian we have to combine certain fields, in order to obtain the total singlet terms we can include in the Lagrangian. The easiest way to arrive at a neutrino mass is to make use of the (non-renormalizable) dimension-5 Weinberg operator~\cite{Weinberg:1979sa},
\begin{equation}
 \mathcal{L}_5 = \frac{-y_{ij}}{\Lambda} (\overline{L_i^c} H) i\sigma_2 (H L_j),
 \label{eq:Wein_op}
\end{equation}
where $\Lambda$ denotes some high energy scale and $\sigma_2$ is the second Pauli matrix. If the SM Higgs field $H$ transforms trivially under $A_4$, \emph{i.e.} $H\sim \mathbf{1}$, then this operator, which leads to a neutrino mass if $H$ obtains a VEV, is in terms of $A_4$ nothing else than a \emph{direct product} of two triplets, $\mathbf{3} \otimes \mathbf{3}$. To know precisely what this product is, we need the help of group theory. In the case of $A_4$, it tells us that two triplets $a=(a_1, a_2, a_3)^T$ and $b=(b_1, b_2, b_3)^T$ are combined as follows,
\begin{equation}
 \mathbf{3} \otimes \mathbf{3} = \mathbf{1} \oplus \mathbf{1'} \oplus \mathbf{1''} \oplus \mathbf{3}_S \oplus \mathbf{3}_A,
 \label{eq:dir_prod}
\end{equation}
where the resulting singlets are given by\cite{Altarelli:2005yx},
\begin{eqnarray}
 \mathbf{1} &=& (a_1 b_1 + a_2 b_3 + a_3 b_2),\nonumber\\
 \mathbf{1'} &=& (a_1 b_2 + a_2 b_1+ a_3 b_3),\nonumber\\
 \mathbf{1''} &=& (a_1 b_3 + a_2 b_2+ a_3 b_1),
 \label{eq:dir_prod_sing}
\end{eqnarray}
and the two resulting triplets are conveniently decomposed into a symmetric and an anti-symmetric expression,
\begin{eqnarray}
 \mathbf{3}_S &=& \frac{1}{3} (2 a_1 b_1 - a_2 b_3 - a_3 b_2, - a_1 b_2 - a_2 b_1 + 2 a_3 b_3, - a_1 b_3 + 2 a_2 b_2  - a_3 b_1)^T,\nonumber\\
 \mathbf{3}_A &=& \frac{1}{2} (a_2 b_3 - a_3 b_2, a_1 b_2 - a_2 b_1, a_1 b_3 - a_3 b_1)^T.
 \label{eq:dir_prod_trip}
\end{eqnarray}
Note that this decomposition can always be done, and the two triplets are perfectly indistinguishable from an $A_4$ point of view, which we can check easily by acting with $S_{\mathbf{3}}$ and $T_{\mathbf{3}}$ onto $a=(a_1, a_2, a_3)^T$ and $b=(b_1, b_2, b_3)^T$.

So far, only the trivial singlet combination, the first term in Eq.~\eqref{eq:dir_prod_sing}, would be allowed in the Lagrangian, which would lead to a relatively boring neutrino mass matrix:
\begin{equation}
 \mathcal{L}_5 \to \frac{-y v^2}{\Lambda} \left( \overline{\nu_{e L}^c} \nu_{e L} + \overline{\nu_{\mu L}^c} \nu_{\tau L} + \overline{\nu_{\tau L}^c} \nu_{\mu L} \right) = \frac{-y v^2}{\Lambda} (\overline{\nu_{e L}^c}, \overline{\nu_{\mu L}^c}, \overline{\nu_{\tau L}^c}) \begin{pmatrix} 1 & 0 & 0\\ 0 & 0 & 1 \\ 0 & 1 & 0 \end{pmatrix} \begin{pmatrix} \nu_{e L}\\ \nu_{\mu L}\\ \nu_{\tau L}\end{pmatrix}.
 \label{eq:mat_simp}
\end{equation}
Even worse, the charged leptons would be completely massless, since any combination of an $A_4$ triplet $a=(a_1, a_2, a_3)$ with a singlet $c$ would generate yet another triplet:
\begin{eqnarray}
 \mathbf{3} \otimes \mathbf{1} &=& (a_1 c, a_2 c, a_3 c),\nonumber\\
 \mathbf{3} \otimes \mathbf{1'} &=& (a_3 c, a_1 c, a_2 c),\nonumber\\
 \mathbf{3} \otimes \mathbf{1''} &=& (a_2 c, a_3 c, a_1 c).
 \label{eq:dir_prod_tripsing}
\end{eqnarray}
All these combinations would transform non-trivially under $A_4$ and hence cannot be allowed terms in a Lagrangian.

This problem can be cured by introducing new fields that are again called \emph{flavons}, which are scalar SM singlets but transform non-trivially under the flavor symmetry. Furthermore, flavons obtain VEVs which can be used to achieve a certain structure in the mass matrices. In the model discussed in Ref.~\refcite{Chen:2009um}, the authors suggest to use one triplet flavon, $\phi_S\sim \mathbf{3}$, as well as one singlet flavon, $u\sim \mathbf{1}$, which are taken to obtain the following VEVs:
\begin{equation}
 \langle \phi_S \rangle = \begin{pmatrix} 1\\ 1\\ 1\end{pmatrix} \alpha_S \Lambda_F,\ \ \ \langle u \rangle = \alpha_0 \Lambda_F,
 \label{eq:ex_VEVs}
\end{equation}
where $\Lambda_F$ is the scale where the discrete flavor symmetry $A_4$ is broken, and the parameters $\alpha_{S,0}$ are coefficients of $\mathcal{O}(1)$. The triplet flavon field can be used to allow for a charged lepton mass term, since $\mathbf{3} \otimes \mathbf{3} = \mathbf{1} \oplus \mathbf{1'} \oplus \mathbf{1''}  \oplus ...$, cf.\ Eq.~\eqref{eq:dir_prod}, and in $A_4$ we furthermore have~\cite{Chen:2009um},
\begin{equation}
 \mathbf{1}^a \otimes \mathbf{1}^b = \mathbf{1}^{(a+b)\ {\rm mod}\ 3},
 \label{eq:further_prod}
\end{equation}
where $a,b=0,1,2$ and $(\mathbf{1}^0, \mathbf{1}^1, \mathbf{1}^2) \equiv (\mathbf{1}, \mathbf{1'}, \mathbf{1''})$. For the charged leptons, we obtain
\begin{eqnarray}
 && \mathcal{L} \supset \overline{L} \frac{\langle \phi_S \rangle}{\Lambda_F} H e_R  \nonumber\\
 && \to \alpha_S \left[ k_e \underbrace{(\overline{L_e} + \overline{L_\mu} + \overline{L_\tau}) }_{{\rm from}\ \mathbf{1}} H e_R + k_\mu \underbrace{ (\overline{L_e} + \overline{L_\mu} + \overline{L_\tau}) }_{{\rm from}\ \mathbf{1'}} H \mu_R + k_\tau \underbrace{ (\overline{L_e} + \overline{L_\mu} + \overline{L_\tau})}_{{\rm from}\ \mathbf{1''}} H \tau_R \right]\nonumber\\
 && \to v \alpha_S (\overline{e_L}, \overline{\mu_L}, \overline{\tau_L})
 \begin{pmatrix}
 k_e & k_\mu & k_\tau\\
 k_e & k_\mu & k_\tau\\
 k_e & k_\mu & k_\tau
 \end{pmatrix}
 \begin{pmatrix}
 e_R\\
 \mu_R\\
 \tau_R
 \end{pmatrix} .
 \label{eq:ex_CL}
\end{eqnarray}
Note that the \emph{VEV alignment} from Eq.~\eqref{eq:ex_VEVs} makes all contributions look very similar; nevertheless, each contribution is a singlet for itself, and hence carries an individual coefficient in the Lagrangian. Although the inclusion of the flavon VEV leads to a non-zero mass matrix, there will be still two massless fermion states, since the rank of the mass matrix in Eq.~\eqref{eq:ex_CL} is one. This suggests that actually one would like to introduce yet another flavon to give masses to the charged leptons, just as done in Ref.~\refcite{Chen:2009um} -- see that reference for more details.

For the neutrinos, in turn, we can make use of both flavons. Since a Majorana mass matrix as the one originating from Eq.~\eqref{eq:Wein_op} must be symmetric, cf.\ \ref{sec:seesaw_matrix}, the only triplet to combine with the flavon $\phi_S$ is the symmetric one. Applying Eq.~\eqref{eq:dir_prod_trip} yields
\begin{eqnarray}
 &&\frac{1}{\Lambda} (\overline{L^c} H) i\sigma_2 (H L) \frac{\langle \phi_S \rangle}{\Lambda_F} \supset \frac{\alpha_S v^2}{3 \Lambda} \left( 2 \overline{\nu_{e L}^c} \nu_{e L} - \overline{\nu_{\mu L}^c} \nu_{\tau L} - \overline{\nu_{\tau L}^c} \nu_{\mu L} - \overline{\nu_{e L}^c} \nu_{\mu L} - \overline{\nu_{\mu L}^c} \nu_{e L} \right.\nonumber\\
 && \left. + 2 \overline{\nu_{\tau L}^c} \nu_{\tau L} - \overline{\nu_{e L}^c} \nu_{\tau L} + 2 \overline{\nu_{\mu L}^c} \nu_{\mu L}  - \overline{\nu_{\tau L}^c} \nu_{e L} \right) =\nonumber\\
 && = \frac{\alpha_S v^2}{3 \Lambda} (\overline{\nu_{e L}^c}, \overline{\nu_{\mu L}^c}, \overline{\nu_{\tau L}^c})
 \begin{pmatrix}
 2 & -1 & -1\\
 -1 & 2 & -1 \\
 -1 & -1 & 2
 \end{pmatrix}
 \begin{pmatrix}
 \nu_{e L}\\
 \nu_{\mu L}\\
 \nu_{\tau L}
 \end{pmatrix}.
 \label{eq:ex_nu_1}
\end{eqnarray}
Similarly, the trivial singlet combination in Eq.~\eqref{eq:Wein_op} can be married with the singlet flavon $u$, yielding
\begin{eqnarray}
 &&\frac{1}{\Lambda} (\overline{L^c} H) (H L) \frac{\langle u \rangle}{\Lambda_F} \supset \frac{\alpha_0 v^2}{\Lambda} \left( \overline{\nu_{e L}^c} \nu_{e L} + \overline{\nu_{\mu L}^c} \nu_{\tau L} + \overline{\nu_{\tau L}^c} \nu_{\mu L} \right) \nonumber\\
 && = \frac{\alpha_0 v^2}{\Lambda} (\overline{\nu_{e L}^c}, \overline{\nu_{\mu L}^c}, \overline{\nu_{\tau L}^c})
 \begin{pmatrix}
 1 & 0 & 0\\
 0 & 0 & 1 \\
 0 & 1 & 0
 \end{pmatrix}
 \begin{pmatrix}
 \nu_{e L}\\
 \nu_{\mu L}\\
 \nu_{\tau L}
 \end{pmatrix}.
 \label{eq:ex_nu_2}
\end{eqnarray}
Making the redefinition $\alpha_S/3 \to \alpha_S$, we finally arrive at
\begin{equation}
 \frac{1}{\Lambda} (\overline{L^c} H) (H L) \frac{\langle u \rangle}{\Lambda_F} \supset \frac{v^2}{\Lambda} (\overline{\nu_{e L}^c}, \overline{\nu_{\mu L}^c}, \overline{\nu_{\tau L}^c})
 \begin{pmatrix}
 \alpha_0+ 2 \alpha_S & -\alpha_S & -\alpha_S\\
 -\alpha_S & 2 \alpha_S & \alpha_0 -\alpha_S \\
 -\alpha_S & \alpha_0 -\alpha_S & 2 \alpha_S
 \end{pmatrix}
 \begin{pmatrix}
 \nu_{e L}\\
 \nu_{\mu L}\\
 \nu_{\tau L}
 \end{pmatrix}.
 \label{eq:ex_nu_3}
\end{equation}
We have now seen how the structure of the mass matrices in Eqs.~\eqref{eq:ex_CL} and~\eqref{eq:ex_nu_3} arose. The next step would be to diagonalize both mass matrices, and to determine the PMNS matrix by the mismatch of the two diagonalizations. A very detailed treatment of this point can be found in Refs.~\refcite{Antusch:2005gp,MPT}, and some more information is also given in \ref{sec:seesaw_diag}.

Let us finish by noting that the neutrino mass matrix in Eq.~\eqref{eq:ex_nu_3} is \emph{form-diagonalizable}, \emph{i.e.}, it can always be diagonalized by the same matrix $U_\nu$, no matter which values the parameters $\alpha_{S,0}$ have. As it turns out in the particular case at hand, the matrix that does this job is \emph{tri-bimaximal}~\cite{Harrison:2002er}, which would lead to mixing angles $\theta^\nu_{12}=\arctan \left( \frac{1}{\sqrt{2}} \right)$, $\theta^\nu_{13}=0$, $\theta^\nu_{23}= \frac{\pi}{4}$.\footnote{Note there here we make use of the notation $\theta^\nu_{ij}$ to denote the mixing stemming from the neutrino sector only, cf.\ \ref{sec:seesaw_diag} for more details.} Since a vanishing physical mixing angle $\theta_{13}$ is, however, excluded, cf.\ Eqs.~\eqref{eq:angles_NH} and~\eqref{eq:angles_IH}, we will need a non-trivial charged lepton mixing, $U_e \neq {\bf 1}$, in order to obtain a phenomenologically viable mixing matrix $U_{\rm PMNS} = U_e^\dagger U_\nu$. This would, again, make it necessary to modify the charged lepton mass matrix, by introducing even more fields that ideally should not destroy the form of the neutrino mass matrix. The reader should at latest now appreciate that finding a working model is indeed a non-trivial task.

Note finally that in our calculation we have relied heavily on the chosen \emph{vacuum alignment}, cf.\ Eq.~\eqref{eq:ex_VEVs}. However, we have not written down a full scalar potential including all allowed terms involving the SM-Higgs as well as both flavon fields. Even if we succeed in writing down such a potential, it is a highly non-trivial task to show that Eq.~\eqref{eq:ex_VEVs} is at all a minimum of this potential (showing the even stronger condition that it is actually the global minimum is a task that is practically impossible to tackle in many realistic models). The typical way to proceed is to do a numerical study in order to show that the alignment chosen is at least fine for a certain choice of parameters. While this task is a highly non-trivial one in general, it is nevertheless doable in the relatively minimal model presented here. For the interested reader, we discuss the vacuum alignment for the example model described here in \ref{sec:vacuum}\\

We will now see how the concepts we have discussed come into play when trying to find working models that include keV sterile neutrinos. Not only will we have to convincingly explain all neutrino data, but we will also have to be careful to be in agreement with certain bounds that are specific for keV neutrinos, such as the non-observation of the astrophysical X-ray line (cf.\ Secs.~\ref{sec:astro_X-ray} and~\ref{sec:keV_general_X}), and in particular the generation of the correct DM abundance.

\section{\label{sec:keV_general}General features of settings with keV Sterile Neutrinos}	

In this chapter, we will summarize a few theorems, conditions, and other features of working models for keV sterile neutrinos. This collection will help us to appreciate many of the steps which are implicitly undertaken in concrete models. To mention one example, a typical argument used against keV sterile neutrino models is that one should be skeptical if the seesaw mechanism (cf.\ \ref{sec:seesaw}) actually works, since after all one would have to divide by a small mass of $\mathcal{O}({\rm keV})$ when applying the seesaw formula, Eq.~\eqref{eq:seesawI_2}. While this is certainly a dangerous enterprise at first sight, one can actually show that in any working model of keV sterile neutrinos which respects the X-ray bound, this is no problem. A couple of such properties, to be respected by a large class of models, will be presented in the following.

Note that some of the proofs presented in this section are a bit formal and can be skipped without loss of information. However, if the reader is inclined to reproduce them, it may be helpful to consult \ref{sec:seesaw} for some technical details.

\subsection{\label{sec:keV_general_X}The X-ray bound}	

Although already mentioned in Sec.~\ref{sec:astro_X-ray}, we will now stress the importance of the X-ray bound, arising from the decay $N_i \to \sum_\alpha \nu_\alpha \gamma$ and the non-observation of the corresponding astrophysical line. Note that we have now generalized the keV sterile neutrino $N_i$ to originate from any generation $i$, and that we have furthermore indicated that any active neutrino flavor $\alpha = e, \mu, \tau$ could be produced. However, since we can never observe this flavor we have to sum over all possibilities.

In order to get a better understanding of the constraint arising from this decay, we have depicted the two lowest order 1-loop Feynman diagrams for the reaction $N_i \to \nu_\alpha \gamma$ in Fig.~\ref{fig:Nu_nu_gamma}. As already mentioned in Sec.~\ref{sec:astro_X-ray}, the diagrams look very similar to a $\mu \to e \gamma$ transition, which is even discussed in some textbooks (see, \emph{e.g.}, Ref.~\refcite{ChengLi}). More formally, the process is a special case of the general radiative fermionic decay $f_1 \to f_2 \gamma$~\cite{Lavoura:2003xp}. Since the charged lepton and the $W$-boson inside the loop both are electrically charged, the external photon can couple to either of them, which is why we have actually two diagrams for each active neutrino flavor $\nu_\alpha$. The decisive point in the diagrams is the left vertex: the mass eigenstate sterile neutrino $N_i$ is not purely right-handed, but it has a small left-handed active admixture, which is parametrized by the (small) active-sterile mixing angle $\theta_{\alpha i}$, cf.\ \ref{sec:seesaw_diag}. Hence, there is a possibility for this mass eigenstate to couple to a charged lepton $e_\alpha$ and a $W$-boson, but this coupling is suppressed by a factor $\theta_{\alpha i}$ and the amplitude $\mathcal{M}_{\alpha i}$ for $N_i \to \nu_\alpha \gamma$ is proportional to exactly the same factor, as indicated in the diagrams.

\begin{figure}[pb]
\centerline{
\begin{tabular}{lr}
\psfig{file=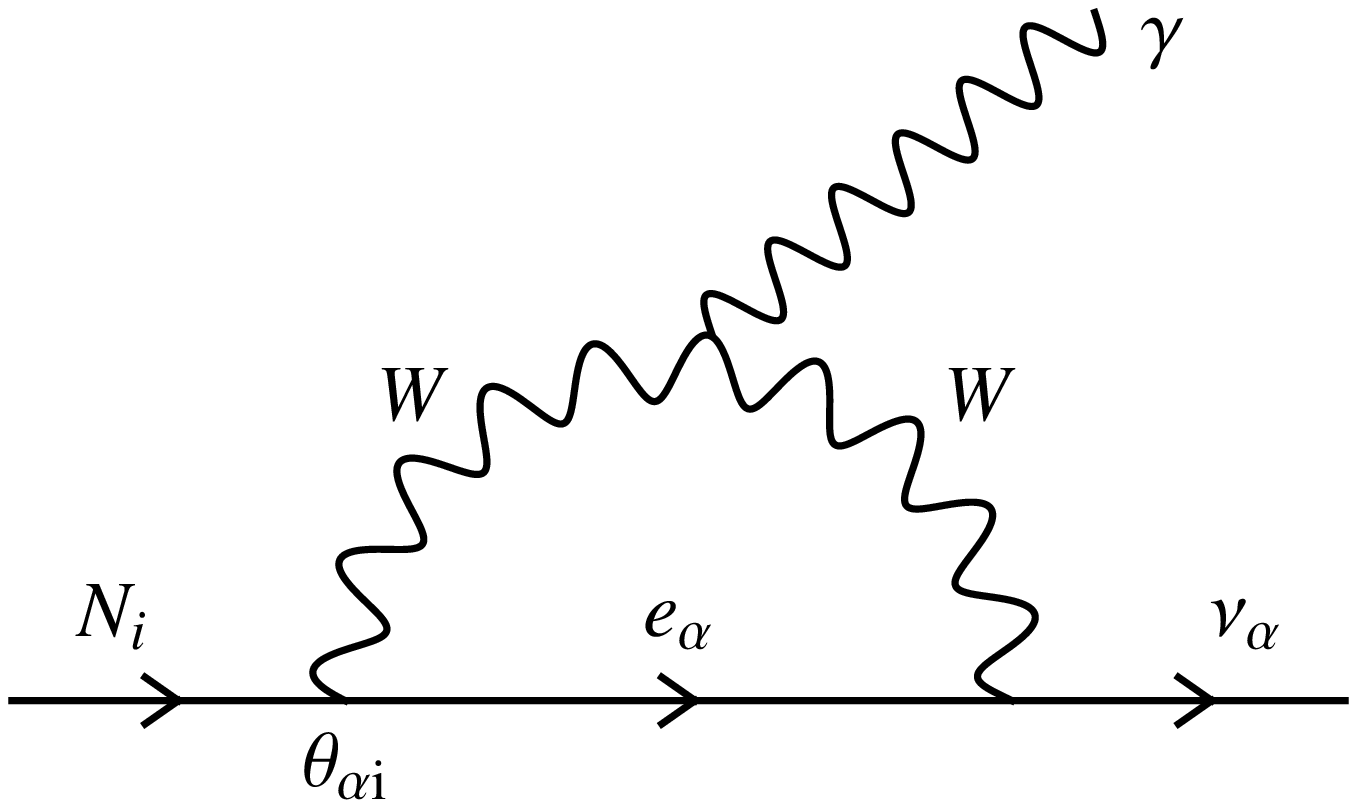,width=6.0cm} & \psfig{file=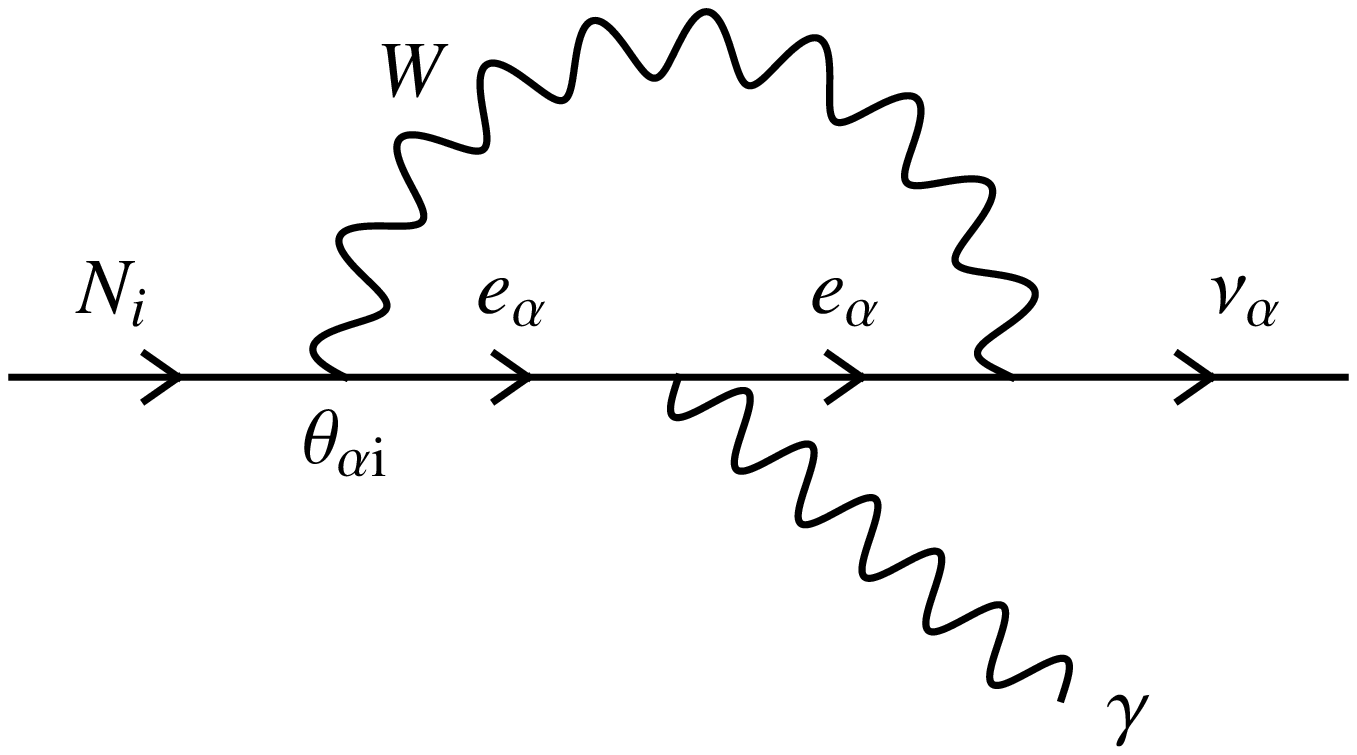,width=6.0cm}
\end{tabular}
}
\vspace*{8pt}
\caption{\label{fig:Nu_nu_gamma}The Feynman diagrams for the radiative decay of the sterile neutrino, $N_i \to \nu_\alpha \gamma$.}
\end{figure}

Note that we can produce any active neutrino flavor $\alpha$ with a probability proportional to $|\mathcal{M}_{\alpha i}|^2$. However, since we have no way to measure the final state flavor $\alpha$, we have to sum over all possibilities, $\alpha = e, \mu, \tau$. Note that, although we cannot determine the flavor of the final state active neutrino, we nevertheless know by laboratory experiments that the different flavors $\nu_{e,\mu,\tau}$ are fundamentally distinct. This is why we have to sum over probabilities (\emph{incoherent summation}) rather than over amplitudes (\emph{coherent summation}), differently from what we did for, \emph{e.g.}, neutrino-less double beta decay, cf.\ Sec.~\ref{sec:neutrino_mix_masses}. Thus, the decay rate $N_i \to \nu \gamma$ into all possible active neutrino flavors must be proportional to the following quantity,
\begin{equation}
 \sum_\alpha |\theta_{\alpha i}|^2,
 \label{eq:as-constraints}
\end{equation}
which is called the \emph{$i$-th active-sterile mixing angle square} and usually denoted as $\theta_i^2$. This is the quantity we can put an upper bound on by a non-observation of the monoenergetic photon $\gamma$.

The precise bound originating from different satellite experiments can be found in Refs.~\refcite{Canetti:2012kh,Canetti:2012vf} (based on Refs.~\refcite{Dolgov:2000ew,Abazajian:2001vt,Boyarsky:2005us,Boyarsky:2006fg,RiemerSorensen:2006fh,Abazajian:2006yn,Watson:2006qb,Boyarsky:2006ag,Abazajian:2006jc,Boyarsky:2007ay,Boyarsky:2007ge,Loewenstein:2008yi}). For our purpose, a simplified version of this bound as used in Ref.~\refcite{Boyarsky:2009ix} is perfectly sufficient:
\begin{equation}
 \theta_i^2 \lesssim 1.8\cdot 10^{-5} \left( \frac{1~{\rm keV}}{M_i} \right)^5.
 \label{eq:X_simp}
\end{equation}
Note, however, that more recent non-observations of the X-ray line for certain galaxies yield even stronger bounds, cf.\ Refs.~\refcite{Watson:2011dw,Loewenstein:2012px}. Corresponding updates of the simplified bound in Eq.~\eqref{eq:X_simp} are available~\cite{keV0nbb}.

Let us now get a more precise understanding of the connection between the active-sterile mixing and the entries in the full neutrino mass matrix. As we have just seen, the definition of the $i$-th active-sterile mixing angle $\theta_i$ is
\begin{equation}
 \theta_i^2 \equiv \sum_\alpha |\theta_{\alpha i}|^2,\ \ \ {\rm where}\ \ \ \theta_{\alpha i} \equiv \overline{U}_{\alpha,3+i} = \left[ m_D^* {M_R^{-1}}^* V_R \right]_{\alpha i}.
 \label{eq:theta1_def}
\end{equation}
Note that we have expressed the generation-dependent active-sterile mixing $\theta_{\alpha i}$ in terms of the full neutrino mixing matrix $\overline{U}$ as defined in \ref{sec:seesaw_diag}. In the basis where the RH neutrino mass matrix is diagonal (and real), $M_R={\rm diag}(M_1, M_2, M_3)$, we have $V_R = \mathbf{1}$, and the above formula simplifies to
\begin{equation}
 \theta_{\alpha i} = \sum _k {m_D^*}_{\alpha k} M_k^{-1} \delta_{k i} = \frac{(m_D^*)_{\alpha i}}{M_i}.
 \label{eq:theta1_simp}
\end{equation}
This can be simplified further,
\begin{equation}
 \theta_i^2 = \sum_\alpha \frac{(m_D)_{\alpha i} (m_D^*)_{\alpha i}}{M_i^2} = \frac{1}{M_i^2} \sum_\alpha (m_D^\dagger)_{i \alpha} (m_D)_{\alpha i} = \frac{(m_D^\dagger m_D)_{i i}}{M_i^2}.
 \label{eq:theta1_simp_further}
\end{equation}
We can now see how to understand the active-sterile mixing $\theta_i^2$: it is simply the proportionality factor between the sterile neutrino mass $M_i$ of $N_i$ and the corresponding active-neutrino mass $m_i$ of $\nu_i$ in a seesaw type~I framework, cf.\ \ref{sec:seesaw_diag}. Hence, it should also be intuitively clear that $\theta_i^2$ indeed parametrizes the small ``active fraction'' of the otherwise sterile mass eigenstate $N_i$, and it can thus be used to quantify how often a neutrino which is mainly sterile can decay via a coupling to active partners, such as the charged leptons and the $W$-bosons in Fig.~\ref{fig:Nu_nu_gamma}.

\subsection{\label{sec:keV_general_seesaw}Seesaw theorems}	

Proceeding with some more technicalities, we will now shortly discuss two very useful theorems which hold for seesaw type~I situations.

The first theorem~\cite{Merle:2012xq} deals with the potential issue of applying the seesaw formula, Eq.~\eqref{eq:seesawI_2}, to models involving keV sterile neutrinos:

\begin{theorem}
\emph{keV seesaw practicality theorem}~\cite{Merle:2012xq}\\
For any working keV neutrino model (\emph{i.e.}, it is consistent with the X-ray bound), the seesaw formula works. This conclusion can only be altered if the decay mode $N\to \nu \gamma$ does not exist (or is not effective) for some reason.
\end{theorem}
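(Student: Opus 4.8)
The plan is to locate the dimensionless parameter that actually controls the validity of the seesaw formula, Eq.~\eqref{eq:seesawI_2}, and then to show that the X-ray bound, Eq.~\eqref{eq:X_simp}, forces precisely this parameter to be tiny in any viable model. The crucial point is that the quantity $\theta_i$ constrained by the X-ray line is not merely an auxiliary observable but is exactly the ratio governing the block-diagonalization of the full mass matrix in Eq.~\eqref{eq:seesawI_1}. The naive worry behind the theorem --- that one must ``divide by a keV'' --- is therefore a red herring, because what enters the expansion is never $M_i^{-1}$ in isolation.

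First I would recall from \ref{sec:seesaw_diag} that block-diagonalizing the matrix in Eq.~\eqref{eq:seesawI_1} for $M_R \gg m_D$ proceeds through a perturbative rotation whose off-diagonal block is, to leading order, $m_D M_R^{-1}$. In the basis where $M_R = \mathrm{diag}(M_1, M_2, M_3)$, the entries of this rotation are precisely the active-sterile mixings $\theta_{\alpha i} = (m_D^*)_{\alpha i}/M_i$ of Eq.~\eqref{eq:theta1_simp}. Hence the seesaw expansion is an expansion in the $\theta_{\alpha i}$, and the formula $m_\nu = -m_D M_R^{-1} m_D^T$ is reliable precisely when $|\theta_{\alpha i}| \ll 1$ for all $\alpha, i$ --- equivalently, when $\theta_i^2 = \sum_\alpha |\theta_{\alpha i}|^2 \ll 1$, cf.\ Eq.~\eqref{eq:theta1_simp_further}. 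The smallness of $M_i$ can be, and in a working model must be, compensated by a correspondingly small $m_D$, so that the full ratio $\theta_i$ remains under control.

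Next I would invoke the bound itself. For the keV state $N_i$, Eq.~\eqref{eq:X_simp} reads $\theta_i^2 \lesssim 1.8 \times 10^{-5}\,(1~\mathrm{keV}/M_i)^5$. Combined with the Tremaine-Gunn lower bound $M_i \gtrsim 1$~keV from Sec.~\ref{sec:astro_WCDM}, the prefactor satisfies $(1~\mathrm{keV}/M_i)^5 \leq 1$, so that any phenomenologically viable keV model obeys $\theta_i^2 \lesssim 1.8 \times 10^{-5}$, i.e.\ $\theta_i \lesssim 4 \times 10^{-3}$. This is far below unity, hence the seesaw expansion parameter for the dangerous keV generation is automatically tiny and the formula is legitimate. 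The heavier states $N_{2,3}$ are unproblematic by construction, since their large masses guarantee the standard seesaw hierarchy $M_{2,3} \gg m_D$ independently of the X-ray line.

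Finally, the caveat in the statement follows by tracing back the logical chain. The entire argument rests on the X-ray bound, which in turn originates solely from the radiative decay $N \to \nu \gamma$ of Fig.~\ref{fig:Nu_nu_gamma}. If that decay is absent or strongly suppressed --- for instance if a symmetry forbids the relevant coupling, or if the loop amplitude cancels --- then no X-ray line is predicted, Eq.~\eqref{eq:X_simp} no longer applies, and $\theta_i$ is free to be $\mathcal{O}(1)$, in which case the seesaw formula may genuinely break down. The main obstacle in making the argument watertight is the first step: one must establish cleanly that the block-diagonalization expansion parameter coincides with the observable $\theta_i$, and not with some other combination of $m_D$ and $M_R$ whose smallness is \emph{not} enforced by the X-ray data. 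This identification is exactly what the explicit diagonalization machinery of \ref{sec:seesaw} supplies, and it is the heart of the theorem.
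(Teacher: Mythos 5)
Your proposal is correct and follows essentially the same route as the paper's own proof: identify the seesaw expansion parameter with the active-sterile mixing $\theta_{\alpha 1} = (m_D^*)_{\alpha 1}/M_1$ in the basis where $M_R$ is diagonal, then combine the X-ray bound of Eq.~\eqref{eq:X_simp} with the $M_1 \gtrsim 1$~keV structure-formation bound to conclude $\theta_1^2 \ll 1$, hence $|m_{D,\alpha 1}| \ll M_1$ for each flavor. The only cosmetic difference is that the paper phrases the final inequality dimensionfully, $(m_D^\dagger m_D)_{11} \lesssim 1.8\cdot 10^{-5}~\mathrm{keV}^2 \ll M_1^2$, while you keep the argument in terms of the dimensionless $\theta_1^2$; the content is identical.
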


\begin{proof}
The active-sterile mixing angle $\theta_{\alpha i}$ is given in Eq.~\eqref{eq:theta1_def}. In a basis where $M_R$ is diagonal we have $\theta_{\alpha i} = \frac{(m_D^*)_{\alpha i}}{M_i}$, cf.\ Eq.~\eqref{eq:theta1_simp}. If $M_1 = \mathcal{O}({\rm keV}) \ll M_{2,3}$, then the only critical property to be fulfilled for the seesaw to work is $|\theta_{\alpha 1}|  \ll 1$. From the rough X-ray bound, Eq.~\eqref{eq:X_simp}, in combination with the rough Lyman-$\alpha$ bound, $M_1 \gtrsim 1$~keV (cf.\ Sec.~\ref{sec:astro_WCDM}), it follows immediately: $(m_D^\dagger m_D)_{11} \lesssim 1.8\cdot 10^{-5}\ {\rm keV}^2 \left( \frac{1\ {\rm keV}}{M_1}\right)^3 \lesssim 1.8\cdot 10^{-5}\ {\rm keV}^2 \ll M_1^2$, which implies $|{m_D}_{e 1}|^2 + |{m_D}_{\mu 1}|^2 + |{m_D}_{\tau 1}|^2 \ll M_1^2$, and hence, due to the sum over absolute values, $|{m_D}_{\alpha 1}| \ll M_1$ for $\alpha = e, \mu, \tau$.
\end{proof}

Thus, as long as a model for keV sterile neutrinos respects the X-ray bound, one can always apply the seesaw formula. If a certain model violates the X-ray bound it is either excluded, in which case we do not need to talk about it, or the decay mode $N_1 \to \nu \gamma$ is somehow forbidden or strongly suppressed, \emph{e.g.} by a symmetry stabilizing the keV neutrino $N_1$, cf.\ Ref.~\refcite{Allison:2012qn}. In the latter case one might have to check the validity of the seesaw formula, if applicable, but in most models one can blindly use it if the X-ray bound is known to make no problems. Note that the above argumentation is not changed if there are additional type~II contributions to the light neutrino mass matrix, since any terms in $m_L$ do not contribute to the active-sterile mixing, cf.\ \ref{sec:seesaw_diag}.

The second theorem was first mentioned in Ref.~\refcite{Schechter:1980gr} and later on brought into a more modern form in Ref.~\refcite{Xing:2007uq}. It is very useful for checking the consistency of diagonalizations of complicated neutrino mass matrices:

\begin{theorem}
\emph{Seesaw fairplay rule}~\cite{Schechter:1980gr,Xing:2007uq}\\
In a seesaw type~I setting with $p$ LH doublets and $q<p$ RH neutrinos, one obtains at most $q$ light neutrinos with a non-zero mass, while at least $(p-q)$ neutrinos remain massless. In particular, in a setting with 3 LH doublets and only 2 RH neutrinos, one light neutrinos is exactly massless.
\end{theorem}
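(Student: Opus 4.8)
The plan is to read the statement directly off the seesaw formula of Eq.~\eqref{eq:seesawI_2}, reinterpreting it as a statement about matrix rank. With $p$ LH doublets and $q<p$ RH neutrinos, the Dirac mass $m_D$ is a rectangular $p\times q$ matrix, while $M_R$ is a $q\times q$ matrix which must be invertible for the seesaw to be defined at all. The resulting light neutrino mass matrix $m_\nu = -m_D M_R^{-1} m_D^T$ is then a (complex, symmetric) $p\times p$ matrix, and the entire content of the theorem is a bound on how many of its mass eigenvalues can be nonzero.

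First I would establish the rank bound $\mathrm{rank}(m_\nu)\le q$. The cleanest way is to observe that every column of $m_\nu$ is a linear combination of the columns of $m_D$, since $m_\nu = m_D\,(-M_R^{-1}m_D^T)$; hence the column space of $m_\nu$ is contained in that of $m_D$, whose dimension is at most $\min(p,q)=q$. Equivalently, one invokes $\mathrm{rank}(AB)\le\min(\mathrm{rank}\,A,\mathrm{rank}\,B)$ together with the invertibility of $M_R$, which guarantees that the factor $M_R^{-1}$ cannot lower the rank.

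Next I would translate ``rank'' into ``number of massive neutrinos''. The physical masses are the singular values of $m_\nu$, read off from its Takagi (symmetric singular-value) decomposition $m_\nu = U^*\,\hat m\,U^\dagger$ with $\hat m = \mathrm{diag}(m_1,\dots,m_p)$ and $m_i\ge 0$; the number of strictly positive $m_i$ equals $\mathrm{rank}(m_\nu)$. Combining this with the rank bound yields at most $q$ massive and therefore at least $p-q$ massless light neutrinos, and specializing to $(p,q)=(3,2)$ gives at least one -- generically exactly one -- massless state.

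The step requiring the most care is arguing that this masslessness is \emph{exact}, rather than an artifact of the leading-order seesaw truncation. I would back it up by applying the same rank argument to the full $(p+q)\times(p+q)$ neutrino matrix $\mathcal{M}=\begin{pmatrix}0 & m_D\\ m_D^T & M_R\end{pmatrix}$ of Eq.~\eqref{eq:seesawI_1}: its lower block of rows $(m_D^T,\,M_R)$ contributes rank $q$ (as $M_R$ is invertible), while its upper block $(0,\,m_D)$ contributes at most rank $q$, so $\mathrm{rank}(\mathcal{M})\le 2q$ and $\mathcal{M}$ possesses at least $(p+q)-2q=p-q$ exactly vanishing eigenvalues. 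Since the $q$ heavy states are all massive, these exact zeros must reside in the light sector, confirming that the $p-q$ massless neutrinos remain massless to all orders and not merely at $\mathcal{O}(m_D^2/M_R)$.
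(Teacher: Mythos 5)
Your proof is correct, and its decisive step coincides with the paper's own argument: the paper likewise bounds the rank of the full $(p+q)\times(p+q)$ matrix of Eq.~\eqref{eq:seesawI_1} by $2q$, only it counts columns (at most $q$ of the first $p$ columns can be linearly independent, because their nonzero entries live entirely in the $q$-dimensional $m_D^T$ block, plus at most $q$ more from the remaining columns), whereas you count rows; for a symmetric matrix these are the same argument. What you add beyond the paper is worthwhile: the preliminary rank bound on the effective matrix $m_\nu = -m_D M_R^{-1} m_D^T$ of Eq.~\eqref{eq:seesawI_2}, together with the Takagi-decomposition remark, makes explicit why ``rank'' equals ``number of massive states'' for a complex symmetric mass matrix---a point the paper leaves implicit---and your distinction between masslessness at leading seesaw order and \emph{exact} masslessness, resolved by the full-matrix argument, addresses a subtlety the paper's proof never needs to raise, since it works with the full matrix from the start. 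One small remark: invertibility of $M_R$ is not actually needed for the bound $\mathrm{rank}(\mathcal{M})\le 2q$, because the lower block of rows $(m_D^T,\,M_R)$ has rank at most $q$ regardless; you need it only for the seesaw formula itself and to argue that the $q$ heavy states are all massive, i.e., that the exact zero eigenvalues necessarily reside in the light sector.
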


\begin{proof}
The proof of this theorem is trivial if one visualizes the structure of the corresponding matrix. In Eq.~\eqref{eq:seesawI_1}, the upper left zero block is a $p\times p$ matrix. In order for the first $p$ column vectors of the full mass matrix to be linearly independent, one would need $q\geq p$ components to form enough unit vectors. Due to $q<p$, this is never possible and at most $q$ of the first $p$ column vectors can be linearly independent. Even if the remaining $q$ column vectors consisting of the right $m_D$ and $M_R$ blocks are all linearly independent, the maximum rank of the full mass matrix is $2q$. Hence, the matrix has at most $2q$ non-zero mass eigenvalues, and at least $(p-q)$ of the light neutrinos remain massless.
\end{proof}

This theorem is exactly what is at work in, \emph{e.g.}, the model to be presented in Sec.~\ref{sec:keV_FN_mixed}: since the keV sterile neutrino is decoupled from the mass matrix, the lightest neutrino remains exactly massless in that approximation, cf.\ Eq.~\eqref{eq:BRZ_eignvals}. Note that, however, a small but non-zero mass is nevertheless generated at 2-loop level by diagrams involving only SM-fields~\cite{Davidson:2006tg}.

\subsection{\label{sec:keV_general_cancellation}Cancellations in neutrino-less double beta decay}	

We also want to present a theorem discovered in Refs.~\refcite{Blennow:2010th,Barry:2011fp}, which reveals an interesting effect of light sterile neutrinos on the effective mass $|m_{ee}|$ in neutrino-less double beta decay, cf.\ Eq.~\eqref{eq:m_ee}. For a definition of \emph{form dominance}, see \ref{sec:seesaw_FD}.

\begin{theorem}
\emph{$0\nu\beta\beta$ cancellation theorem}~\cite{Blennow:2010th,Barry:2011fp}\\
In a seesaw type~I situation where all RH neutrinos have mass eigenvalues below $100$~MeV, the effective neutrino mass $|m_{ee}|$ measured in neutrino-less double beta decay vanishes exactly. If the neutrino mass matrix is in addition form dominant, then this cancellation happens piecewise for each generation. In such a case, even having only a few RH neutrino masses below $100$~MeV leads to partial cancellations in $|m_{ee}|$.
\end{theorem}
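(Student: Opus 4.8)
The plan is to start from the general expression for the effective mass in neutrino-less double beta decay, valid for an arbitrary number of Majorana mass eigenstates, and to track how the nuclear matrix element $\mathcal{N}(m_j)$ depends on the mass of the exchanged neutrino. The crucial physics input is the nuclear Fermi momentum, $|p|\sim 100$~MeV, which sets the scale of the typical momentum transfer. The neutrino propagator entering the amplitude behaves as $m_j/(p^2-m_j^2)$, so for any eigenstate with $m_j\ll |p|$ one has $m_j/(p^2-m_j^2)\approx m_j/p^2$: the contribution is linear in $m_j$ and carries a common, essentially mass-independent nuclear matrix element. First I would therefore argue that if every mass eigenstate --- including the RH (sterile) ones --- lies below $100$~MeV, then all of them sit in this ``light'' regime, and the amplitude is proportional to the single sum $\sum_j \overline{U}_{ej}^2\, m_j$ running over \emph{all} eigenstates $j$.

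The core of the first statement is then purely algebraic. The full neutrino mass matrix $\mathcal{M}$ of Eq.~\eqref{eq:seesawI_1} is complex symmetric and is diagonalized by a single unitary matrix $\overline{U}$ via $\overline{U}^T \mathcal{M}\,\overline{U}=\mathrm{diag}(m_1,\ldots)$, so that $\mathcal{M}=\overline{U}^*\,\mathrm{diag}(m_j)\,\overline{U}^\dagger$ and hence $\mathcal{M}_{\alpha\beta}=\sum_j m_j\,\overline{U}^*_{\alpha j}\overline{U}^*_{\beta j}$. In particular $\mathcal{M}_{ee}=\sum_j m_j\,(\overline{U}^*_{ej})^2$, which up to an irrelevant overall conjugation is exactly the light-regime effective mass; thus $|m_{ee}|=|\mathcal{M}_{ee}|$. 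But in pure seesaw type~I the $(e,e)$ entry lies inside the upper-left zero block of $\mathcal{M}$, i.e.\ $\mathcal{M}_{ee}=0$. Hence $|m_{ee}|=0$ identically, establishing the exact cancellation: the usual active contribution of Eq.~\eqref{eq:m_ee} is cancelled by the now-unsuppressed contribution of the light sterile states.

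For the second statement I would invoke form dominance (cf.\ \ref{sec:seesaw_FD}): in the basis where $M_R=\mathrm{diag}(M_1,\ldots)$, each column of the Dirac matrix $m_D$ is aligned with one column of the matrix $U_\nu$ diagonalizing the light block, so that, up to a phase, $(m_D)_{\alpha i}\propto\sqrt{m_i M_i}\,(U_\nu)_{\alpha i}$ and the seesaw sum $m_\nu=-\sum_i (m_D)_{\cdot i}(m_D)_{\cdot i}^T/M_i$ reduces term by term to the spectral decomposition $m_\nu=\sum_i m_i\,(U_\nu)_{\cdot i}(U_\nu)_{\cdot i}^T$. Using the active-sterile mixing $\theta_{ei}=(m_D^*)_{ei}/M_i$ from Eq.~\eqref{eq:theta1_simp}, the light-regime contribution of the $i$-th heavy state, $M_i\,\theta_{ei}^2=(m_D^*)_{ei}^2/M_i$, is seen to equal $-m_i (U_\nu^*)_{ei}^2$, i.e.\ minus the active contribution of the very same generation. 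Thus the global identity $\mathcal{M}_{ee}=0$ decomposes into independent per-generation cancellations $m_i(U_\nu^*)_{ei}^2+M_i\theta_{ei}^2=0$. The third statement follows immediately: if only a subset of the $M_i$ lie below $100$~MeV, only those generations contribute in the light regime and cancel their active partners, leaving a partial cancellation in $|m_{ee}|$.

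The main obstacle I anticipate is not the algebra but the physics input at the $100$~MeV threshold: one must justify that all eigenstates below the Fermi momentum can be treated with a single, common light-neutrino nuclear matrix element, so that their contributions truly combine into the one linear sum $\sum_j \overline{U}_{ej}^2 m_j$ that coincides with $\mathcal{M}_{ee}$. Strictly, $\mathcal{N}(m_j)$ varies smoothly with $m_j$, so the ``exact'' vanishing holds in the idealization of a mass-independent matrix element throughout $m_j\lesssim 100$~MeV, and I would state the theorem with this understanding. A secondary but minor point is the bookkeeping of complex phases in the Takagi factorization and in the form-dominance relation, which is harmless: it is consistent between the active and sterile terms and drops out entirely once one passes to the modulus $|m_{ee}|$.
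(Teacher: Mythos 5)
Your proof is correct and follows essentially the same route as the paper's: the contributions of all states with masses below the nuclear momentum scale $\simeq 100$~MeV are summed linearly as $\sum_j \overline{U}_{ej}^2 m_j$, identified with the $(e,e)$ entry of the full $6\times 6$ mass matrix (which vanishes by the zero block of the type~I seesaw structure), and the form-dominance relation combined with $\theta_{e i} = (m_D^*)_{e i}/M_i$ then yields the per-generation cancellation $\theta_{ei}^2 M_i = -U_{ei}^2 m_i$. The only differences are cosmetic: you spell out the spectral-decomposition identity $\mathcal{M}_{ee}=\sum_j m_j(\overline{U}^*_{ej})^2$ explicitly and flag the conjugation bookkeeping, which the paper also glosses over and which indeed drops out once the modulus is taken.
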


\begin{proof}
Whenever the mass $m_k$ of a neutrino is below the nuclear momentum transfer in $0\nu\beta\beta$, $|\vec{q}| \simeq 100$~MeV, it contributes to the decay amplitude by light neutrino exchange~\cite{Rodejohann:2011mu}, and this contribution is given by $\overline{U}_{e k}^2 m_k$. Hence, for three LH and three RH neutrinos with all masses below $|\vec{q}|$, the effective mass is given by $|m_{ee}|=\left|\sum^3_{k=1} \overline{U}^2_{e k} m_k + \sum^3_{k=1} \overline{U}^2_{e,3+k} M_k \right| = \left[M_\nu^{6 \times 6}\right]_{ee}$, which is zero by definition in a type~I seesaw setting. If the mass matrix is in addition form dominant, cf.\ \ref{sec:seesaw_FD}, then Eq.~\eqref{eq:FD_9} holds in the basis where $M_R$ is diagonal. One can immediately conclude that $(m_D^*)_{e k} = i \sqrt{m_k M_k} U_{e k}$ and that $\overline{U}_{e,3+k} = \theta_{e k} = i \sqrt{\frac{m_k}{M_k}} U_{e k}$, cf.\ Eq.~\eqref{eq:theta1_def}. One readily obtains $\overline{U}^2_{e,3+k} M_k = \theta_{e k}^2 M_k = - \frac{m_k}{M_k} U_{e k}^2 M_k = - U_{e k}^2 m_k$.
\end{proof}

Note that this theorem implies in particular that, in models with one keV sterile neutrino $N_1$ and two other neutrinos $N_{2,3}$ with $M_{2,3} > |\vec{q}| \simeq 100$~MeV, the effective mass for a form dominant setting is given by~\cite{keV0nbb}
\begin{equation}
 |m_{ee}| = |m_2 s_{12}^2 c_{13}^2 + m_3 s_{13}^2 e^{i(\beta-\alpha)}|,
 \label{eq:0nbb-canc}
\end{equation}
and similar if the keV sterile neutrino originates from a different generation. This cancellation was not recognized in Refs.~\refcite{Bezrukov:2005mx,Asaka:2011pb}, which contained the first discussion of $0\nu\beta\beta$ in the context of the $\nu$MSM.\footnote{In particular, this references focused on the ``standard'' production mechanisms (DW and SF) of keV sterile neutrino DM, in which case the lower bounds on the keV mass are strongest.} Instead, it was shown that the contribution from the keV sterile neutrino is tiny, which is why it was neglected. Hence, the cancellation could not have possibly been taken into account in these references. In particular, the formula for $|m_{ee}|$ obtained in in Ref.~\refcite{Bezrukov:2005mx} for the case of inverted mass ordering would be wrong if indeed $N_1$ was the keV neutrino and form dominance held: instead of the standard contribution, as obtained in Ref.~\refcite{Lindner:2005kr}, the correct expression would be given by $|m_{ee}| \approx \sqrt{|\Delta m_{31}^2|} s_{12}^2 c_{13}^2$, which is obtained from Eq.~\eqref{eq:0nbb-canc} for the limit $m_3 < m_1 < m_2$ with $m_3 \to 0$.

\subsection{\label{sec:keV_general_NoGo}A little No-Go theorem}	

The final theorem has appeared in a slightly simplified version in Refs.~\refcite{Asaka:2005an,Boyarsky:2006jm} and was later on generalized in Ref.~\refcite{Merle:2012xq}. Note that in the theorem and in the proof we will make use of the so-called \emph{Casas-Ibarra parametrization}, the details of which are explained in \ref{sec:seesaw_CI}.

\begin{theorem}
\emph{No-Go theorem}~\cite{Asaka:2005an,Boyarsky:2006jm,Merle:2012xq}\\
A quasi-degenerate light neutrino spectrum and keV sterile neutrino Dark Matter contradict each other in a seesaw type~I setting with a real Casas-Ibarra matrix $R$, if the sterile neutrino mainly decays via $N_1 \to \nu \gamma$.
\end{theorem}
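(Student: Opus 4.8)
The plan is to reduce both hypotheses---quasi-degeneracy of the active spectrum and the stability of the keV state against $N_1\to\nu\gamma$---to a single numerical statement about the active-sterile mixing $\theta_1^2$ of the keV neutrino, and then to show the two are mutually exclusive. The right bookkeeping device is the Casas-Ibarra parametrization (cf.~\ref{sec:seesaw_CI}), which expresses the Dirac block as $m_D = i\,U\,{\rm diag}(\sqrt{m_1},\sqrt{m_2},\sqrt{m_3})\,R^T\sqrt{M_R}$, with $U$ the PMNS matrix, $m_{1,2,3}$ the light masses, $M_R={\rm diag}(M_1,M_2,M_3)$, and $R$ orthogonal, $RR^T=\mathbf{1}$. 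This is the natural variable because it builds in the measured light spectrum by construction and isolates the remaining freedom in $R$, whose assumed reality is precisely the hypothesis of the theorem.

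First I would work in the basis where $M_R$ is diagonal (so $V_R=\mathbf{1}$) and insert the Casas-Ibarra form into $\theta_1^2 = (m_D^\dagger m_D)_{11}/M_1^2$ from Eq.~\eqref{eq:theta1_simp_further}. The PMNS matrix and the overall phase drop out through $U^\dagger U=\mathbf{1}$, leaving $m_D^\dagger m_D = \sqrt{M_R}\,R^*\,{\rm diag}(m_1,m_2,m_3)\,R^T\sqrt{M_R}$ and hence
\begin{equation}
 \theta_1^2 = \frac{1}{M_1}\sum_j m_j\,|R_{1j}|^2 .
 \label{eq:nogo_theta}
\end{equation}
This step is convention-independent: any reasonable variant of Casas-Ibarra produces the same weighted sum of light masses.

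The decisive move is to use the two remaining hypotheses. Reality of $R$ turns $|R_{1j}|^2$ into $R_{1j}^2$, and row-orthonormality gives $\sum_j R_{1j}^2 = (RR^T)_{11}=1$; quasi-degeneracy, $m_j\simeq m_0$ for all $j$, then collapses Eq.~\eqref{eq:nogo_theta} to the clean relation $\theta_1^2 \simeq m_0/M_1$, with no residual dependence on the mixing pattern. Confronting this with the X-ray bound Eq.~\eqref{eq:X_simp} gives $m_0/M_1 \lesssim 1.8\cdot 10^{-5}(1\,{\rm keV}/M_1)^5$, \emph{i.e.}\ $m_0 \lesssim 1.8\cdot 10^{-5}\,{\rm keV}\,(1\,{\rm keV}/M_1)^4$. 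Since structure formation forces $M_1\gtrsim 1$~keV (cf.\ Sec.~\ref{sec:astro_WCDM}), the right-hand side is largest at $M_1=1$~keV, yielding $m_0\lesssim 1.8\cdot 10^{-2}$~eV. This lies well below the scale $m_0\gtrsim 0.1$~eV that defines a quasi-degenerate spectrum (where the common mass dominates over $\sqrt{|\Delta m_{31}^2|}\approx 0.05$~eV), so the two requirements cannot hold simultaneously, which is the claim.

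The proof is short, so the real work is conceptual rather than computational: one must see exactly where each hypothesis enters. The decay assumption is what makes Eq.~\eqref{eq:X_simp} applicable at all---were $N_1\to\nu\gamma$ forbidden by a stabilizing symmetry, the X-ray line would be absent and the argument would collapse, matching the proviso in the statement. Reality of $R$ is what pins $\sum_j R_{1j}^2=1$; for a genuinely complex orthogonal $R$ one has instead $\sum_j |R_{1j}|^2 = 1 + 2\sum_j ({\rm Im}\,R_{1j})^2 \geq 1$, so relaxing reality cannot push $\theta_1^2$ below $m_0/M_1$ in the degenerate case but only raises it. I would flag this as the single point deserving care: the reality hypothesis is not a mere technical convenience but singles out the most favourable, minimal-mixing configuration, so that the contradiction is genuine precisely because it already fails in that best case.
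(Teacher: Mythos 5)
Your proof is correct and follows essentially the same route as the paper's: both insert the Casas--Ibarra form of $m_D$ into $\theta_1^2=(m_D^\dagger m_D)_{11}/M_1^2$, use reality and row-orthogonality of $R$ together with quasi-degeneracy to collapse this to $\theta_1^2\simeq m_0/M_1$, and then confront that relation with the X-ray bound, Eq.~\eqref{eq:X_simp}, and the structure-formation limit on $M_1$. Your additions (making the numerical contradiction explicit, and observing that a complex orthogonal $R$ can only raise $\sum_j|R_{1j}|^2$ above unity) are harmless refinements of the same argument.
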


\begin{proof}
Inverting the known Casas-Ibarra formula, Eq.~\eqref{eq:CI_5}, one obtains $m_D = i U^* {\rm diag}(\sqrt{m_1}, \sqrt{m_2}, \sqrt{m_3}) R^T {\rm diag}(\sqrt{M_1}, \sqrt{M_2}, \sqrt{M_3})$. Using Eq.~\eqref{eq:theta1_simp} yields $\theta_{\alpha i} = - i \sum_{k,l,m} U_{\alpha k} \sqrt{m_k} \delta_{kl} R^\dagger_{lm} \frac{1}{\sqrt{M_m}} \delta_{mi} = - i \sum_{k,l} \sqrt{\frac{m_k}{M_i}} U_{\alpha k} \delta_{kl} R^*_{il} = - i \sum_k \sqrt{\frac{m_k}{M_i}} U_{\alpha k} R^*_{ik}$. With the help of Eq.~\eqref{eq:theta1_def}, one can show that $\theta_i^2 \equiv \sum_\alpha |\theta_{\alpha i}|^2 = \sum_\alpha \sum_{k,l} \frac{\sqrt{m_k m_l}}{M_i} U_{\alpha k} U_{\alpha l}^* R^*_{ik} R_{il} = \sum_{k,l} \frac{\sqrt{m_k m_l}}{M_i} \left( \sum_\alpha U_{l \alpha}^\dagger U_{\alpha k} \right) R_{il} R^\dagger_{ki} = \sum_k \frac{m_k}{M_i} R_{ik} R^\dagger_{ki}$. For quasi-degenerate light neutrinos, one has $m_k \simeq m_0$ for $k = 1, 2, 3$, and if the orthogonal matrix $R$ is real then $R^\dagger = R^T$ and $\sum_k R_{ik} R^\dagger_{ki} = \delta_{ii} = 1$. For $i=1$, one finally arrives at $\theta_1^2 = \frac{m_0}{M_1}$. Applying the X-ray bound, Eq.~\eqref{eq:X_simp} yields a relation that cannot be fulfilled: the lower limits on $M_1$ from structure formation require $m_0 \ll \sqrt{\Delta m^2_A}$ [cf.\ Sec.~\ref{sec:astro_WCDM} and Eq.~\eqref{eq:X_simp}], while quasi-degeneracy would require $m_0 \gg \sqrt{\Delta m^2_A}$.
\end{proof}

The detailed implications of and ways around the No-Go theorem are discussed in Ref.~\refcite{Merle:2012xq}. The important point we want to stress is that $R$ is real in particular if $CP$ invariance holds. Hence, models for keV sterile neutrinos with $CP$ invariance will be in trouble if they predict a quasi-degenerate light neutrino spectrum. On the other hand, if the light neutrinos were experimentally known to have a quasi-degenerate mass pattern and if at the same time keV sterile neutrinos were the Dark Matter, this would implicitly prove the existence of $CP$ violation in the neutrino sector, as long as the light neutrino mass is generated by a seesaw type~I mechanism.\\

Equipped with these general considerations and theorems, we are now ready to enter the central discussion of this review, exploring the known models for keV sterile neutrino Dark Matter.

\section{\label{sec:keV}Models for keV Sterile Neutrinos}	

We will now turn to the core chapter of this review, where we will give a supposed-to-be-complete discussion of the models of keV neutrinos on the market. Hereby, we will not follow the historical order in which the different models have been developed, but we will rather apply a more pedagogical way through the jungle of possibilities, in order to give the less experienced reader a solid guideline, and to provide the more experienced reader with a maybe slightly different viewpoint.

In the course of the chapter, we will strongly distinguish in terminology between \emph{models} and \emph{scenarios}. In the definition used here, we will call a certain setting \emph{scenario}, whenever it can accommodate for a keV sterile neutrino, but does not give any \emph{explanation} for the appearance of the keV scale. One typical example for a \emph{scenario} is the $\nu$MSM~\cite{Asaka:2005an}, although it is actually termed ``model'' in the corresponding reference. On the other hand, we call a certain setting a \emph{model} whenever there is an explanation for the appearance of the keV scale or, rather, for a suitable mass hierarchy or the existence of a suitable new scale. One example of what we would call a \emph{model} would be a $L_e - L_\mu - L_\tau$ flavor symmetry extension of the $\nu$MSM, as presented in Refs.~\refcite{Shaposhnikov:2006nn,Lindner:2010wr,Merle:2012ya}.

The reason for this distinction in terminology is motivated by the usage of the terms in contemporary elementary particle physics, however, admittedly the terms can overlap in certain cases. We nevertheless try to stick to this terminology as strictly as possible, in order to give better guidance to less experienced readers. But we also stress that the distinction in terminology does not involve any kind of judgement. Indeed, both concepts are extremely useful in their respective contexts: \emph{scenarios}, on the one hand, are very useful for phenomenological studies, as the recent very complete treatment of the phenomenology of the $\nu$MSM clearly illustrates~\cite{Canetti:2012kh}. \emph{Models}, on the other hand, add the necessary building block of an explanation for the existence of the keV scale (or, rather, of a suitable scale hierarchy), which is important since the keV scale is to our current knowledge unrelated to any other fundamental scale in physics. Such models typically have ``back reactions'' on several phenomenological parameters, as \emph{e.g.}\ explained in Ref.~\refcite{Barry:2011wb}.

Naturally, the ultimate goal we should work towards is to combine scenarios and models, in order to identify the most promising theory for keV sterile neutrino Dark Matter. This ``ultimate'' theory should involve a convincing production mechanism as well as a proper explanation for the appearance of the keV scale, and its phenomenology should agree with all cosmological observations and particle physics experiments. On top of that, any candidate theory should yield testable predictions which can be probed in terrestrial and/or extra-terrestrial experiments. While \emph{e.g.} the $\nu$MSM does not have too many direct tests, certain models manage to entangle the existence of the keV scale with solid predictions for low energy neutrino observables such as mixing angles or effective masses. These models offer testability, not only by their phenomenological predictions but already when trying to combine them with certain production mechanisms.

\subsection{\label{sec:keV_schemes}General mass shifting schemes for keV neutrinos}	

Before discussing actual models, we first illustrate the two main approaches to \emph{explain} the appearance of the keV scale. As we have already mentioned, a mass of a few keV is, to the best of our current knowledge, not connected to any ``fundamental'' energy scale in Nature, such as the Planck scale or the electroweak scale. This is a pity, because in a quantum field theory it is intrinsically difficult to predict absolute scales. What we can predict, however, are hierarchies of scales, \emph{i.e.}, why a certain scale is larger or smaller than another one. While some of these explanations may sound a bit artificial, this is nevertheless the best way we know to predict scales at all. Furthermore, it is a well-known principle to relate ``new'' energy scales to known ones, just as the energy levels of an atom are intrinsically connected to the size of the Rydberg energy.

The two main \emph{mass shifting schemes} that are, in various versions, generically used to ``explain'' (or at least ``motivate'') the existence of a keV scale are depicted in Fig.~\ref{fig:keV-schemes}. For obvious reasons, we will call these two main schemes the \emph{Bottom-up scheme} and the \emph{Top-down scheme}. These two schemes nicely illustrate the two main ways to arrive at the ``in-between'' scale of a few keV: either the mass of the sterile neutrino that is supposed to be the Dark Matter is actually zero, but this \emph{natural} value is corrected by some model-specific mechanism to yield a non-zero mass of $\mathcal{O}({\rm keV})$. Or the fundamental mass $M_R$ of the RH-neutrinos is actually much higher, and for some reason the mass of one sterile neutrino is suppressed to yield a physical mass of only a couple of keV.

\begin{figure}[pb]
\centerline{
\begin{tabular}{lr}
\psfig{file=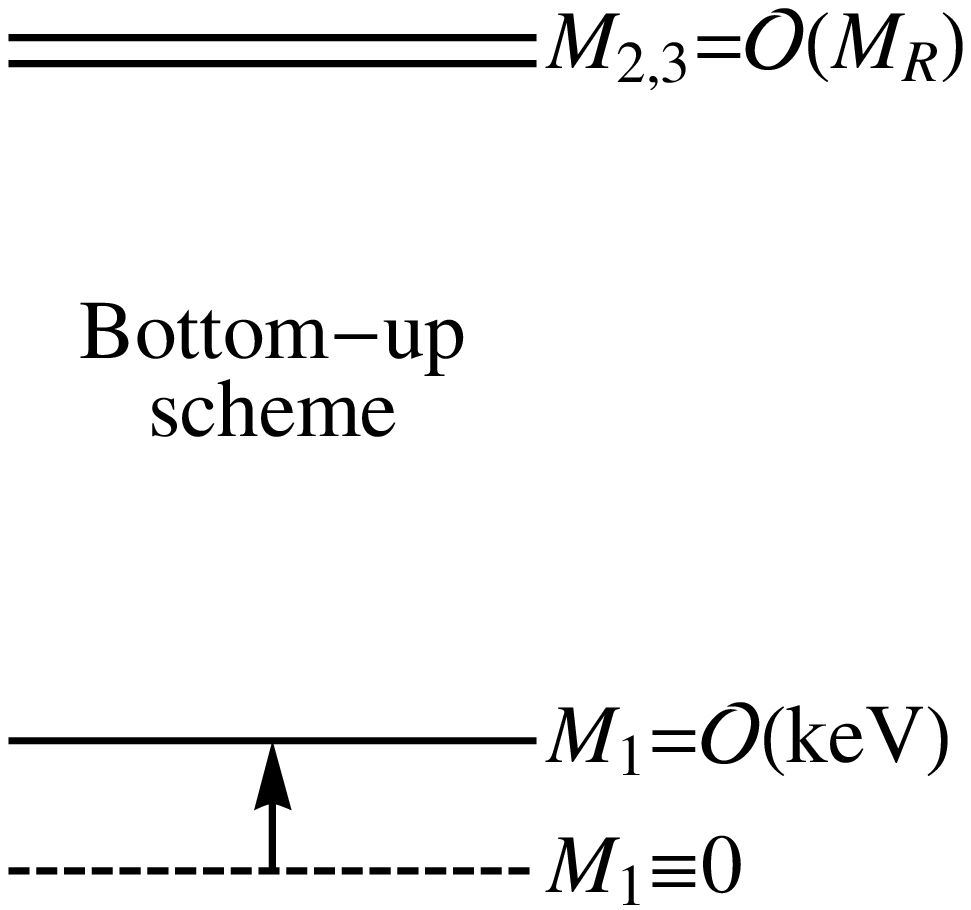,width=5.5cm} & \psfig{file=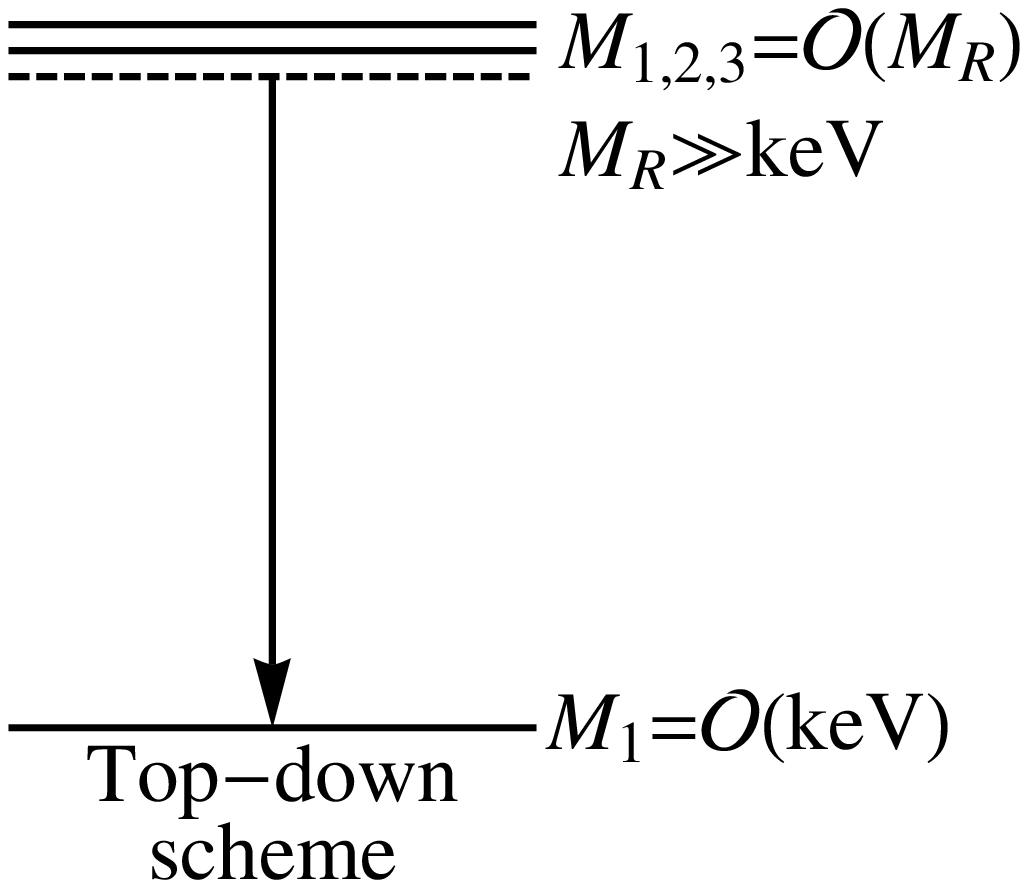,width=5.5cm}
\end{tabular}
}
\vspace*{8pt}
\caption{\label{fig:keV-schemes}The two generic mass shifting schemes for keV sterile neutrinos, in a setting with three right-handed neutrinos. Typically, $N_1$ is taken to be the keV sterile neutrino, but there are models where this is not true.}
\end{figure}

Note that we have, in passing, used the terms \emph{right-handed neutrino} and \emph{sterile neutrino} to be practically equivalent. This is okay, as long as we are talking about SM charges only: the right-handed neutrino is a total singlet under the SM gauge group, and hence it is sterile. However, we have to keep in mind that there are two loopholes in this terminology:
\begin{itemize}

\item First, the physical particle (which is a mass eigenstate!) will actually be neither purely right-handed nor purely sterile, due to the structure of the mass terms and the active-sterile mixing, cf.\ \ref{sec:seesaw_diag}. This implies that any mass eigenstate is always a superposition of a left-handed [active/$SU(2)$ doublet] and a right-handed [sterile/$SU(2)$ singlet] state. Nevertheless, due to the small mixing between active and sterile states, it is common to refer to the SM-like neutrino mass eigenstates $\nu_i$ ($i=1,2,3$) as \emph{active neutrinos} (even though they do have small sterile admixtures) and to refer to the additional (often heavier) mass eigenstates $N_i$ ($i=1,2,3$) as \emph{sterile neutrinos} (even though they do have small active admixtures). In turn, the fields $\nu_{L \alpha}$ ($\alpha=1,2,3$) and $N_{R \alpha}$ ($\alpha=1,2,3$) in the Lagrangian (which are not the physical particles but only the fundamental ingredients of the theory) are referred to as \emph{left-} and \emph{right-handed neutrinos}, respectively, according to the standard terminology. Although this terminology is unambiguous, the terms are often used in a more or less equivalent manner in the literature, and in many cases one has to conclude from the context which physical meaning is actually referred to.

\item Second, even though the term \emph{sterile} is used, this only refers to SM-interactions. As soon as we go beyond the SM by extending the gauge group, the RH-neutrinos will not be total singlets anymore, in general. For example, in left-right symmetric models~\cite{Mohapatra:1974hk,Deshpande:1990ip}, where the $SU(2)_L \times U(1)_Y$ symmetry of the SM is extended to $SU(2)_L \times SU(2)_R \times U(1)_{B-L}$, right-handed neutrinos are charged non-trivially under the $SU(2)_R \times U(1)_{B-L}$ subgroup. Furthermore, even if the gauge group is not extended, the RH neutrinos can easily have relatively strong interactions with non-SM scalar particles, as for example in the scotogenic model~\cite{Ma:2006km}.

\end{itemize}

Nevertheless, as done in the bulk of the literature, we will use both terms in a pretty analogous manner, in order not to hinder the flow of the text. Keeping the above two remarks in mind is perfectly sufficient for the purpose of understanding the models.

In addition we want to remark that in the schemes in Fig.~\ref{fig:keV-schemes} we have assumed the existence of three right-handed neutrinos $N_{1,2,3}$, which have masses $M_{1,2,3}$. Out of these, our goal is to find an explanation for why $M_1$ (or whichever is the keV sterile neutrino mass) has a value of a few keV. Then, the corresponding mass eigenstate field $N_1$ (or $N_{2,3}$) is referred to as the \emph{keV sterile neutrino}. However, in the literature one can also find examples with more or less than three right-handed neutrinos (see Ref.~\refcite{Abazajian:2012ys} for an exhaustive collection). While in such cases the schemes in Fig.~\ref{fig:keV-schemes} would of course have to be altered (by adding or removing some of the energy levels), the basic principles illustrated in the two schemes nevertheless remain the same: in practically all cases, either a zero mass is lifted or a larger mass is suppressed, unless no completely new scale unrelated to anything else is introduced and then more or less arbitrarily set to $\mathcal{O}({\rm keV})$.

Equipped with the imagination of the schemes, as well as the terminology to be used, we are now prepared to enter the zoo of the actual models, which all attempt to find an explanation for the existence of the keV scale.

\subsection{\label{sec:keV_FN}Models involving the Froggatt-Nielsen mechanism}	

First we will discuss models involving the Froggatt-Nielsen mechanism. There exist \emph{pure} FN models, which use only FN, and \emph{mixed} FN models, which employ the FN mechanism in combination with, \emph{e.g.}, a discrete flavor symmetry.

\subsubsection{\label{sec:keV_FN_pure}Pure FN models}	

A very complete discussion on how to build models for keV sterile neutrinos using the FN mechanism was given in Ref.~\refcite{Merle:2011yv}. As we have already seen in Sec.~\ref{sec:neutrino_modeling_FN}, the FN mechanism can modify a mass matrix or Yukawa coupling matrix by powers of a small number $\lambda$. Hence, we can immediately conclude that FN inspired models resemble a typical top-down setting: the ``natural'' values of the masses receive generation-dependent suppressions, which lead to small values of the physical masses, as illustrated in Fig.~\ref{fig:FN-scheme}.

\begin{figure}[pb]
\centerline{
\psfig{file=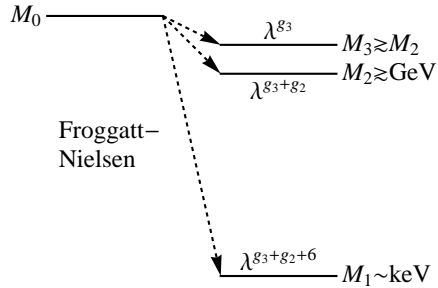,width=6.0cm}
}
\vspace*{8pt}
\caption{\label{fig:FN-scheme}Mass shifting scheme of the FN mechanism. The generation-dependent FN-charges $g_i$ strongly suppress the lightest sterile neutrino mass $M_1$, which clearly resembles the top-down scheme, cf.\ right panel of Fig.~\ref{fig:keV-schemes}. (Figure similar to Fig.~2 in Ref.~\cite{Merle:2011yv}.)}
\end{figure}

The relatively general approach discussed in Ref.~\refcite{Merle:2011yv} uses the same ingredients as the models presented in Refs.~\refcite{Kanemura:2007yy,Kamikado:2008jx}. The minimal particle content to arrive at a model with the power of at least semi-analytical predictions is given by the SM leptons, together with three right-handed neutrinos $N_{1,2,3}$ and two flavons $\Theta_{1,2}$, which are charged under the $U(1)_{\rm FN}$ symmetry as well as an auxiliary $Z_2$ parity:\footnote{As explained in Ref.~\refcite{Kanemura:2007yy}, this auxiliary symmetry is necessary to have a model with $CP$ violation.}
\begin{eqnarray}
 \Theta_{1,2}: && (\theta_1, \theta_2;+,-),\nonumber\\
 L_{1,2,3}: && (f_1, f_2, f_3;+,+,-),\nonumber\\
 \overline{e_{1,2,3}}: && (k_1, k_2, k_3;+,+,-),\nonumber\\
 \overline{N_{1,2,3}}: && (g_1,g_2, g_3; +,+,-).
 \label{eq:FN-assignments_1}
\end{eqnarray}
The most general (\emph{i.e.}, seesaw type~II) Lagrangian that leads to masses in the lepton sector is then given by
\begin{eqnarray}
 \mathcal{L} &=& 
-\sum_{a,b,i,j}^{a+b=k_i+f_j} Y_e^{ij}\,\overline{e_{iR}}\,H\,L_{jL}\,\lambda_1^a \lambda_2^b +h.c.\
-\sum_{a,b,i,j}^{a+b=g_i+f_j} Y_D^{ij}\,\overline{N_{iR}}\,\tilde{H}\,L_{jL}\,\lambda_1^a \lambda_2^b +h.c. \\
 && - \sum_{a,b,i,j}^{a+b=f_i+f_j} \frac{1}{2} \overline{(L_{iL})^c}\,\tilde m_L^{ij}\,L_{jL}\,\lambda_1^a \lambda_2^b +h.c.\
- \sum_{a,b,i,j}^{a+b=g_i+g_j} \frac{1}{2}\overline{(N_{iR})^c}\,\tilde M_R^{ij}\,N_{jR}\,\lambda_1^a \lambda_2^b+h.c.\,,
\nonumber
 \label{eq:FN-Lagrangian}
\end{eqnarray}
where all Yukawa and mass matrix entries denote their natural values, while all suppressions come from FN factors. These suppression factors depend on the flavon VEVs $\langle \Theta_i \rangle$ as well as on a high energy scale $\Lambda$, cf.\ Eq.~\eqref{eq:FN_ex_5},
\begin{equation}
 \lambda_i = \frac{\langle \Theta_i \rangle}{\Lambda}.
 \label{eq:FN_sups}
\end{equation}
Although the flavon VEVs can be complex, all but one phase can be rotated away~\cite{Kanemura:2007yy}. Thus it is sufficient to use one real parameter $\lambda$ and one complex parameter $R$,
\begin{equation}
 \lambda=\frac{\langle \Theta_1 \rangle}{\Lambda},\ \ R=\frac{\langle \Theta_1 \rangle}{\langle \Theta_2 \rangle} = 
R_0 e^{i\alpha_0},
 \label{eq:lambda_R}
\end{equation}
where $R_0$ and $\alpha_0$ are real numbers. Note that $\alpha_0$ is responsible for all (Dirac \emph{and} Majorana) $CP$ violation in the model. Using this parametrization, we get
\begin{equation}
 \lambda_1^a \lambda_2^b \equiv \left( \frac{\langle \Theta_1 \rangle}{\Lambda} \right)^a \left( \frac{\langle \Theta_2 \rangle}{\Lambda} \right)^b = 
\lambda^{a+b} R^b.
 \label{eq:L-ratios}
\end{equation}

The goal of any model is to reproduce the known data, in our case the charged lepton masses. A successful choice of FN charges (with $a=0,1$) is given by\cite{Kanemura:2007yy,Asaka:2003fp}
\begin{eqnarray}
 \Theta_{1,2}: && (-1, -1),\nonumber\\
 L_{1,2,3}: && (a+1, a, a),\nonumber\\
 \overline{e_{1,2,3}}: && (3, 2, 0).
 \label{eq:FN-assignments_2}
\end{eqnarray}

The important point is that these assignments do \emph{not} constrain the FN charges of the RH-neutrinos, $(g_1,g_2, g_3)$. Indeed, it is trivial (though a bit cumbersome without the use of a computer algebra package) to show that the light neutrino mass matrix arising from the Lagrangian in Eq.~\eqref{eq:FN-Lagrangian} is independent of $g_i$. The explicit proof for the case of only one flavon field is given in Ref.~\refcite{Merle:2011yv}. As explained in Refs.~\refcite{Kanemura:2007yy,Choi:2001rm}, the reason is simply that the seesaw mass term, due to its Majorana nature, breaks any global symmetry of the RH-neutrino fields. In particular this is true for any assignment of a lepton number, which is typically some kind of global $U(1)$ symmetry~\cite{Davidson:2006bd} and causes the corresponding mass term to be lepton number violating. In the same way, however, the $U(1)_{\rm FN}$ is broken by that term, and hence the RH-neutrino charges $(g_1,g_2, g_3)$ cannot appear in the seesaw formula.

It is exactly this fact that was exploited in Ref.~\refcite{Merle:2011yv}: the charges $(g_1,g_2, g_3)$ can be freely chosen to generate a suitable RH-neutrino mass pattern, and the seesaw formula is always \emph{guaranteed} to work, in spite of the presence of potentially very light RH-neutrinos whose masses are ``divided by'' in the seesaw formula. This is the key point why FN-inspired models are extremely well suited to explain low-scale sterile neutrinos, and in particular keV sterile neutrinos.

In the $\nu$MSM~\cite{Asaka:2005an}, which is more or less the minimal scenario for keV sterile neutrinos, the ``heavy'' sterile neutrinos $N_{2,3}$ need to have masses of a few GeV. Depending on the scenario under consideration, one might also be interested in higher masses. In any case a suitable mass pattern may arise for the second and third generation charges being larger than $g_1$ by at least three units, $g_1 \geq g_{2,3} + 3$.\footnote{Note that this statement may be changed if one attempts to find models including sterile neutrinos at the eV scale. Such examples are discussed in, e.g.,\ Refs.~\refcite{Barry:2011wb,Chen:2011ai}. This is one instance where the specific requirements for models with eV or keV sterile neutrinos can exhibit considerable differences.} Having this requirement in mind, two ``minimal'' scenarios were identified in Ref.~\refcite{Merle:2011yv}: Scenario~A where $(g_1,g_2,g_3)=(3,0,0)$ and Scenario~B where $(g_1,g_2,g_3)=(4,1,0)$. Denoting the natural RH mass scale by $M_0$ and assuming the natural mass matrices to be \emph{democratic}, $\tilde M_R^{ij}=M_0\;\; \forall i,j$, these scenarios immediately lead to the following sterile neutrino mass patterns:
\begin{eqnarray}
 {\rm A}(3,0,0): && M_1 = M_0 \lambda^6\ 2 R_0^2 \sqrt{1+ R_0^4 + 2 R_0^2 \cos (2\alpha_0)}\,, \nonumber\\
                 && M_2 = M_0\,, \nonumber\\
                 && M_3 = M_0 \left(1 + \lambda^6 [ 1 + R_0^2 (3 \cos (2 \alpha_0) + 3 R_0^2 \cos (4\alpha_0) + 
R_0^4 \cos (6 \alpha_0) ] \right)\,, \nonumber\\
 {\rm B}(4,1,0): && M_1 = M_0 \lambda^8\ 2 R_0^4 \sqrt{1 + R_0^8 - 2 R_0^4 \cos (4 \alpha_0)}\,,\nonumber\\
                 && M_2 = M_0 \lambda^2\,, \nonumber\\
                 && M_3 = M_0 \ \left( 1+ R_0^2 \lambda^2 \cos (2\alpha_0) \right)\,.
   \label{eq:RH_masses}
\end{eqnarray}
Indeed, these two scenarios reveal the desired structure: if $\lambda \sim 0.1$, then the requirement $M_1 = \mathcal{O}(1~{\rm keV})$ leads to $M_0 \sim (10^6~{\rm keV}\sim 1~{\rm GeV}, 10^8~{\rm keV}\sim 100~{\rm GeV})$ for scenario (A,B). Even larger values of the FN charges would easily be able to push $M_0$ to even higher values, due to the exponential dependence on the FN charges: already for $g_1=5$, one would achieve $M_0\sim 10$~TeV.

The two scenarios (A,B) are combined with the two possible charge assignments for the LH lepton doublets, Assignment~1 ($a=0$) and Assignment~2 ($a=1$), cf.\ Eq.~\eqref{eq:FN-assignments_2}, which lead to two different possibilities for the charged lepton mass matrix,
\begin{equation}
 M_e^{(1,2)} = v \begin{pmatrix}
 Y_e^{11} B_{2,4} \lambda^{3,4}\hfill \hfill & Y_e^{12} B_2 \lambda^{2,3}\hfill \hfill & Y_e^{13} B_{0,2} R \lambda^{2,3}
\hfill \hfill\\
 Y_e^{21} B_2 \lambda^{2,3}\hfill \hfill & Y_e^{22} B_{0,2} \lambda^{1,2}\hfill \hfill & Y_e^{23} R \lambda^{1,2}\hfill \hfill\\
 Y_e^{31} R \lambda^{1,2} \hfill \hfill & 0,\ Y_e^{32} R \lambda \hfill \hfill  & Y_e^{33} \lambda^{0,1} \hfill \hfill \hfill
 \end{pmatrix}.
 \label{eq:charged_matrices_12}
\end{equation}
Again assuming the natural mass matrices to be democratic, $Y_e^{ij}=Y_e\;\; \forall i,j$, one obtains \emph{e.g.} for Assignment~1 the charged lepton masses
\begin{equation}
 \left\{
 \begin{array}{lcl}
 m_e &=& m_0 \lambda ^3 \; R_0^2\,,\\
 m_\mu &=& m_0 \lambda \; \left( 1+ \lambda^2 \left[ R_0^2 \cos (2 \alpha_0 )+\frac{R_0^4-R_0^2+3}{2} \right] \right)\,,\\
 m_\tau &=& m_0 \; \left( 1 + \frac{3}{2} R_0^2 \lambda^2 \right)\,,
 \end{array}
 \right.
 \label{eq:CL_masses}
\end{equation}
where $m_0 = v Y_e$. Using the measured values for $m_e$, $m_\mu$, and $m_\tau$, one can calculate the mass ratios to obtain $\lambda \simeq 0.06$ and $R_0 \simeq 1.18$ at lowest order in $\lambda$. The phase $\alpha_0$ is not constrained to that order, but choosing $\alpha_0=0.67$ causes the $\mathcal{O}(\lambda^3)$ correction to $m_\mu/m_\tau$ to vanish exactly. As typical for seesaw models\cite{Sato:2000ff}, the optimal choice for $\lambda$ turns out to be a bit smaller than the standard choice of $0.22$~\cite{Datta:2005ci}.

As visible in Eqs.~\eqref{eq:RH_masses} or~\eqref{eq:CL_masses}, the FN mechanism is extremely well suited to predict the desired mass hierarchy. However, the FN symmetry is a $U(1)$ group which has only 1-dimensional representations. Unfortunately, these are not sufficient to predict more eleborate structures of the mass matrix as, \emph{e.g.}, forcing certain entries to be exactly equal. One way out is to depart from the assumption of democracy, as illustrated in Ref.~\refcite{Merle:2011yv}. In practice, the most convenient way is to simply write numerical coefficients of  $\mathcal{O}(1)$ in front of all Yukawa and mass matrix elements. These coefficients are not \emph{predicted} by the FN mechanism, which can be seen as disadvantage. On the other hand, the FN mechanism nevertheless imposes a certain structure on the mass matrices, which makes it actually very easy to find suitable coefficients that lead to consistency with all experimental values.

To give one explicit example, we want to shortly illustrate Model~1AI from Ref.~\refcite{Merle:2011yv}, which is based on Assignment~1, Scenario~A, and a seesaw type~I mechanism. By imposing the measured values of the charged lepton masses, of the neutrino mass square differences, as well as the previously determined values of the parameters $(\lambda, R_0, \alpha_0)$, the numerical coefficients have been determined such that mixing angles within their (at that time) 3$\sigma$ ranges were predicted. In the case of Model~1AI, the mass matrices found are given by:
\begin{eqnarray}
 M_e^{(1)} = v \begin{pmatrix}
 Y_e^{11} B_2 \lambda^3 \hfill \hfill & Y_e^{12} B_2 \lambda^2 \hfill \hfill & Y_e^{13} B_0 R \lambda^2 \hfill \hfill\\
 Y_e^{21} B_2 \lambda^2 \hfill \hfill & Y_e^{22} B_0 \lambda \hfill \hfill & Y_e^{23} R \lambda \hfill \hfill\\
 Y_e^{31} R \lambda \hfill \hfill & 0 \hfill \hfill  & Y_e^{33} \hfill \hfill \hfill
 \end{pmatrix}
 &\to& M_{e0} \left(
\begin{array}{lll}
 0.81 B_2 \lambda^3 & 1.44 B_2 \lambda^2 & 0.29 R \lambda^2 \\
 2.00 B_2 \lambda^2 & 1.13 \lambda  & 2.50 R \lambda  \\
 3.71 R \lambda  & 0 & 0.35
\end{array}
\right), \nonumber\\
 M_R^{\rm (A)} = \begin{pmatrix}
 \tilde M_R^{11} B_6 \lambda^6 & \tilde M_R^{12} B_2 \lambda^3 & \tilde M_R^{13} R B_2 \lambda^3\\
 \bullet & \tilde M_R^{22} B_0 & 0 \hfill \hfill \\
 \bullet & \bullet  & \tilde M_R^{33}\hfill \hfill
 \end{pmatrix}
 &\to& M_0 \left(
 \begin{array}{lll}
 0.38 B_6 \lambda^6 & 0.31 B_2 \lambda^3 & 1.26 B_2 R \lambda^3 \\
 0.31 B_2 \lambda^3 & 4.18 & 0 \\
 1.26 B_2 R \lambda^3 & 0 & 4.81
 \end{array}
 \right), \nonumber\\
 m_D^{\rm (1A)} = v \begin{pmatrix}
 Y_D^{11} B_4 \lambda^4 \hfill \hfill & Y_D^{12} B_2 \lambda^3 \hfill \hfill & Y_D^{13} R B_2 \lambda^3 \hfill \hfill \\
 Y_D^{21} B_0 \lambda \hfill \hfill & Y_D^{22} \hfill \hfill & 0 \hfill \hfill \\
 Y_D^{31} R \lambda \hfill \hfill & 0 \hfill \hfill  & Y_D^{33} \hfill \hfill
 \end{pmatrix}
 &\to& m_{D0} \left(
\begin{array}{lll}
 0.75 B_4 \lambda^4 & 0.15 B_2 \lambda^3 & 1.42 B_2 R \lambda^3 \\
 0.51 \lambda  & 0.13 & 0 \\
 3.32 R \lambda  & 0 & 2.93
\end{array}
\right),\nonumber
\end{eqnarray}
where $B_{2n}=1+R^2+...+R^{2n}$, and $M_{e0}$, $M_0$, and $m_{D0}$ denote the characteristic mass scales of the charged leptons, RH masses, and Dirac masses, respectively. Note that some entries are zero, since they are forbidden by the auxiliary $Z_2$ symmetry.

While Model~1AI still does not predict the absolute light neutrino mass scale $m_\nu$, it is powerful enough to predict their mass ratios, their mass ordering, as well as all leptonic mixing angles and phases. Calculating all these quantities from the mass matrices of the model, one obtains\cite{Merle:2011yv}: $\sin^2 \theta_{12} = 0.28$, $\sin^2 \theta_{13} = 0.018$, $\sin^2 \theta_{23} = 0.54$, $\delta = 4.21$, $\alpha = 0.58$, $\beta = 1.25$, $m_1/m_\nu = 0.0014$, $m_2/m_\nu = 0.19$, and $m_3/m_\nu = 1.03$, which means normal mass ordering.

Note that, although the FN mechanism is not very well suited to predict more complicated leptonic mixing patterns, it nevertheless intrinsically predicts that $\theta_{13} \neq 0$ in the absence of extremely unlikely cancellations, in contrast to some discrete flavor symmetries with a tendency to predict vanishing $\theta_{13}$. This is particularly remarkable since the predictions from Ref.~\refcite{Merle:2011yv} appeared \emph{before} the actual experimental determination of $\theta_{13}$. Nevertheless, even today, Model~1AI is valid within the 3$\sigma$ ranges of the leptonic mixing parameters.

Let us end this section by mentioning an interesting feature of the FN mechanism, which has been discussed in Ref.~\refcite{Merle:2011yv}. While the $U(1)_{\rm FN}$ charge assignments seem quite arbitrary at first sight, there nevertheless exist many settings that do \emph{not} work in connection with FN, in particular in the context of keV sterile neutrinos. Out of all the requirements mentioned in Ref.~\refcite{Merle:2011yv}, we will shortly comment on three points to illustrate how restrictive the FN mechanism actually is:
\begin{itemize}

\item \emph{No LR symmetry}:\\
Left-right symmetry~\cite{Senjanovic:1975rk,Mohapatra:1977mj,Mohapatra:1979ia,Mohapatra:1980yp} is by itself very restrictive: the RH charged leptons and neutrinos are members of the same doublet, which implies $k_i = g_i$. Hence, the symmetry forces the LH and RH fermion doublets of each generation to have equal FN charges, $f_i = g_i$. With such a strong restriction, no semi-realistic pattern like in Eq.~\eqref{eq:FN-assignments_2} can be obtained. This problem is particularly interesting since one mechanism to produce a suitable abundance of keV sterile neutrinos, thermal overproduction with subsequent entropy dilution, was mainly studied for LR-symmetry~\cite{Bezrukov:2009th,Nemevsek:2012cd}.

\item \emph{Difficulties with $SO(10)$}:\\
The FN mechanism is typically applied in the context of \emph{Grand Unified Theories} (GUTs). In the case of $SU(5)$, the RH neutrinos are total singlets, which leaves their FN charges perfectly unconstrained, just as required to achieve the desired RH neutrino mass patterns, cf.\ Eq.~\eqref{eq:RH_masses}. However, in theories based on $SO(10)$, the RH neutrinos are members of a $\mathbf{16}_i$ representation, for each generation $i$~\cite{Ross:1985ai}. Hence, their values would be strongly constrained, which destroys the freedom needed to generate a strong hierarchy among the sterile neutrino masses.

\item \emph{No need to run}:\\
Sometimes, non-fitting FN-based models are claimed to have a better match with the data when the couplings are evolved to low energies by renormalization group running, especially in cases where the models under study do not lead to a good fit with data~\cite{Kamikado:2008jx}. While this is a natural thought, since the predictions obtained by the FN mechanism should only be strictly true at a high energy scale, the running of the couplings is nevertheless fully negligible for low scale seesaw mechanisms as required by typical scenarios for keV sterile neutrinos. This conclusion could only be altered by having strong degeneracies between two or more light neutrino masses, at a relative level of roughly $10^{-5}$. However, since the FN mechanism generically leads to hierarchies and not to degeneracies, such effects can hardly appear in the models under consideration.

\end{itemize}

As we have seen, the FN mechanism is well suited to obtain scale suppressions and to generate hierarchies between masses. We have also seen that, although FN is actually more predictive than one might naively think, it is not powerful enough to predict complicated mixing patterns by itself, without varying coefficients of $\mathcal{O}(1)$. Furthermore, the generic FN-problems, and in particular the lack of a UV-completion, persist. A partial way out is to combine the FN mechanism with an additional discrete flavor symmetry, which is the next point to be discussed.

\subsubsection{\label{sec:keV_FN_mixed}Mixed FN models}	

We have now seen how one can build a model based exclusively on the FN mechanism, apart from an auxiliary $Z_2$ symmetry that is ``only'' important if manifest $CP$ violation is desired. The alternative type of models on the market uses the FN mechanism only as one among several ingredients, typically in combination with a non-Abelian (\emph{i.e.}, non-commuting) discrete flavor symmetry. This approach has the great advantage of being able to exploit the powerful FN suppressions, while at the same time making use of the structure of the discrete group in order to predict more sophisticated mixing patterns in the lepton sector. However, as always in model building one pays a price, which is in the case at hand the loss of minimality.

The models on the market make use of an $A_4$ symmetry, along with an additional small auxiliary symmetry $Z_3$,\footnote{$Z_3 = \left\{e, a, a^2 \right\}$ is the cyclic group of three elements. Hereby, $e$ denotes the neutral element and $a$ is a non-trivial element with $a^3 = e$.} in combination with the FN mechanism~\cite{Barry:2011wb,Barry:2011fp}. The model we are going to discuss in some detail here is the \emph{$A_4$ seesaw model with one keV sterile neutrino} (in the version with the other two sterile neutrinos being considerably heavier, with masses of at least a few GeV) presented in Ref.~\refcite{Barry:2011fp}. This model is maybe the prime example in the literature to illustrate how to combine the FN mechanism with a non-Abelian symmetry in a smart way, in order to obtain a model with a keV sterile neutrino.

When looking into the literature, typically the first piece of information given about a model is the particle content as well as the charge assignments, although obtaining this knowledge typically requires a lot of work and failed attempts. For the model under consideration, this information is presented in Tab.~\ref{tab:A4BRZ}. We first of all see that, besides the necessary lepton doublets and singlets, two Higgs fields $H_{u,d}$ are needed in the model, as well as seven flavons $\varphi$, $\varphi'$, $\varphi''$, $\xi$, $\xi'$, $\xi''$, and $\Theta$. In addition, Tab.~\ref{tab:A4BRZ} gives us the information of how the fields transform under the different symmetries, either by directly indicating the representation [for $SU(2)_L$ and for $A_4$] or by indicating the charges [for $Z_3$ and $U(1)_{\rm FN}$]. Note that the field $\Theta$ is a total singlet under all symmetries but $U(1)_{\rm FN}$, so $\Theta$ is nothing else than the FN flavon. The other flavons, however, are triplet or singlet flavons, under $A_4$, and they induce most of the structure. Finally, the sole purpose of the $Z_3$ assignment is to switch off certain terms that would not be desired, for example because they could spoil the mixing pattern.

Next, we write down the leading order Lagrangian, where only terms that are singlets under all symmetries are allowed. Glancing at Tab.~\ref{tab:A4BRZ}, making use of the $A_4$ group theory (cf.\ Sec.~\ref{sec:neutrino_modeling_flavor}), and realizing that $\omega^3 = 1$, the most general leptonic Yukawa coupling Lagrangian is given by:
\begin{eqnarray}
 \mathcal{L}_Y &=&  -\frac{y_e}{\Lambda}\lambda^3 \overline{e_R} \left(\varphi H_d L \right)_{\mathbf{1}} - \frac{y_\mu}{\Lambda} \lambda \overline{\mu_R} \left(\varphi H_d L \right)_{\mathbf{1'}} - \frac{y_\tau}{\Lambda} \overline{\tau_R} \left( \varphi H_d L \right)_{\mathbf{1''}} \label{eq:BRZ_lag} \\
 && -  \frac{y_1}{\Lambda}\lambda^{g_1} \overline{N_{1R}} (\varphi H_u L)_{\mathbf{1}} - \frac{y_2}{\Lambda}\lambda^{g_2} \overline{N_{2R}} (\varphi' H_u L)_{\mathbf{1''}} - \frac{y_3}{\Lambda}\lambda^{g_3} \overline{N_{3R}} (\varphi'' H_u L)_{\mathbf{1}} \nonumber \\
 &&- \frac{1}{2} \left[w_1^* \lambda^{2 g_1}\xi^* \overline{(N_{1R})^c} N_{1R} - w_2^* \lambda^{2 g_2} {\xi'}^* \overline{(N_{2R})^c} N_{2R} - w_3^* \lambda^{2 g_3} {\xi''}^* \overline{(N_{3R})^c} N_{3R}
 \right] + h.c., \nonumber
\end{eqnarray}
where the notation $(...)_{\rm \bf irrep}$ refers to the corresponding combination under $A_4$. Note that the FN mechanism has already been applied in Eq.~\eqref{eq:BRZ_lag}, which means that the invariance under $U(1)_{\rm FN}$ is only visible once the different powers of the small parameters $\lambda \equiv \langle
\Theta\rangle/\Lambda$ are taken into account. Note further that the notation has been slightly changed in order to be consistent with the rest of this review, which is why the Lagrangian presented here looks a bit different from the one in Ref.~\refcite{Barry:2011fp}.\footnote{Note also the typo in Ref.~\refcite{Barry:2011fp} in the third charged lepton Yukawa coupling.}

\begin{table}[h]
\tbl{Particle content and representation/charge assignments (in our notation) of the $SU(2)_L \times A_4 \times Z_3 \times U(1)_{\rm FN}$ model with one keV sterile neutrino~\cite{Barry:2011fp}. $SU(2)_L$ is the usual SM gauge symmetry, and $Z_3$ is an additional auxiliary symmetry in order for $CP$ violation to survive, similar to the $Z_2$ used in Sec.~\ref{sec:keV_FN_pure}. For $SU(2)_L$ and for $A_4$ the representations are given, while for $Z_3$ and $U(1)_{\rm FN}$ the charge is indicated. Note that $\omega = e^{2\pi i /3}$.}
{\begin{tabular}{@{}lccccccccccccccc@{}} \toprule
Field & $L_{1,2,3}$ & $\overline{e_R}$ & $\overline{\mu_R}$ & $\overline{\tau_R}$ & $\overline{N_1}$ & $\overline{N_2}$ & $\overline{N_3}$ & $H_{u,d}$ & $\varphi$ & $\varphi'$ & $\varphi''$ & $\xi$ & $\xi'$ & $\xi''$ & $\Theta$ \\
\colrule
$SU(2)_L$ & $\mathbf{2}$ & $\mathbf{1}$ & $\mathbf{1}$ & $\mathbf{1}$ & $\mathbf{1}$ & $\mathbf{1}$ & $\mathbf{1}$ & $\mathbf{2}$ & $\mathbf{1}$ & $\mathbf{1}$ & $\mathbf{1}$ & $\mathbf{1}$ & $\mathbf{1}$ & $\mathbf{1}$ & $\mathbf{1}$ \\
$A_4$ & $\mathbf{3}$ & $\mathbf{1}$ & $\mathbf{1''}$ & $\mathbf{1'}$ & $\mathbf{1}$ & $\mathbf{1'}$ & $\mathbf{1}$ & $\mathbf{1}$ & $\mathbf{3}$ & $\mathbf{3}$ & $\mathbf{3}$ & $\mathbf{1}$ & $\mathbf{1'}$ & $\mathbf{1}$ & $\mathbf{1}$ \\
$Z_3$ & $\omega$ & $\omega^2$ & $\omega^2$ & $\omega^2$ & $1$ & $1$ & $\omega$ & $\omega^2$ & $\omega^2$ & $\omega$ & $1$ & $1$ &  $\omega^2$ & $\omega$ & $1$  \\
$U(1)_{\rm FN}$ & $0$ & $3$ & $1$ & $0$ & $g_1$ & $g_2$ & $g_3$ & $0$ & $0$ & $0$ & $0$ & $0$ & $0$ & $0$ & $-1$ \\
\botrule
\end{tabular}
\label{tab:A4BRZ}}
\end{table}

Again, one has to choose a certain VEV alignment. In the case at hand, this alignment can, \emph{e.g.}, be achieved in a supersymmetric context by adding further auxiliary fields~\cite{Altarelli:2005yx}, but also other origins might be plausible. As we had mentioned in Sec.~\ref{sec:neutrino_modeling_flavor}, we will not go very far into this topic in this review, as achieving VEV alignments is nearly a discipline by itself, but the principles behind it are not very different from the ones in the easy example discussed in \ref{sec:vacuum}. In some sense, the alignment belongs to the ``dirty'' part of model building, since it typically involves an essentially untestable sector and many fields that are typically shifted towards higher energies. However, an alternative view would be that there are indeed many possibilities to achieve a certain VEV alignment, and the mere number of working possibilities is indeed a bonus of flavor models.

Ending this discussion to enter the actual physics, the alignment chosen in Ref.~\refcite{Barry:2011fp} for the first triplet flavon field is $\langle \varphi \rangle =
(v_\varphi,0,0)$, which leads to a charged lepton mass matrix given by
\begin{equation}
 M_e = \frac{v_d v_\varphi}{\Lambda}
 \begin{pmatrix} y_e \lambda^{3} &
 0 & 0 \\
 0 & y_\mu\lambda & 0\\
 0 & 0 & y_\tau
 \end{pmatrix},
 \label{eq:BRZ_1}
\end{equation}
where $v_d = \langle H_d \rangle$ is one of the two Higgs VEVs. Note the FN suppression, which is clearly visible in Eq.~\eqref{eq:BRZ_1}: although different assignments have been used, the actual pattern of the charged lepton masses clearly resembles the one obtained from pure FN models, cf.\ Eq.~\eqref{eq:CL_masses}.

Up to now, the FN charges $g_{1,2,3}$ of the RH neutrinos have not yet been fixed. With a VEV $u= \langle \xi \rangle$, the mass of the lightest sterile neutrino is given by 
\begin{equation}
 M_1 = w_1 u \lambda^{2 g_1},
 \label{eq:M1_BRZ}
\end{equation} 
where $w_1$ can always be chosen to be real and positive by absorbing any phase into the fields. Taking $g_1$ large enough rapidly decreases this mass down to the keV scale, while the parameter $w_1$ is still left for finer adjustments. For example, in Ref.~\refcite{Barry:2011fp} this charge is suggested to be $g_1=9$, which leads to $M_1 \approx 1$~keV for $\lambda \approx 0.1$ and $u \simeq 10^{12}$~GeV. However, one has to remark that such a large $g_1$ can also be problematic, as the FN charge actually \emph{enhances} the active-sterile mixing, $\theta_{e1} \simeq \frac{y_1 v_\varphi v_u}{w_1 u \Lambda} \lambda^{-g_1}$. Hence, one has to be extremely careful that a too large value of $g_1$ does not spoil the validity of the model.

Furthermore, Ref.~\refcite{Barry:2011fp} states that this lightest sterile neutrino decouples from the seesaw formula due to the strong X-ray bound, $\theta_1^2 \lesssim 10^{-8}$ for $M_1 \approx 10$~keV, cf.\ Sec.~\ref{sec:keV_general_X}. While this is certainly true in some approximation, one could also argue that one does not have to care about decoupling the $N_1$, since the FN charges $g_{1,2,3}$ in any case cancel in the seesaw formula, cf.\ Sec.~\ref{sec:keV_FN_pure} and Ref.~\refcite{Merle:2011yv}. In particular, the quantities $w_1$ and $u$ from Eq.~\eqref{eq:M1_BRZ} will actually enter the seesaw formula and cannot necessarily be neglected. Alternatively, one could argue that according to the \emph{seesaw practicality theorem}, cf.\ Sec.~\ref{sec:keV_general_seesaw}, we could in any case make use of the seesaw formula.

In addition, in the approximation of a fully decoupled $N_1$, one of the light neutrinos must actually be massless at tree-level: the \emph{seesaw fair play rule}~\cite{Xing:2007uq}, cf.\ Sec.~\ref{sec:keV_general_seesaw}, states that two massive right-handed neutrinos can lead to at most two massive light neutrinos which might contradict future experiments. Hence, even though the decoupling of $N_1$ is true to some extent, it is not always a good approximation.

Let us follow the path of Ref.~\refcite{Barry:2011fp} and see where it is leading us. The remaining task is to determine the mass and mixing pattern. As often the case for FN-inspired models, normal or inverted neutrino mass ordering are both possible. Ref.~\refcite{Barry:2011fp} suggests to take the alignment $\langle \varphi' \rangle = (v'_\varphi,v'_\varphi,v'_\varphi)$ for both cases, while $\langle \varphi'' \rangle = (0,v''_\varphi,-v''_\varphi)$ and $\langle \varphi'' \rangle = (2 v''_\varphi,-v''_\varphi,-v''_\varphi)$ lead to normal and inverted ordering, respectively. With these choices, one can straightforwardly write down the Dirac mass matrices,
\begin{equation}
 m^{({\rm NO})}_D = \frac{v_u}{\Lambda} \begin{pmatrix}
 y_2 v'_\varphi \lambda^{g_2} & 0\\
 y_2 v'_\varphi \lambda^{g_2} & -y_3 v''_\varphi \lambda^{g_3}\\
 y_2 v'_\varphi \lambda^{g_2} &  y_3 v''_\varphi \lambda^{g_3}
 \end{pmatrix}, \ \ \ m^{({\rm IO})}_D = \frac{v_u}{\Lambda} \begin{pmatrix}
 y_2 v'_\varphi \lambda^{g_2} & 2 y_3 v''_\varphi \lambda^{g_3}\\
 y_2 v'_\varphi \lambda^{g_2} & -y_3 v''_\varphi \lambda^{g_3}\\
 y_2 v'_\varphi \lambda^{g_2} &  -y_3 v''_\varphi \lambda^{g_3}
 \end{pmatrix},
 \label{eq:Dirac_BRZ}
\end{equation} 
where $v_u = \langle H_u \rangle$, and the RH Majorana mass matrix,
\begin{equation}
 M_R = \begin{pmatrix}
 w_2 u' \lambda^{2 g_2} & 0\\
 0 & w_3 u'' \lambda^{2 g_3}
 \end{pmatrix}.
 \label{eq:Majorana_BRZ}
\end{equation} 
Using Eqs.~\eqref{eq:Dirac_BRZ} and~\eqref{eq:Majorana_BRZ}, one obtains the full $5\times 5$ neutrino mass matrices for both orderings. These matrices can be fully diagonalized to yield the eigenvalues
\begin{align}
 m_1^{\rm NO} &= 0 , & m_1^{\rm IO} &= m_1^{(0)}\left(1 -6\epsilon_2^2\right), \nonumber\\
 m_2^{\rm NO} &= m_2^{(0)}\left(1 -3\epsilon_1^2\right) , & m_2^{\rm IO} & = m_2^{(0)}\left(1 -3\epsilon_1^2\right), \nonumber\\
 m_3^{\rm NO} &= m_3^{(0)}\left(1-2\epsilon_2^2\right) , & m_3^{\rm IO} &= 0, \nonumber\\
 m_4^{\rm NO} &= w_2 u' \lambda^{2 F_2} - m_2^{(0)}\left(1-3\epsilon_1^2\right) , & m_4^{\rm IO} &= w_2 u' \lambda ^{2 F_2} - m_2^{(0)}\left(1-3\epsilon_1^2\right), \nonumber\\
 m_5^{\rm NO} &= w_3 u'' \lambda^{2 F_3} - m_3^{(0)}\left(1 - 2\epsilon_2^2\right) , & m_5^{\rm IO} &= w_3 u'' \lambda ^{2 F_3} - m_1^{(0)}\left(1-6\epsilon_2^2\right),
 \label{eq:BRZ_eignvals}
\end{align}
for normal and inverted ordering, respectively, where we have used the abbreviations
\begin{eqnarray}
 && m_1^{(0)} \equiv -\frac{6 y_3^2 v''^2 v_u^2 }{w_3 u'' \Lambda^2}\ , \ \ m_2^{(0)} \equiv -\frac{3  y_2^2 v'^2 v_u^2 }{w_2 u' \Lambda^2}\ , \ \ m_3^{(0)} \equiv -\frac{2 y_3^2 v''^2 v_u^2  }{w_3 u'' \Lambda ^2}\ , \nonumber\\
 && \epsilon_1 \equiv \frac{y_2 v_\varphi' v_u}{w_2 u' \Lambda}\lambda^{-g_2}\ , \ \ \epsilon_2 \equiv \frac{y_3 v_\varphi'' v_u}{w_3 u'' \Lambda} \lambda^{-g_3}. 
 \label{eq:abbr_BRZ}
\end{eqnarray}
Again, one has to be careful in the choice of the FN charges $g_{2,3}$ in order not to spoil the smallness of $\epsilon_{1,2}$. Still, the above patterns are somewhat compatible with the experimental data, and they can be used to calculate predictions for neutrino observables.

The corresponding diagonalization matrices $U_\nu$ are given by
\begin{equation}
 U_\nu^{({\rm NO, IO})} \simeq
 \begin{pmatrix}
 \frac{2}{\sqrt{6}} & \frac{1}{\sqrt{3}} & 0 & 0 & 0 \\
 -\frac{1}{\sqrt{6}} & \frac{1}{\sqrt{3}} & -\frac{1}{\sqrt{2}} & 0 & 0 \\
 -\frac{1}{\sqrt{6}} & \frac{1}{\sqrt{3}} & \frac{1}{\sqrt{2}} & 0 & 0 \\ 
 0 & 0 & 0 & 1 & 0 \\ 
 0 & 0 & 0 & 0 & 1 
 \end{pmatrix} + 
 \begin{pmatrix} 
 0 & 0 & 0 & \epsilon_1 & 0, 2 \epsilon_2 \\ 
 0 & 0 & 0 & \epsilon_1 & -\epsilon_2 \\ 
 0 & 0 & 0 & \epsilon_1 & \epsilon_2, -\epsilon_2 \\ 
 0 & -\sqrt{3} \epsilon_1 & 0 & 0 & 0\\ 
 0, -\sqrt{6}\epsilon_2 & 0 & -\sqrt{2}\epsilon_2, 0 & 0 & 0
 \end{pmatrix} + \mathcal{O}(\epsilon_i^2).
 \label{eq:mix_BRZ}
\end{equation}
This matrix already yields the physical mixing parameters, since the charged lepton mass matrix is already diagonal, cf.\ Eq.~\eqref{eq:BRZ_1}. Note that the upper left $3\times 3$-block of the leading contribution is \emph{tri-bimaximal}~\cite{Harrison:2002er},
\begin{equation}
 U_{\rm TBM} =
 \begin{pmatrix}
 \frac{2}{\sqrt{6}} & \frac{1}{\sqrt{3}} & 0 \\
 -\frac{1}{\sqrt{6}} & \frac{1}{\sqrt{3}} & -\frac{1}{\sqrt{2}} \\
 -\frac{1}{\sqrt{6}} & \frac{1}{\sqrt{3}} & \frac{1}{\sqrt{2}} 
 \end{pmatrix}.
 \label{eq:TBM}
\end{equation}
This yields $\theta_{13}=0$ in particular, which is now known to be excluded, cf.\ Eqs.~\eqref{eq:angles_NH} and~\eqref{eq:angles_IH}. Although the leading order matrix $U_\nu^{({\rm NO, IO})}$ is corrected by terms of $\mathcal{O}(\epsilon_i)$ and higher, the first (and actually also the second) order contributions do not do the job of generating a large enough $\theta_{13}$. Even higher orders are likely to be too small to yield a reasonably sized $\theta_{13}$. However, it is argued in Ref.~\refcite{Barry:2011fp} that this problem can be overcome by including higher-order terms in the Lagrangian, Eq.~\eqref{eq:BRZ_lag}. Since such corrections are unavoidable, this is a good way of solving the problem, and the reference further shows that relatively large values of $\theta_{13}$ are in fact possible.

Summing up, the model presented in Ref.~\refcite{Barry:2011fp} comprises a fully working setting, as long as the parameters are carefully chosen. Nevertheless, although FN-inspired models are seemingly perfect to explain keV neutrinos, choosing a too large FN charge can lead to problems with the strong observational bound on the active-sterile mixing.

\subsection{\label{sec:keV_flav}Models based on flavor symmetries}	

Apart from the FN mechanism, which is a simple kind of flavor symmetry, one can also use other symmetries. In the literature, one currently finds models for keV sterile neutrinos based on two relatively simple symmetries, $L_e - L_\mu - L_\tau$ and $Q_6$, although other possibilities are not excluded. The important property which the existing models based on flavor symmetries have in common is that the keV sterile neutrino mass is completely forbidden at leading order. Hence, the ``natural'' value of mass of the DM neutrino $N_1$ is exactly zero, but higher order corrections lift it to a non-zero value, which clearly follows a bottom-up scheme, cf.\ left panel of Fig.~\ref{fig:keV-schemes}. Furthermore, if the other sterile neutrinos $N_{2,3}$ obtain a mass already at leading order, these types of models simultaneously yield an explanation for $M_1 \ll M_{2,3}$. Of course, this does not necessarily have to lead to the absolute scale of keV, but the models indeed do explain a certain structure (or hierarchy) in the RH neutrino sector.

\subsubsection{\label{sec:keV_flav_Le}$L_e - L_\mu - L_\tau$ symmetry}	

The most straightforward symmetry to use is, similar to FN, a particular $U(1)$ symmetry which essentially plays the role of a generalized lepton number. The symmetry under consideration is called $L_e - L_\mu - L_\tau$, for reasons that will become clear soon. For simplicity, we will abbreviate $\mathcal{F} \equiv L_e - L_\mu - L_\tau$. Based on considerations in Ref.~\refcite{Petcov:1982ya}, this $\mathcal{F}$ symmetry was proposed in Ref.~\refcite{Lavoura:2000ci} (two RH neutrinos) and in Refs.~\refcite{Barbieri:1998mq,Mohapatra:2001ns} (three RH neutrinos). In general, the symmetry leads to a spectrum $(0,m,m)$ for light neutrinos in the limit of $\mathcal{F}$ being conserved. However, while for Ref.~\refcite{Lavoura:2000ci} this could be argued to originate from the seesaw fair play rule, in particular Ref.~\refcite{Mohapatra:2001ns} showed that the symmetry intrinsically predicts an analogous pattern $(0,M,M)$ in the heavy neutrino sector where no seesaw is at work, if an additional $\mu$-$\tau$ symmetry is present. A first application of this symmetry (and of similar patterns) to the case of keV sterile neutrinos was presented in Ref.~\refcite{Shaposhnikov:2006nn}, and a recent more detailed study was performed in Ref.~\refcite{Lindner:2010wr}.

The charge assignments of this model are illustrated in Tab.~\ref{tab:LeLmLt}. Note that $\mathcal{F}$ is a special type of $U(1)$ symmetry, which obviously has its name $L_e - L_\mu - L_\tau$ from the charge assignments. At first sight, this model looks considerably more economical than, \emph{e.g.}, the one in Sec.~\ref{sec:keV_FN_mixed}. However, the reason for this is that the exact breaking of the symmetry was not specified in Ref.~\refcite{Lindner:2010wr}. This may be seen as incompleteness or inconsistency. On the other hand, the effect on the spectrum discussed in the reference is general enough that it is applicable for more than one type of breaking. Nevertheless, a complete model would need to specify these details.

\begin{table}[h]
\tbl{Particle content and representation/charge assignments (in our notation) of the $SU(2)_2 \times \mathcal{F}$ model~\cite{Lindner:2010wr}. $SU(2)_L$ is the usual SM gauge symmetry. For $SU(2)_L$ the representations are given, while for $\mathcal{F}$ the charge is indicated.}
{\begin{tabular}{@{}lccccccccccc@{}} \toprule
Field & $L_1$ & $L_2$ & $L_3$ & $\overline{e_R}$ & $\overline{\mu_R}$ & $\overline{\tau_R}$ & $\overline{N_1}$ & $\overline{N_2}$ & $\overline{N_3}$ & $H$ & $T$ \\
\colrule
$SU(2)_L$ & $\mathbf{2}$ & $\mathbf{2}$ & $\mathbf{2}$ & $\mathbf{1}$ & $\mathbf{1}$ & $\mathbf{1}$ & $\mathbf{1}$ & $\mathbf{1}$ & $\mathbf{1}$ & $\mathbf{2}$ & $\mathbf{3}$ \\
$\mathcal{F}$ & $1$ & $-1$ & $-1$ & $-1$ & $1$ & $1$ & $-1$ & $1$ & $1$ & $0$ & $0$ \\
\botrule
\end{tabular}
\label{tab:LeLmLt}}
\end{table}

With these assignments, one can immediately write down all allowed terms in the Lagrangian, in the most general setting of seesaw type~II, cf.\ Eq.~\eqref{eq:full_mass}. However, we will for illustration focus on the RH terms first, which are given by
\begin{eqnarray}
 \mathcal{L}_M &=& -M^{12}_R \overline{(N_{1 R})^c} N_{2 R} - M^{13}_R \overline{(N_{1 R})^c} \, N_{3 R} + h.c. \nonumber\\
 && = -\frac{1}{2} \overline{(N_R)^c}
 \begin{pmatrix}
 0 & M^{12}_R & M^{13}_R\\
 M^{12}_R & 0 & 0\\
 M^{13}_R & 0 & 0
 \end{pmatrix} N_R + h.c. \equiv -\frac{1}{2} \overline{(N_R)^c} M_R N_R + h.c.
 \label{eq:Le_1}
\end{eqnarray}
The key point is that this matrix must have (at least) one zero eigenvalue, as one can easily see by computing the determinant:
\begin{equation}
 {\rm det}\ M_R = 0\cdot 0 \cdot 0 + M^{12}_R\cdot 0 \cdot M^{13}_R + M^{13}_R \cdot M^{12}_R \cdot 0 - M^{13}_R \cdot 0 \cdot M^{13}_R - 0\cdot 0 \cdot 0 - 0\cdot M^{12}_R \cdot M^{12}_R = 0.
 \label{eq:Le_2}
\end{equation}
Indeed, the eigenvalues of the matrix $M_R$ turn out to be $(0, + M, - M)$ so that one sterile neutrino is exactly massless, while the other two are degenerate with masses $M\equiv \sqrt{(M^{12}_R)^2 + (M^{13}_R)^2}$.

Applying the same logic to the full $6 \times 6$ neutrino mass matrix, one obtains
\begin{equation}
 \mathcal{L}_\nu=-\frac{1}{2} 
 \overline{\Psi^c} M_\nu \Psi^c +h.c.,\ \ \ {\rm where}\ \ \ M_\nu =
 \begin{pmatrix}
 \begin{array}{c|c}
 \begin{matrix}
 0 & m^{e \mu}_L & m^{e \tau}_L \\
 m^{e \mu}_L & 0 & 0 \\
 m^{e \tau}_L & 0 & 0 
 \end{matrix}
 &
 \begin{matrix}
 m^{e 1}_D & 0 & 0 \\
 0 & m^{\mu 2}_D & m^{\mu 3}_D \\
 0 & m^{\tau 2}_D & m^{\tau 3}_D
 \end{matrix}\\\hline
 \begin{matrix}
 m^{e 1}_D & 0 & 0 \\
 0 & m^{\mu 2}_D & m^{\tau 2}_D \\
 0 & m^{\mu 3}_D & m^{\tau 3}_D
 \end{matrix}
 &
 \begin{matrix}
 0 & M^{12}_R & M^{13}_R \\ 
 M^{12}_R & 0 & 0 \\
 M^{13}_R & 0 & 0
 \end{matrix}
 \end{array}
 \end{pmatrix},
 \label{eq:Le_3}
\end{equation}
and $\Psi = (\nu_{1L}, \nu_{2L}, \nu_{3L}, N_{1R}^c, N_{2R}^c, N_{3R}^c)$. In the seesaw type~II limit, $m^{\alpha \beta}_L \ll m^{\alpha i}_D \ll M^{i j}_R$, this matrix can be analytically diagonalized under the assumption $m^{\alpha i}_D \sim m_D$. Even for this case one obtains two zero eigenvalues, one in the LH and one in the RH sector, while the other two pairs of eigenvalues are given by $\lambda_\pm = \pm \sqrt{\frac{2(M^{12}_R-M^{13}_R)^2}{(M^{12}_R)^2+(M^{13}_R)^2}} m_D + \mathcal{O}(m^3_D)$ and $\Lambda_\pm=\pm \sqrt{(M^{12}_R)^2+(M^{13}_R)^2} + \mathcal{O}(m^2_D)$ for the light and heavy neutrinos, respectively. We can see that we furthermore need $M^{12}_R \simeq M^{13}_R$ to a good precision for the ``small'' eigenvalues $\lambda_\pm$ to be small enough. This, together with the assumption on the entries of $m_D$, is in effect nothing else than introducing an additional $\mu$-$\tau$ symmetry~\cite{Harrison:2002et}, in accordance with Ref.~\refcite{Mohapatra:2001ns}.

While the eigenvalue spectrum looks promising, in the sense that a strong hierarchy in the heavy sector is generated, we still need to justify a non-zero sterile neutrino mass $M_3$.\footnote{In this model, it is actually the third generation neutrino which is massless. Hence, for the light neutrinos an inverted hierarchy will be predicted, while in the heavy sector it is $N_3$ which plays the role of the keV sterile neutrino.} At the same time, the diagonalization of the light neutrino mass matrix is achieved with the so-called \emph{bimaximal} mixing matrix,
\begin{equation}
 U_{\rm BM}=
 \begin{pmatrix}
 \frac{1}{\sqrt{2}} & \frac{1}{\sqrt{2}} & 0\\
 -\frac{1}{2} & \frac{1}{2} & \frac{1}{\sqrt{2}} \\
 \frac{1}{2} & -\frac{1}{2} & \frac{1}{\sqrt{2}}
 \end{pmatrix},
 \label{eq:Le_4}
\end{equation}
which predicts the non-acceptable values $\theta_{12}^\nu = \frac{\pi}{2}$ and $\theta_{13}^\nu = 0$ in case they are not modified by corrections from the charged lepton sector. This is a general feature of $\mathcal{F}$ symmetry~\cite{Petcov:2004rk}. Both these problems can be cured by \emph{breaking} the symmetry. The exact mechanism of this breaking has not been discussed in Ref.~\refcite{Lindner:2010wr}, but instead several \emph{soft breaking terms} have been assumed, \emph{i.e.}, terms which suitably break the symmetry but do not introduce dangerous quadratic divergences (\emph{i.e.}, the effects on the high energy sector of the theory are assumed to be so weak that no new problems with divergences of the Higgs or other scalar masses arise, similar to the \emph{hierarchy problem} in the SM~\cite{Djouadi:2005gi}). However, the point is that no matter which breaking pattern is assumed, the resulting effect is very generic.

Before we show how to cure the mass and mixing pattern by this breaking, let us note that by the Goldstone theorem~\cite{Goldstone:1961eq,Goldstone:1962es} breaking of a global and continuous symmetry would lead to a massless Goldstone boson,\footnote{In the case at hand, the boson would actually be a so-called \emph{Majoron}~\cite{Chikashige:1980ui,Gelmini:1980re,Georgi:1981pg}.} which is phenomenologically problematic. Possible ways out would be to gauge the symmetry, which would generate a new massive gauge boson, or to use the discretized version of the symmetry instead: since only the $U(1)$ charges $\pm 1$ are used in Tab.~\ref{tab:LeLmLt}, one could instead use, \emph{e.g.}, a cyclic $Z_4 = \left\{ \pm 1, \pm i \right\}$ symmetry. The reader is invited to verify that the change of assignments from $(1, -1, -1)_{U(1)}$ to $(i, -i, -i)_{Z_4}$ for the first, second, and third generation leptons exactly reproduces the mass matrix given in Eq.~\eqref{eq:Le_3}.

We will now turn to a solution of the problem of massless neutrinos. In Ref.~\refcite{Lindner:2010wr} it was suggested to introduce soft breaking terms $s^{e e, \mu \mu, \tau \tau}_L \ll m^{\alpha \beta}_L$ and $S^{e e, \mu \mu, \tau \tau}_R \ll M^{i j}_R$ on the diagonal part of the matrix,
\begin{equation}
 M_\nu \to M_\nu + {\rm diag} (s^{e e}_L, s^{\mu \mu}_L, s^{\tau \tau}_L, S^{e e}_R, S^{\mu \mu}_R, S^{\tau \tau}_R).
 \label{eq:Le_5}
\end{equation}
These new terms must be smaller than the ``old'' ones, since they arise from symmetry breaking. In fact, even when viewed as explicit breaking terms, the whole consideration of a symmetry makes only sense in this limit.\footnote{This is very similar to the isospin symmetry between protons and neutrons in nuclear physics.} The effect of these breaking terms can be studied analytically in the limit $s^{\alpha \alpha}_L \simeq s$, $S^{ii}_R \simeq S$, and $M^{12}_R\simeq M^{13}_R \sim M_R$, which allows to compute the eigenvalues of the light neutrino mass matrix $m_\nu = m_L - m_D M_R^{-1} m_D^T$ to be $\{ \lambda^\prime_+, \lambda^\prime_-, \lambda_s, \Lambda^\prime_+, \Lambda^\prime_-, \Lambda_s \}$, where $\lambda^\prime_\pm= s\pm \sqrt{2} \left[ m_L - \frac{m_D^2}{M_R} \right] + \frac{5 m_D^2 S}{4 M_R^2} + \mathcal{O}\left(\frac{S^2}{M^3_R}\right)$, $\lambda_s=s$, $\Lambda^\prime_\pm=S\pm\sqrt{2} M_R$, and $\Lambda_s=S$. Indeed, $S \ll M_R$ justifies taking $S = \mathcal{O} ({\rm keV})$, while $M_R$ can be of $\mathcal{O} ({\rm GeV})$, or larger. In fact, using the experimental data on the mass square differences, one can use these results to fully predict the light neutrino mass spectrum to be $m_1=s+b$, $m_2=s-b$, and $m_3=s$, where $b=0.0489$~eV and $s=-3.9 \times 10^{-4}$~eV, which yields $|m_1|=0.0486$~eV, $|m_2|=0.0494$~eV, and $|m_3|=0.0004$~eV. This, in particular, confirms that inverted ordering is predicted.

It remains to show that also the neutrino mixings turn out to be okay. Sticking to our simple example, it has been shown in Ref.~\refcite{Lindner:2010wr} that indeed the diagonalization matrix for light neutrinos can be modified compared to Eq.~\eqref{eq:Le_4}. First of all, the light neutrino mass matrix is given by
\begin{equation}
 m_\nu = m_L - m_D M_R^{-1} m_D^T =
 \begin{pmatrix}
 s+\frac{m_D^2 S}{2 M_R^2} & m_L-\frac{m_D^2}{M_R}-\frac{m_D^2 S^2}{2 M_R^3} & m_L-\frac{m_D^2}{M_R}-\frac{m_D^2 S^2}{2 M_R^3} \\
 m_L-\frac{m_D^2}{M_R}-\frac{m_D^2 S^2}{2 M_R^3} & s+\frac{m_D^2 S}{M_R^2} & \frac{m_D^2 S}{M_R^2} \\
 m_L-\frac{m_D^2}{M_R}-\frac{m_D^2 S^2}{2 M_R^3} & \frac{m_D^2 S}{M_R^2} & s+\frac{m_D^2 S}{M_R^2}
 \end{pmatrix}.
 \label{eq:Le_6}
\end{equation}
This modifies the neutrino mixing matrix from bimaximal to
\begin{equation}
 U_{\rm BM} \to U_\nu =
 \begin{pmatrix}
 \frac{1}{\sqrt{2}}-\epsilon  & \frac{1}{\sqrt{2}} +\epsilon & 0 \\
 -\frac{1}{2} - \frac{\epsilon }{\sqrt{2}} & \frac{1}{2}-\frac{\epsilon }{\sqrt{2}} & \frac{1}{\sqrt{2}} \\
 \frac{1}{2} + \frac{\epsilon }{\sqrt{2}} & -\frac{1}{2}+\frac{\epsilon }{\sqrt{2}} & \frac{1}{\sqrt{2}}
 \end{pmatrix},\ \ \ {\rm where}\ \ \ \epsilon=\frac{3 m_D^2 S}{16 M_R^2 \left( m_L-\frac{m_D^2}{M_R} s\right)}.
 \label{eq:Le_7}
\end{equation}
However, this matrix still has to be combined with the mixing coming from the charged leptons. A suitable charged lepton mixing matrix is given by~\cite{Frampton:2004ud}
\begin{equation}
 U_e =
 \begin{pmatrix}
 1-\lambda^2/2 & \lambda & \lambda^3\\
 -\lambda & 1-\lambda^2/2 & \lambda^2\\
 \lambda^3 & -\lambda^2 & 1
 \end{pmatrix} + \mathcal{O}(\lambda^4),
 \label{eq:Le_8}
\end{equation}
where $\lambda \simeq 0.20$ is the deviation of $\theta_{12}$ from $\frac{\pi}{2}$. Building the PMNS matrix $U_{\rm PMNS} = U_e^\dagger U_\nu$ in this basis (with non-diagonal charged lepton mass matrix), one obtains for the mixing angles
\begin{equation}
 (\tan^2\theta_{12}, |U_{e3}|, \sin^2 2\theta_{23}) \simeq \left( 1- 2\sqrt{2}\lambda - 2 \sqrt{2}\lambda^3 + 4\lambda^4, \frac{\lambda}{\sqrt{2}}, 1-4\lambda^4 \right),
 \label{eq:LeMIX}
\end{equation}
or $(\sin^2 \theta_{12}, \sin^2 \theta_{13}, \sin^2 \theta_{23}) \simeq (0.295, 0.020, 0.460)$, which is all consistent with data at the 3$\sigma$ level, cf.\ Eq.~\eqref{eq:angles_IH}.

Finally, it remains to be shown that the charged lepton squared mass matrix necessary for Eq.~\eqref{eq:Le_8} to arise,
\begin{equation}
 M_e M_e^\dagger \simeq
 \begin{pmatrix}
 m^2_e+m^2_\mu\lambda^2 & m^2_\mu\lambda & 0\\
 m^2_\mu \lambda & m^2_\mu & 0\\
 0 & 0 & m^2_\tau
 \end{pmatrix} + \mathcal{O}(\lambda^3),
 \label{eq:Le_9}
\end{equation}
can be obtained from softly broken $\mathcal{F}$ symmetry. Indeed, if the symmetry preserving charged lepton mass matrix as derived from the assignments in Tab.~\ref{tab:LeLmLt} is extended by soft breaking terms, $s_e^{\alpha \beta}$, which would be zero in the symmetry preserving limit, one obtains
\begin{equation}
 M_e =
 \begin{pmatrix}
 M^{e e}_e & 0 & 0 \\
 0 & M^{\mu \mu}_e & M^{\mu \tau}_e \\
 0 & M^{\tau \mu}_e & M^{\tau \tau}_e
 \end{pmatrix} \to
 \begin{pmatrix}
 M^{e e}_e & s^{e \mu}_e & s^{ e \tau}_e \\
 s^{e \mu}_e & M^{\mu \mu}_e & M^{\mu \tau}_e \\
 s^{e \tau}_e & M^{\mu \tau}_e & M^{\tau \tau}_e
 \end{pmatrix}.
 \label{eq:Le_10}
\end{equation}
This yields Eq.~\eqref{eq:Le_9} for the identification $s^{e \mu}_e = M^{\mu \mu}_e \lambda$, $s^{e \tau}_e = 0$, $M^{e e}_e = m_e$, $M^{\mu \mu}_e = m_\mu$, and $M^{\tau \tau}_e=m_\tau$. However, it should be pointed out that the form of the matrix in Eq.~\eqref{eq:Le_10}, even though it can be obtained in accordance with $\mathcal{F}$ symmetry is not a strict prediction of the model.

The corresponding mass shifting scheme is depicted in Fig.~\ref{fig:Le-scheme}, and it clearly resembles the bottom-up scheme from the left panel of Fig.~\ref{fig:keV-schemes}. As illustrated in the figure, the masses of the two heavier neutrinos are exactly degenerate in the symmetry limit, and the lightest sterile neutrino is exactly massless. Once the symmetry is broken, the additional small soft terms will not only lift the degeneracy between the two heavier states, but they will also give a small mass, of the order of the breaking terms, to the lightest state. Hence, this model motivates a mass pattern $M_1 = \mathcal{O}({\rm keV}) \ll M_{2,3}$.

\begin{figure}[pb]
\centerline{
\psfig{file=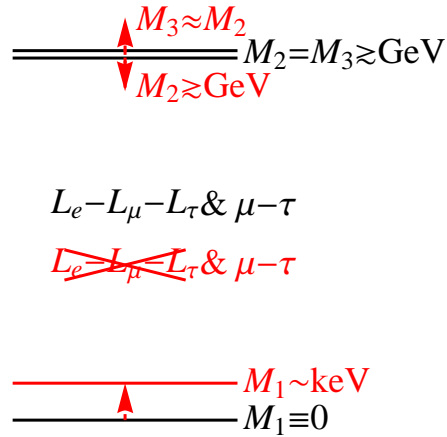,width=6.0cm}
}
\vspace*{8pt}
\caption{\label{fig:Le-scheme}Mass shifting scheme for softly broken $\mathcal{F}$ symmetry. Breaking the symmetry lifts the sterile neutrino mass $M_1$ to a non-zero value, which clearly resembles the bottom-up scheme, cf.\ left panel of Fig.~\ref{fig:keV-schemes}. (Figure similar to Fig.~1.a in Ref.~\cite{Merle:2011yv}.)}
\end{figure}

\subsubsection{\label{sec:keV_flav_Q6}$Q_6$ symmetry}	

Another proposal is a model for keV sterile neutrinos based on the symmetry group $Q_6$, which was presented in Ref.~\refcite{Araki:2011zg}. $Q_6$ is nothing else than the double cover of the dihedral group $D_3=S_3$,\footnote{$D_3$ is the full symmetry group of a triangle, \emph{i.e.}, three rotations and three reflections, which amounts to a total of six elements. This is why some authors use the notation ``$D_6$'' for this group, which is unfortunately a bit ambiguous as it could also be interpreted as the full symmetry group of a hexagon~\cite{Ramond}.} as explained, \emph{e.g.}, in Refs.~\refcite{Frampton:1994rk,Babu:2004tn,Kajiyama:2005rk}.

The required group theory of $Q_6$ is outlined in Ref.~\refcite{Araki:2011zg}, but we will comment on a couple of things in passing. This group is not much more difficult than $A_4$: it has twelve elements, and four different one-dimensional irreps $\mathbf{1}$, $\mathbf{1'}$, $\mathbf{1''}$, $\mathbf{1'''}$, as well as two different two-dimensional irrpes $\mathbf{2}$, $\mathbf{2'}$. Note that $\mathbf{2}$ is a \emph{pseudoreal} representation, \emph{i.e.}, it contains only matrices with real entries, but the squares of some of them are nevertheless negative. An important difference to $A_4$ is, however, that $Q_6$ has no three-dimensional irreps, so that a model based on $Q_6$ can unify at most two of the three known generations of matter. Because of this fact, one might criticize the aesthetic power of this group, but one could also argue that part of the data actually seems to favor a unification of only two generations of leptons. For example, when looking at charged lepton masses, $m_\mu$ and $m_\tau$ are much closer to each other than they are to $m_e$. One can always take on a very pragmatic point of view and judge a model by its consistency with experimental data rather than by requiring it to predict certain patterns that may or may not exist in Nature.

The model under consideration is relatively economic since only three singlet flavons $S_{x,y,z}$ and one doublet flavon $D$ are introduced in addition to the lepton sector. The detailed charge and representation assignment is outlined in Tab.~\ref{tab:Q6}. Also in this model, two auxiliary symmetries $Z_{2,3}$ have been introduced in order to suppress certain terms in the Lagrangian.

\begin{table}[ph]
\tbl{Particle content and representation/charge assignments (in our notation) of the $SU(2)_L \times Q_6 \times Z_2 \times Z_3$ model with one keV sterile neutrino~\cite{Barry:2011fp}. $SU(2)_L$ is the usual SM gauge symmetry and $Z_{2,3}$ are additional auxiliary symmetries. For $SU(2)_L$ and for $Q_6$ the representations are given, while for $Z_{2,3}$ the charge is indicated. Note that $\omega = e^{2\pi i /3}$ and that $\mathbf{2}^* = \mathbf{2}$ in $Q_6$.}
{\begin{tabular}{@{}lccccccccccc@{}} \toprule
Field & $L_1$ & $L_D\equiv L_{2,3}$ & $\overline{e_R}$ & $\overline{e_{DR}} = (\overline{\mu_R},\overline{\tau_R})$ & $\overline{N_1}$ & $\overline{N_D} \equiv \overline{N_{2,3}}$ & $H$ & $S_x$ & $S_y$ & $S_z$ & $D$ \\
\colrule
$SU(2)_L$ & $\mathbf{2}$ & $\mathbf{2}$ & $\mathbf{1}$ & $\mathbf{1}$ & $\mathbf{1}$ & $\mathbf{1}$ & $\mathbf{2}$ & $\mathbf{1}$ & $\mathbf{1}$ & $\mathbf{1}$ & $\mathbf{1}$ \\
$Q_6$ & $\mathbf{1}$ & $\mathbf{2'}$ & $\mathbf{1'}$ & $\mathbf{2}$ & $\mathbf{1''}$ & $\mathbf{2'}$ & $\mathbf{1}$ & $\mathbf{1''}$ & $\mathbf{1'''}$ & $\mathbf{1}$ & $\mathbf{2}$ \\
$Z_2$ & $+$ & $+$ & $+$ & $+$ & $+$ & $-$ & $+$ & $+$ & $+$ & $-$ & $-$ \\
$Z_3$ & $1$ & $1$ & $\omega$ & $\omega^2$ & $1$ & $1$ & $1$ & $\omega^2$ & $\omega^2$ & $1$ & $1$ \\
\botrule
\end{tabular}
\label{tab:Q6}}
\end{table}

Using the representation content as well as the group theory of $Q_6$, we can again write down the lowest order Lagrangian for the charged leptons,\footnote{Note that this is the Hermitean conjugate of the Lagrangian presented in Ref.~\refcite{Araki:2011zg}. Furthermore, we take the mass terms to be negative, as is conventional for a Dirac mass, although one could simply redefine the couplings to account for that. Finally, since we are effectively taking the complex conjugate, one has to keep in mind that $\mathbf{1''}^* = \mathbf{1'''}$ in $Q_6$.}
\begin{equation}
 \mathcal{L}_e = -\frac{Y_x^*}{\Lambda} (\overline{e_{DR}} H L_D)_{\mathbf{1''}} S_x^* - \frac{Y_y^*}{\Lambda} (\overline{e_{DR}} H L_D)_{\mathbf{1'''}} S_y^* - \frac{Y_e^2}{\Lambda^2} (\overline{e_R} H L_1)_{\mathbf{1'}} (S_x^*)^2 + h.c.
 \label{eq:Q6_1}
\end{equation} 
Here, we have made use of the composition rule for two doublets $\mathbf{2} = (x_1, x_2)^T$ and $\mathbf{2'} = (y_1, y_2)^T$, whose tensor product $\mathbf{2} \otimes \mathbf{2'}$ can be decomposed into two singlets, $\mathbf{1''} = (x_1 y_1 - x_2 y_2)$ and $\mathbf{1'''} = (x_1 y_1 + x_2 y_2)$, and one doublet $\mathbf{2} = (x_2 y_1, x_1 y_2)^T$. Furthermore, we have used $\mathbf{1'} \otimes \mathbf{1'} = \mathbf{1}$ and $\mathbf{1''} \otimes \mathbf{1''} = \mathbf{1'''} \otimes \mathbf{1'''} = \mathbf{1'}$. Note that the whole Lagrangian in Eq.~\eqref{eq:Q6_1} arises as correction to the leading order, and hence all the terms inside are non-renormalizable and suppressed by a high mass scale $\Lambda$. Assuming singlet flavon VEVs $\langle S_i \rangle = s_i$ for $i=x,y,z$, the charged lepton masses obtained after electroweak symmetry breaking are given by
\begin{equation}
 m_e = \left(\frac{s_x}{\Lambda}\right)^2 Y_e v,\ \ \ m_\mu = \frac{1}{\Lambda} \left( Y_x s_x + Y_y s_y \right)v,\ \ \ m_\tau = \frac{1}{\Lambda} \left( Y_x s_x - Y_y s_y \right) v.
 \label{eq:Q6_2}
\end{equation} 
Indeed, a tendency that had been anticipated reveals itself in these eigenvalues: the representation content was chosen such that the muon and the tau unify in doublets of $Q_6$, cf.\ Tab.~\ref{tab:Q6}, while the electron transforms as a singlet. Accordingly, the masses of $\mu$ and $\tau$ in Eq.~\eqref{eq:Q6_2} are identical up to a splitting $2 Y_y s_y v$, while the electron mass is suppressed (compared to $m_{\mu,\tau}$) by another power of the high energy scale $\Lambda$.

The corresponding Lagrangian for the neutrino sector turns out to be
\begin{eqnarray}
\mathcal{L}_\nu &=& - \frac{\alpha^*}{\Lambda} (\overline{N_D} D^*)_{\mathbf{1}} \tilde H L_1 - \frac{\beta^*}{\Lambda} (\overline{N_D} \tilde H L_D)_{\mathbf{2'}} D^* - \frac{\gamma^*}{\Lambda} (\overline{N_D} \tilde H L_D^*)_{\mathbf{1}} S_x \nonumber\\
 && - \frac{\delta^*}{\Lambda^3} (\overline{N_1})_{\mathbf{1'''}} \tilde H L_1 (S_x^*)^3 - \frac{\epsilon}{\Lambda^3} (\overline{N_D} \tilde H L_D)_{\mathbf{1'}} S_x^* S_y S_z^* \label{eq:Q6_3} \\
 && - \frac{M_a}{2} [\overline{(N_D)^c} N_D]_{\mathbf{1}} - \frac{1}{2} \frac{M_b}{\Lambda^2} [\overline{(N_D)^c} N_D]_{\mathbf{2'}} \left[(D^*)^2\right]_{\mathbf{2'}} - \frac{1}{2} \frac{M_c}{\Lambda^2} [\overline{(N_1)^c} N_1]_{\mathbf{1'}} S_x^* S_y + h.c., \nonumber
\end{eqnarray}
where the RH Majorana masses are taken to be real. Note that we have introduced a factor $\frac{1}{2}$ compared to the corresponding equation in Ref.~\refcite{Araki:2011zg}, in order to coincide with the notation in the remainder of this review. In Eq.~\eqref{eq:Q6_3}, we have made use of further tensor products, such as $\mathbf{1''} \otimes \mathbf{1'''} = \mathbf{1}$ or $\mathbf{2'} \otimes \mathbf{2'} = \mathbf{1} \oplus \mathbf{1'} \oplus \mathbf{2'}$. The reader is invited to work out the exact form and the invariance of the remaining terms in Eq.~\eqref{eq:Q6_3} her or himself, using the information on the group $Q_6$ given in Ref.~\refcite{Araki:2011zg}.

The most important aspect of Eq.~\eqref{eq:Q6_3} can be found in the last line of the equation: since $N_D$ transforms as a doublet $\mathbf{2'}$ under $Q_6$, the combination $\overline{(N_D)^c} N_D$ indeed contains a trivial singlet $\mathbf{1}$. To be precise, the tensor product of two doublets $\mathbf{2'} = (x_1, x_2)^T$ and $\mathbf{2'} = (y_1, y_2)^T$ contains a term $\mathbf{1} = (x_1 y_2 - x_2 y_1)$, and the conjugations inside $\overline{(N_1)^c}$ do not matter since $\mathbf{2'}$ is real. In conclusion, the corresponding term in the Lagrangian yields a tree-level mass for $N_{2,3}$,
\begin{equation}
 - \frac{M_a}{2} [\overline{(N_D)^c} N_D]_{\mathbf{1}} + h.c. = - \frac{M_a}{2} (\overline{N_2} N_3^c - \overline{N_3} N_2^c) + h.c.
 \label{eq:Q6_4}
\end{equation}
However, the first RH neutrino, $N_1$, transforms as a non-trivial singlet and, similar to what happens to the electron in Eq.~\eqref{eq:Q6_2}, its mass is suppressed by two more powers of the high scale $\Lambda$ compared to the masses of $N_{2,3}$. A natural hierarchy $M_1 \ll M_{2,3}$ in the sterile neutrino sector is predicted, and hence we have a good reason for taking $N_1$ to be the keV sterile neutrino.

Taking the VEV of the doublet flavon to be $\langle D \rangle = (d_1, d_2)^T$, one can immediately read off $M_R$ from Eq.~\eqref{eq:Q6_3},
\begin{equation}
 M_R=
\begin{pmatrix}
 0 & 0 & 0 \\
 0 & 0 & M_a \\
 0 & M_a & 0
\end{pmatrix}
+
\frac{1}{\Lambda^2}
 \begin{pmatrix}
 M_c s_x s_y & 0 & 0 \\
 0 & M_b d_2^2 & 0 \\
 0 & 0 & M_b d_1^2
\end{pmatrix},
 \label{eq:Q6_5}
\end{equation}
whose eigenvalues are given by $M_1 = \frac{s_x s_y}{\Lambda^2} M_c$ and $M_{2,3} \simeq M_a \mp \frac{M_b}{2\Lambda^2}(d_1^2 + d_2^2)$. This can be viewed as a realization of the Bottom-up scheme, cf.\ Fig.~\ref{fig:Q6-scheme}: while the zeroth and first order contributions to $M_1$ vanish exactly, the second-order correction in $1/\Lambda$ corrects this natural value (as well as the exact degeneracy between $M_2$ and $M_3$) to yield a finite mass $M_1 \ll M_{2,3}$.

The Dirac mass matrix, in turn, is given by
\begin{equation}
 m_D^T =
 \frac{1}{\Lambda}
 \begin{pmatrix}
 0 & \alpha d_2 & \alpha d_1 \\
 0 & \beta d_1  & \gamma s_z \\
 0 & \gamma s_z & \beta d_2
 \end{pmatrix} v
 +
 \frac{1}{\Lambda^3}
 \begin{pmatrix}
 \delta s_x^3 & 0 & 0 \\
 0 & 0 & \epsilon s_x s_y s_z \\
 0 & -\epsilon s_x s_y s_z & 0
 \end{pmatrix} v.
 \label{eq:Q6_6}
\end{equation}
Due to $M_1 \ll M_{2,3}$, the authors of Ref.~\refcite{Araki:2011zg} chose to integrate out $N_{2,3}$ only, although they could actually have applied the seesaw formula according to the keV seesaw practicality theorem, cf.\ Sec.~\ref{sec:keV_general_seesaw}. Restraining from that, according to the seesaw fair play rule, cf.\ Sec.~\ref{sec:keV_general_seesaw}, one light neutrino mass eigenvalue will be precisely zero.

Applying this procedure, one arrives at a $4\times 4$ neutrino mass matrix given by
\begin{equation}
 M_\nu^{4\times 4} =
 \begin{pmatrix}
 M_\nu^{3\times 3} & \Delta \\
 \Delta^T & M_1
 \end{pmatrix},
 \label{eq:Q6_7}
\end{equation}
where $\Delta=(\delta' v, 0, 0)^T$, and the upper left $3\times 3$ block is
\begin{eqnarray}
 M_\nu^{3\times 3} &=& \frac{v^2}{M_a} 
 \begin{pmatrix} 
 2 {\alpha'}^2 & \alpha' (\beta' + \gamma') & \alpha' (\beta' + \gamma') \\
 \alpha' (\beta' + \gamma') & 2 \beta' \gamma' & {\beta'}^2 + {\gamma'}^2 \\
 \alpha' (\beta' + \gamma') & {\beta'}^2 + {\gamma'}^2 & 2 \beta' \gamma'
 \end{pmatrix} \nonumber \\
 && + \epsilon_d \frac{v^2}{M_a}
 \begin{pmatrix}
 0 & \alpha' (2 \beta' - \gamma') & - \alpha' (2 \beta' - \gamma') \\
 \alpha' (2 \beta' - \gamma') & 2 \beta' \gamma' & 0 \\
 - \alpha' (2 \beta' - \gamma') & 0 & - 2 \beta' \gamma'
 \end{pmatrix}
 + \mathcal{O}(\epsilon^2_d),
 \label{eq:Q6_8}
\end{eqnarray}
where $d \equiv \frac{d_1 + d_2}{2}$ and $\epsilon_d \equiv \frac{d_1 - d_2}{2 d}$, such that $\epsilon_d = 0$ in the limit $d_1 = d_2$. Furthermore, the abbreviations used are $\alpha' \equiv \alpha \frac{d}{\Lambda}$, $\beta' \equiv \beta \frac{d}{\Lambda}$, $\gamma' \equiv \gamma\frac{s_z}{\Lambda}$, and $\delta' \equiv \delta \left(\frac{s_x}{\Lambda}\right)^3$. Note that the latter definition suggests that the parameter $\delta'$ is actually quite small.

In Ref.~\refcite{Araki:2011zg}, the $4\times 4$ neutrino mass matrix is approximately diagonalized in the limit of small $\delta'$, which is done by a matrix $\overline{U}_{4 \times 4}$ given by
\begin{equation}
 \overline{U}_{4 \times 4} \simeq
 \begin{pmatrix}
 i (\mathbf{1} - \frac{1}{2} R \otimes R^\dagger) & R\\
 -i R^\dagger & \mathbf{1} - \frac{1}{2} R^\dagger R
 \end{pmatrix}
 \begin{pmatrix}
 V_A & 0\\
 0 & 1
 \end{pmatrix},\  {\rm where}\ V_A \simeq
 \begin{pmatrix}
  \cos \theta e^{i\rho} & \sin \theta e^{i\rho} & 0\\
 -\frac{\sin \theta}{\sqrt{2}} e^{i\sigma} & \frac{\cos \theta}{\sqrt{2}} e^{i\sigma} & -\frac{1}{\sqrt{2}}\\
 -\frac{\sin \theta}{\sqrt{2}} e^{i\sigma} & \frac{\cos \theta}{\sqrt{2}} e^{i\sigma} & \frac{1}{\sqrt{2}}
 \end{pmatrix}
 \label{eq:Q6_9}
\end{equation}
and $R = (\delta' v / M_1, 0, 0)$. Furthermore, the abbreviations $\rho = {\rm arg} (\alpha')$, $\sigma = {\rm arg}(\beta' + \gamma')$, and $\sin \theta = \frac{ \sqrt{2} |\alpha'|}{\sqrt{|\beta' + \gamma' |^2 + 2 |\alpha'|^2}}$ have been used. Note that $V_A$ is exactly unitary and $\overline{U}_{4 \times 4}$ is approximately unitary for small $|R|$, which indeed requires a relatively tiny $\delta'$ [in Ref.~\refcite{Araki:2011zg} it is taken to be of $\mathcal{O}(10^{-13.5})$]. However, the expression given for $V_A$ only diagonalizes the leading part of $M_\nu^{3\times 3}$, \emph{i.e.}, the one for which $\epsilon_d = 0$. Furthermore, it is exactly this suppression of the parameter $\delta'$ which naturally leads to a suppression of the active-sterile mixing, which is in this model given by $\theta_1^2 \simeq {\delta'}^2 v^2/M_1^2 \lesssim 10^{-11}$, while corrections from $N_{2,3}$ are tiny. This suppression prevents the model from too much trouble with the X-ray bound.

The mass eigenvalues of the light neutrinos resulting from the diagonalization of $M_\nu^{3\times 3}$ from Eq.~\eqref{eq:Q6_8} are given by
\begin{equation}
 m_1 = 0\ \ \ ,\ \ \ m_2 = \frac{|\beta' + \gamma'|^2 + 2 |\alpha'|^2}{M_a}v^2\ \ \ ,\ \ \ m_3 = -\frac{(\beta' - \gamma')^2}{M_a} v^2\ \ \ ,
 \label{eq:Q6_10}
\end{equation}
where one light neutrino is massless, as anticipated. In turn, this model predicts not only normal mass ordering, but even a hierarchy among the light neutrinos.

The authors also calculate the neutrino mixing angles resulting from Eq.~\eqref{eq:Q6_9}. They argue that the phases $\rho$ and $\sigma$ are unphysical, which is correct, but still they appear in their expression for $\theta_{23}$. This is not entirely correct, as one could have seen when following the detailed diagonalization procedure of the leptonic mass matrices as outlined in the Appendices~B of Refs.~\refcite{Antusch:2005gp,Antusch:2003kp}. However, one can easily cure the results for the mixing angles as obtained in Ref.~\refcite{Araki:2011zg} by simply setting $\rho = \sigma = 0$ in their epxressions. Then, since $U_e = {\bf 1}$ due to the charged lepton mass matrix being diagonal, the physical mixing angles are identical to the mixing angles from the neutrino sector alone, $\theta_{ij} = \theta_{ij}^\nu$. Their lowest order expressions are given by
\begin{equation}
 \sin \theta_{12} \simeq \sin \theta\ \ \ ,\ \ \ \tan \theta_{23} \simeq \left|1 - \frac{4 \beta' \gamma' }{(\beta' - \gamma')^2} \epsilon_d \right|\ \ \ ,\ \ \ \sin \theta_{13} \simeq \left|\frac{\sqrt{2} \alpha' (2 \beta' - \gamma')}{(\beta' - \gamma')^2}\epsilon_d\right|,
 \label{eq:Q6_11}
\end{equation}
where corrections of $\mathcal{O}(\epsilon_d^2)$ have been neglected. Note that the angle $\theta_{13}$ as well as the deviation of $\theta_{23}$ from $\frac{\pi}{2}$ arise only as higher order corrections. Nevertheless, we now know that the parameter of $\epsilon_d$ must be large enough to yield an angle $\theta_{13}$ as large as measured, cf.\ Eqs.~\eqref{eq:angles_NH} and~\eqref{eq:angles_IH}. However, the mixing angle $\theta_{12}$ is not predicted and must be matched to the data.

Let us end this section by an interesting feature of the model, which is illustrative for how flavor models could be tested. Comparing the equations for $\theta_{13}$ and $\theta_{23}$ in Eq.~\eqref{eq:Q6_11}, one can derive a \emph{correlation} between these two angles:
\begin{equation}
 \tan \theta_{23} \simeq 1 - \frac{2 \sqrt{2} \beta' \gamma' }{|\alpha' (2 \beta' - \gamma')|} \sin \theta_{13}.
 \label{eq:Q6_12}
\end{equation}
This equation implies that $\theta_{23} = \frac{\pi}{2}$ if $\theta_{13} = 0$, but also that $\theta_{23} \neq \frac{\pi}{2}$ if $\theta_{13} \neq 0$. More precisely, the sign of $\beta' \gamma'$ determines whether $\theta_{23}$ is smaller or larger than $\frac{\pi}{2}$. This is particularly interesting when looking at the newest global fits, cf.\ Refs.~\refcite{Tortola:2012te,Fogli:2012ua,GonzalezGarcia:2012sz}. This correlation has been numerically verified and matched to the actual data in Ref.~\refcite{Araki:2011zg}. Such correlations are the keys to identify whether a certain model finally turns out to be close to reality, or not.

\begin{figure}[pb]
\centerline{
\psfig{file=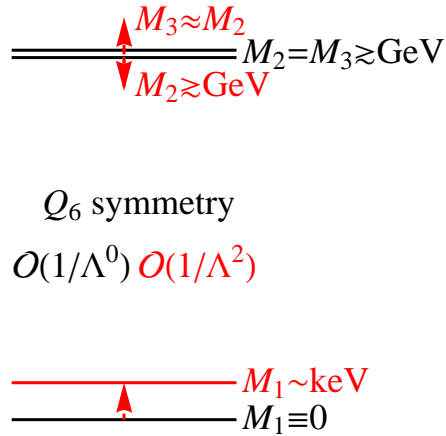,width=6.0cm}
}
\vspace*{8pt}
\caption{\label{fig:Q6-scheme}Mass shifting scheme for the $Q_6$ symmetry. While at zeroth order in the cutoff scale $N_1$ is massless and $N_{2,3}$ are degenerate, both these problems are solved if second order corrections are taken into account. This mechanism clearly resembles the bottom-up scheme, cf.\ left panel of Fig.~\ref{fig:keV-schemes}.}
\end{figure}

After having discussed several models based on flavor symmetries, we will now turn to other ways to justify the keV scale for sterile neutrinos. A completely different mechanism to generate very small mass scales, based on the existence of extra spatial dimensions, will be presented in the next section.

\subsection{\label{sec:keV_split}Models based on the split seesaw mechanism}	

A somewhat orthogonal approach to explain the keV scale is by mechanisms that are not based on some flavor symmetry. One possibility is to have a non-trivial space-time geometry, and the most generic approach is making use of extra spatial dimensions. The corresponding mechanism which can explain the existence of a keV scale is the so-called \emph{split seesaw mechanism}~\cite{Kusenko:2010ik}. Unfortunately, we will here not be able to review all the theory behind extra dimensions. Instead, we guide the reader to excellent introductions to and reviews of extra dimensions, such as Refs.~\refcite{Rizzo:2004kr,AgasheED,CzakiED}. However, the amount of knowledge on extra dimensions that is necessary to understand the key concepts of the split seesaw mechanism is relatively limited, and we will in addition give brief explanations of the required concepts in passing.

\subsubsection{\label{sec:keV_split_mechamism}The split seesaw mechanism}	

The \emph{split seesaw mechanism} has been proposed in Ref.~\refcite{Kusenko:2010ik}. It is based on a 5-dimensional (5D) model compactified on $S^1/Z_2$: the extra dimension is rolled to a circle $S^1$, which is parametrized by an angle $\phi \in [0, 2 \pi R]$. In addition, the pairs of coordinates $(\phi, 2 \pi R - \phi)$ on the circle are identified with each other, which is called \emph{orbifolding}. Hence, the 5D coordinate can be taken to be $y \in [0, l]$, where $l=\pi R$. The model contains two \emph{branes} (= 4D subspaces which are located at specific points of the extra dimension), one at $y=0$ to which all SM fields are confined and another \emph{hidden} brane at $y=l$. Such a setting leads to a reduced 4D Planck scale $M_P$, which is related to the fundamental Planck scale $M_0$ by $M_P^2 = M_0^3 l$. Note that this means $M_P < M_0$, due to $l^{-1} < M_0$.

The next step is to write down the 5D action of the model for a 5D Dirac spinor $\Psi = (\Psi_L, \Psi_R)^T$ with a \emph{bulk mass} $m$ in five dimensions:
\begin{equation}
 S = \int d^4 x \int\limits_0^l dy\ M_0\ (i \overline{\Psi} \Gamma^A \partial_A \Psi - m \overline{\Psi} \Psi).
 \label{eq:split_1}
\end{equation}
This action looks pretty similar to a generic 4D action, apart from one more integration over the extra dimension $y$, an additional insertion of $M_0$ in order to make the full action dimensionless, and and index $A$ instead of $\mu$. Furthermore, the field $\Psi$ depends on all space-time coordinates, $\Psi = \Psi (x^\mu, y)$. Note that the index $A$ runs over $0$, $1$, $2$, $3$, $5$, where $5$ labels the extra dimensional coordinate $y$, and the corresponding gamma matrices are defined as
\begin{equation}
 \Gamma^\mu = \gamma^\mu\ \ , \ \ \ \Gamma^5 = -i
 \begin{pmatrix}
 \mathbf{1} & 0 \\
 0 & - \mathbf{1}
 \end{pmatrix}.
 \label{eq:split_2}
\end{equation}
The important point is that the LH and RH projections are defined by the new $\Gamma^5$ matrix, $\Psi_{L,R} \equiv \frac{1}{2} (1\mp \Gamma^5) \Psi$ (see Ref.~\refcite{AgasheED} for a discussion of the issue of defining chirality in theories with extra dimensions).

The next step is to integrate out the extra dimension, in order to obtain an effective model in 4D. A typical feature of such settings is the existence of a \emph{Kaluza-Klein (KK) tower}, which is a stack of 4D fields with identical quantum numbers (up to chirality) but increasing masses. This existence of this tower is easy to understand: as well-known from non-relativistic quantum mechanics, there is an infinite number of states with increasing energy inside a potential well. Essentially the same happens within the extra dimension.

A suitable ansatz for a decomposition in KK modes in our case is given by~\cite{Grossman:1999ra}:
\begin{equation}
 \Psi_{L,R} (x^\mu, y) = \sum_n \psi_{L,R}^{(n)} (x^\mu)\ f_{L,R}^{(n)}(y),
 \label{eq:split_3}
\end{equation}
where the index $n$ labels the different modes in the KK tower. Note the separation of the coordinates on the right-hand side of Eq.~\eqref{eq:split_3}. This ansatz has to be inserted into Eq.~\eqref{eq:split_1}, where we should be careful to note that $\overline{\Psi} \Gamma^\mu \partial_\mu \Psi = \overline{\Psi_L} \overline{\sigma}^\mu \partial_\mu \Psi_L + \overline{\Psi_R} \sigma^\mu \partial_\mu \Psi_R$ while $\overline{\Psi} \Gamma^5 \partial_5 \Psi = \overline{\Psi_L} (-i \mathbf{1}) \partial_5 \Psi_R + \overline{\Psi_R} (+i \mathbf{1}) \partial_5 \Psi_L$, since $\Gamma^5$ anti-commutes with all $\Gamma^\mu$ but of course commutes with itself. Inserting the KK-decomposition, Eq.~\eqref{eq:split_3}, one obtains
\begin{eqnarray}
 S &=& \int d^4 x\ \sum_{n,k} \left[ \overline{\psi_L^{(n)}} i \overline{\sigma}^\mu \partial_\mu \psi_L^{(k)} \right] \left( \int dy\ M_0\ {f_L^{(n)}}^*(y) f_L^{(k)}(y) \right) \nonumber\\
 && + \int d^4 x\ \sum_{n,k} \left[ \overline{\psi_R^{(n)}} i \sigma^\mu \partial_\mu \psi_R^{(k)} \right] \left( \int dy\ M_0\ {f_R^{(n)}}^*(y) f_R^{(k)}(y) \right)\nonumber \\
 && + \int d^4 x\ \sum_{n,k}\overline{\psi_L^{(n)}} \psi_R^{(k)}  \left( \int dy\ M_0\ {f_L^{(n)}}^*(y) (-\partial_y - m)  f_R^{(k)}(y) \right) \nonumber \\
 && + \int d^4 x\ \sum_{n,k}\overline{\psi_R^{(n)}} \psi_L^{(k)}  \left( \int dy\ M_0\ {f_R^{(n)}}^*(y) (+\partial_y - m)  f_L^{(k)}(y) \right).
 \label{eq:split_4}
\end{eqnarray}
In order to yield the correct 4D interpretation, the kinetic terms should have their canonical normalization. For this to be the case, the action in Eq.~\eqref{eq:split_4} needs to be equivalent to
\begin{equation}
S = \int d^4 x\ \sum_n \left(  \overline{\psi_L^{(n)}} i \overline{\sigma}^\mu \partial_\mu \psi_L^{(n)}  - m_n \overline{\psi_L^{(n)}} \psi_R^{(n)}  + ( L \leftrightarrow R, \overline{\sigma} \to \sigma ) \right),
 \label{eq:split_5}
\end{equation}
which will be fulfilled if the following two relations hold,
\begin{equation}
 \int dy\ M_0\ {f_{L,R}^{(n)}}^*(y) f_{L,R}^{(k)}(y)= \delta_{nk}\ \ \ {\rm and}\ \ \ (\pm \partial_y - m)  f_{L,R}^{(n)}(y) = m_n f_{L,R}^{(n)}(y).
 \label{eq:split_6}
\end{equation}
The solutions of these equations encode the extra-dimensional part of the wave function, which is called the \emph{bulk profile}. For the massless zero mode, defined by $m_0 \equiv 0$, this bulk profile is obtained by solving the second Eq.~\eqref{eq:split_6}, and it is given by a constant $C$ times a function of $y$:
\begin{equation}
 f_{L,R}^{(0)}(y) = C e^{\mp m y}\ , \ \ {\rm where}\ \ \ C = \sqrt{\frac{2 m}{e^{2 m l}-1}} \frac{1}{\sqrt{M_0}}.
 \label{eq:split_7}
\end{equation}
The correct normalization, cf.\ first Eq.~\eqref{eq:split_6}, is fixed to yield a canonically normalized 4D kinetic term, which enforces the given form of $C$. The solution given in Eq.~\eqref{eq:split_7} makes it right away evident that, for the RH solutions, the function value at the hidden brane ($y=l$) is much larger than the one at the SM-brane ($y=0$), since $\frac{f_R^{(0)}(y=l)}{f_R^{(0)}(y=0)} = e^{m l} \gg 1$, \emph{i.e.}, the geometry of the extra dimension influences the shape of the 4D theories on both branes, and in particular on the SM-brane where we reside. This observation will be crucial in the what follows.

In order to obtain the split seesaw mechanism, one applies the above procedure to a 5D RH-neutrino field whose zero mode is purely right-handed and which shares at the SM brane both, a Yukawa interaction $\tilde \lambda_{i\alpha} = \mathcal{O}(1)$ and a baryon number minus lepton number violating Majorana mass $\kappa_i v_{B-L}$,
\begin{eqnarray}
 S &=& \int d^4 x \int dy\ \Bigg[  M_0\ \left( \overline{\Psi^{(0)}_{iR}} i \Gamma^A \partial_A \Psi^{(0)}_{iR} - m_i \overline{\Psi^{(0)}_{iR}} \Psi^{(0)}_{iR} \right) \nonumber\\
 && - \delta(y) \left( \frac{\kappa_i}{2} v_{B-L} \overline{(\Psi^{(0)}_{iR})^c} \Psi^{(0)}_{iR} + \tilde \lambda_{i\alpha} \overline{\Psi^{(0)}_{iR}} L_\alpha H \right) \Bigg],
 \label{eq:split_8}
\end{eqnarray}
where $i=1,2,3$ is a generation index. Applying Eq.~\eqref{eq:split_7}, one obtains the effective 4D sterile neutrino masses and Yukawa couplings,
\begin{equation}
 M_i = \kappa_i \frac{v_{B-L}}{M_0} \frac{2 m_i}{e^{2 m_i l}-1}\ \ \ , \ \ \ \lambda_{i\alpha} = \frac{\tilde \lambda_{i\alpha}}{\sqrt{M_0}} \sqrt{\frac{2 m_i}{e^{2 m_i l}-1}} \equiv \tilde \lambda_{i\alpha} \sqrt{\frac{M_i}{\kappa_i v_{B-L}}}.
 \label{eq:split_9}
\end{equation}
Now we can see what happens: the sterile neutrino masses $M_i$ receive two suppressions in Eq.~\eqref{eq:split_9}, one from the ratio between the Majorana ($B-L$) scale and the fundamental Planck scale, $\frac{v_{B-L}}{M_0} < 1$, and another even more significant one from the term with the exponential, which suppresses the natural mass scale $m_i$ as $e^{-2 m_i l}$ for $l m_i \gg 1$. It is this latter exponential suppression which makes the split seesaw mechanism so strong: a relatively mild mass splitting $m_3 < m_2 < m_1$ is enhanced by the exponential factor to yield physical masses $M_1 \ll M_2 \ll M_3$. The corresponding mass shifting scheme is depicted in Fig.~\ref{fig:RS-scheme}. Clearly, this type of suppression follows the top-down scheme, cf.\ right panel of Fig.~\ref{fig:keV-schemes}, just as FN type models do.

\begin{figure}[pb]
\centerline{
\psfig{file=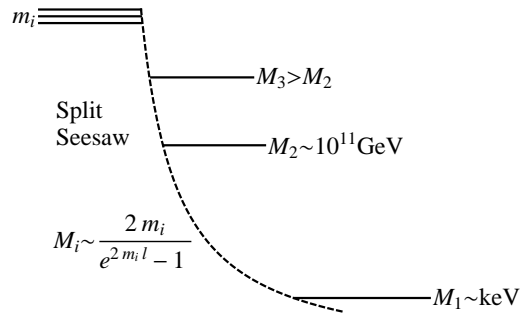,width=7.0cm}
}
\vspace*{8pt}
\caption{\label{fig:RS-scheme}Mass shifting scheme for split seesaw. The wave functions in the extra dimension suppress the physical sterile neutrino masses, which clearly resembles the top-down scheme, cf.\ right panel of Fig.~\ref{fig:keV-schemes}. (Figure similar to Fig.~1.b in Ref.~\cite{Merle:2011yv}.)}
\end{figure}

An interesting feature of the split seesaw mechanism is that the seesaw is guaranteed to work, similar to what happens in FN inspired models, cf.\ Sec.~\ref{sec:keV_FN}. Indeed, calculating the seesaw formula like in Eq.~\eqref{eq:seesawI_2} with $v = \langle H^0 \rangle$, one obtains
\begin{equation}
 (m_\nu)_{\alpha \beta} = \sum_i \lambda_{\alpha i} \lambda_{i \beta} \frac{v^2}{M_i} = \sum_i \frac{\tilde \lambda_{\alpha i} \tilde \lambda_{i \beta}}{\kappa_i} \frac{v^2}{v_{B-L}}.
 \label{eq:split_9p5}
\end{equation}
Hence, apart from a very minor enhancement proportional to $\kappa_i^{-1}$, the Majorana scale contribution $\propto v_{B-L}$ suppresses the electroweak scale contribution $\propto v$. Note in particular that Eq.~\eqref{eq:split_9p5} allows for any light neutrino mass spectrum, may it be hierarchical or quasi-degenerate: the strong hierarchy in the sterile neutrino masses does \emph{not} translate into the active neutrino sector.

Note further that there is less of an enhancement of the active-sterile mixing in the split seesaw mechanism than in FN inspired settings (cf.\ Sec.~\ref{sec:keV_FN_mixed}):~\cite{Kusenko:2010ik}
\begin{equation}
 \theta_1^2 \simeq 3\cdot 10^{-9} \left( \frac{\kappa_1^{-1} \sum_\alpha |\tilde \lambda_{1 \alpha}|^2}{10^{-4}} \right) \left( \frac{10^{15}~{\rm GeV}}{v_{B-L}} \right) \left( \frac{1~{\rm keV}}{M_1} \right).
 \label{eq:split_10}
\end{equation}
Still, a moderate enhancement is present: naturally one would expect $\theta_1 \sim \frac{m_D}{M_R} \propto M_1^{-1}$, while from Eq.~\eqref{eq:split_10} we have the slightly milder relation $\theta_1 \propto M_1^{-1/2}$.

The main advantage of the split seesaw mechanism is, as already mentioned, that a small hierarchy in $m_i$ translates into a large hierarchy in $M_i$: for example, taking $m_1 \simeq 24 l^{-1}$ and $m_2 \simeq 2.3 l^{-1}$ leads to $M_1/M_2 \simeq 10^{-17}$, or $M_1 = \mathcal{O}(1~{\rm keV})$ for $M_2 = \mathcal{O}(10^{11}~{\rm GeV})$. However, one still has no explanation for where the hierarchy within the $m_i$ originates from. Furthermore, one could argue that a whole new hidden sector -- featuring a lot of new particles from the KK tower that can hardly be detected -- is introduced to explain nothing more than the mass splitting of the sterile neutrinos. While the latter issue is a general problem in model building and there is hardly any way around it, the former problem can be circumvented by extending the split seesaw mechanism by a flavor symmetry. This is exactly what we will do next.

\subsubsection{\label{sec:keV_split_A4}The $A_4$ extended split seesaw}	

Until now the only extension of the split seesaw mechanism by a flavor symmetry was presented in Ref.~\refcite{Adulpravitchai:2011rq}, and it makes again use of our ``good old friend'' $A_4$. The authors of this reference made use of the systematic classification of $A_4$ assignments presented by Barry and Rodejohann in Ref.~\refcite{Barry:2010zk}. In this classification, ten different assignment schemes, called A, B, ..., J have been identified. Out of these, some do not even contain RH neutrinos and are hence not very useful for the case at hand. Others assign the three RH neutrinos to a triplet $\mathbf{3}$ representation of $A_4$, and also these schemes are not useful since they would lead to a degenerate mass pattern, $M_R \overline{(N_R)^c} N_R = M_R [\overline{(N_{1R})^c} N_{1R} + \overline{(N_{2R})^c} N_{2R} + \overline{(N_{3R})^c} N_{3R}]$. Another assignment is ruled out by the experimental values of the mass square differences, and out of the remaining two possibilities, the simpler one has been chosen for the discussion~\cite{Adulpravitchai:2011rq}.

The resulting model uses the same assignment (scheme~E) as the original model from Ref.~\refcite{Ma:2005mw}, but in addition to the $Z_2$ parity there is an additional $Z_5$ auxiliary symmetry imposed, which is important to constrain the flavon potential. The resulting representation and charge assignments are detailed in Tab.~\ref{tab:A4TA}, where in addition to the lepton sector and to the SM-like Higgs, one triplet flavon $\varphi_{\nu, t}$ and three singlet flavons $\varphi_{i,e}$ ($i=1,2,3$) have been introduced.

\begin{table}[h]
\tbl{Particle content and representation/charge assignments (in our notation) of the $SU(2)_L\times A_4 \times Z_2 \times Z_5$ extension of the split seesaw mechanism~\cite{Adulpravitchai:2011rq}. $SU(2)_L$ is the usual SM gauge symmetry, and $Z_{2,5}$ are additional auxiliary symmetries. For $SU(2)_L$ and for $A_4$ the representations are given, while for $Z_{2,5}$ the charge is indicated. Note that $\omega = e^{2\pi i /5}$.}
{\begin{tabular}{@{}lcccccccccc@{}} \toprule
Field & $L_{1,2,3}$ & $\overline{e_{1,2,3R}}$ & $\overline{N_1}$ & $\overline{N_2}$ & $\overline{N_3}$ & $H$ & $\varphi_{\nu, t}$ & $\varphi_{1,e}$ & $\varphi_{2,e}$ & $\varphi_{3,e}$\\
\colrule
$SU(2)_L$ & $\mathbf{2}$ & $\mathbf{1}$ & $\mathbf{1}$ & $\mathbf{1}$ & $\mathbf{1}$ & $\mathbf{2}$ & $\mathbf{1}$ & $\mathbf{1}$ & $\mathbf{1}$ & $\mathbf{1}$ \\
$A_4$ & $\mathbf{3}$ & $\mathbf{3}$ & $\mathbf{1}$ & $\mathbf{1'}$ & $\mathbf{1''}$ & $\mathbf{1}$ & $\mathbf{3}$ & $\mathbf{1}$ & $\mathbf{1''}$ & $\mathbf{1'}$ \\
$Z_2$ & $+$ & $-$ & $+$ & $+$ & $+$ & $+$ & $+$ & $-$ & $-$ & $-$ \\
$Z_5$ & $\omega^3$ & $\omega^4$ & $1$ & $1$ & $1$ & $1$ & $\omega^2$ & $\omega^3$ & $\omega^3$ & $\omega^3$ \\
\botrule
\end{tabular}
\label{tab:A4TA}}
\end{table}

These assignments yield a non-diagonal RH neutrino mass matrix, so that the couplings $\kappa_i$ in Eq.~\eqref{eq:split_8} have to be promoted to matrix components $\kappa_{ij}$. Then, the Yukawa coupling matrices for the Majorana masses of the RH neutrinos and their bulk masses are given by
\begin{equation}
 \kappa=
 \begin{pmatrix} 
 a & 0 & 0 \\ 
 0 & 0 & b \\ 
 0 & b & 0 
 \end{pmatrix}\ \ \ ,\ \ \   
 m = {\rm diag}(m_1, m_2, m_3). 
 \label{eq:TA_1}
\end{equation}
With a general triplet flavon VEV $\langle \varphi_{\nu,t} \rangle=(u_1,u_2,u_3)^T$, one obtains the couplings $\tilde \lambda_{i \alpha}$ at higher order,
\begin{equation}
 \frac{1}{\Lambda} \left[y_1^{\nu} \overline{N^{(0)}_{1R}} (L \varphi_{\nu})_{\mathcal{1}}  +y_2^{\nu} \overline{N^{(0)}_{2R}} (L \varphi_{\nu})_{\mathcal{1''}} +y_3^{\nu} \overline{N^{(0)}_{3R}} (L \varphi_{\nu})_{\mathcal{1'}}  \right] \phi \rightarrow \tilde{\lambda}_{i\alpha}\bar{\Psi}^{(0)}_{iR}L_\alpha\phi,
 \label{eq:TA_2}
\end{equation}
which leads to
\begin{equation}
 \tilde{\lambda}=\frac{1}{\Lambda}
 \begin{pmatrix} 
 y_1^{\nu} & 0       & 0 \\ 
 0       & y_2^{\nu} & 0 \\ 
 0       & 0       & y_3^{\nu} 
 \end{pmatrix} 
 \begin{pmatrix} 
 1 & 1        & 1 \\ 
 1 & \omega   & \omega^2 \\ 
 1 & \omega^2 & \omega 
 \end{pmatrix} 
 \begin{pmatrix}
 u_1 & 0  & 0  \\ 
 0  & u_2 & 0  \\ 
 0  & 0  & u_3 
 \end{pmatrix}.
 \label{eq:TA_3}
\end{equation}
Inserting the SM-like Higgs VEV, one obtains the Dirac mass matrix,
\begin{equation}
 m_D^T = \frac{1}{\Lambda} {\rm diag}(C_1, C_2, C_3)
 \begin{pmatrix} 
 y_1^{\nu} & 0       & 0 \\ 
 0       & y_2^{\nu} & 0 \\ 
 0       & 0       & y_3^{\nu} 
 \end{pmatrix} 
 \begin{pmatrix}
 1 & 1        & 1  \\ 
 1 & \omega   & \omega^2 \\ 
 1 & \omega^2 & \omega 
 \end{pmatrix} 
 \begin{pmatrix} 
 u_1 & 0  & 0 \\ 
 0  & u_2 & 0 \\ 
 0  & 0  & u_3 
 \end{pmatrix} v,
 \label{eq:TA_4}
\end{equation}
where $C_i \equiv \sqrt{\frac{2 m_i}{e^{2 m_i l}-1}} \frac{1}{\sqrt{M_0}}$. The RH mass matrix, in turn, is given by
\begin{equation}
 M_R = {\rm diag}(C_1, C_2, C_3)\ \kappa\ {\rm diag}(C_1, C_2, C_3) = 
 \begin{pmatrix} 
 a C_1^2 & 0        & 0 \\ 
 0      & 0        & b C_2 C_3\\ 
 0      & b C_2 C_3 & 0 
 \end{pmatrix}v_{B-L},
 \label{eq:TA_5}
\end{equation}
which yields the eigenvalues
\begin{equation}
 M_1 = a C_1^2 v_{B-L}\ \ \ ,\ \ \ M_2 = M_3 = b C_2 C_3 v_{B-L}.
 \label{eq:TA_6}
\end{equation}
Note the degeneracy of $M_{2,3}$ in Eq.~\eqref{eq:TA_6}, which may be crucial for baryogenesis~\cite{Canetti:2012kh}. The light neutrino mass matrix can be immediately calculated using Eq.~\eqref{eq:seesawI_2},
\begin{equation}
 m_\nu \equiv m_\nu^I =  - \lambda^T \kappa^{-1} \lambda \frac{v^2}{v_{B-L}} = 
 \begin{pmatrix} 
 \frac{u_1}{\Lambda} & 0                  & 0 \\ 
 0                  & \frac{u_2}{\Lambda} & 0 \\ 
 0                  & 0                  & \frac{u_3}{\Lambda} 
 \end{pmatrix} 
 \begin{pmatrix} 
 x & y & y \\ 
 y & x & y \\ 
 y & y & x 
 \end{pmatrix} 
 \begin{pmatrix} 
 \frac{u_1}{\Lambda} & 0                  & 0 \\ 
 0                  & \frac{u_2}{\Lambda} & 0 \\ 
 0                  & 0                  & \frac{u_3}{\Lambda} 
 \end{pmatrix}\frac{v^2}{v_{B-L}},
 \label{eq:TA_7}
\end{equation}
where $x \equiv \frac{(y_1^\nu)^2}{a}+\frac{2y_2^{\nu}y_3^{\nu}}{b}$ and $y \equiv \frac{(y_1^\nu)^2}{a}-\frac{y_2^{\nu}y_3^{\nu}}{b}$. As to be expected, the small factors $C_i$ cancel out completely from the light neutrino mass matrix, Eq.~\eqref{eq:TA_7}. Assuming $u_1 = r u$ and $u_2 = u_3 = u$ in $\langle \varphi_{\nu,t} \rangle$, one obtains suitable eigenvalues for the light neutrino masses, which can be brought into agreement with the measured values of the mass square differences~\cite{Adulpravitchai:2011rq}. However, the problem is that this case leads to $\theta_{13} = 0$ (and also to $\theta_{23} = \pi /4$), which is experimentally excluded, cf.\ Eqs.~\eqref{eq:angles_NH} and~\eqref{eq:angles_IH}. Even worse, the active-sterile mixing turns out to be too large by about three orders of magnitude~\cite{Adulpravitchai:2011rq}, which completely destroys the validity of the scenario.

The way out of this dilemma is to extend the model further, by also including a seesaw type~II contribution, cf.\ Eq.~\eqref{eq:LH_mass}. That is achieved by adding a Higgs triplet field $T$, cf.\ Eq.~\eqref{eq:triplet_Higgs}, as well as two more flavon fields as illustrated in Tab.~\ref{tab:A4TA_add}. This leads to an additional additive contribution $S_\nu^{II}$ to the action, given by
\begin{equation}
 S_\nu^{II} = \int d^4 x dy \delta(y) \left( \tilde{y}^{\nu}_1 \frac{\varphi_{\nu,t} }{\Lambda} + \tilde{y}^{\nu}_2 \frac{\varphi_{\nu,s}}{\Lambda} \right) \left[ - \frac{1}{2} \overline{L} i\sigma_2 T^\dagger L^c + h.c. \right],
 \label{eq:TA_8}
\end{equation}
and the Higgs triplet VEV $v_T$ leads to an additional LH neutrino mass matrix
 \begin{equation}
 m_L =
 \begin{pmatrix} 
 s & t & t \\ 
 t & s & 0 \\ 
 t & 0& s 
 \end{pmatrix} 
 m_\nu' \ \ \ ,
 \label{eq:TA_9}
\end{equation}
where $s \equiv \tilde y^\nu_2 \frac{\langle\varphi_{\nu,s} \rangle / \Lambda}{u / \Lambda}$, $t \equiv \tilde y^\nu_1$, and $m_\nu' \equiv v_T \frac{u}{\Lambda}$. The final neutrino mass matrix is given by
\begin{equation}
 m_\nu=m_L + m_\nu^I \simeq
 \begin{pmatrix}  
 s m_\nu' & t  m_\nu'  & t  m_\nu'  \\ 
 t m_\nu' & s m_\nu' - 2 y m_\nu & y  m_\nu  \\ 
 t m_\nu' & y  m_\nu  & s m_\nu' - 2y m_\nu
 \end{pmatrix},
 \label{eq:TA_10}
\end{equation}
which still leads to $\theta_{13} = 0$ and $\theta_{23} = \pi /4$. The authors have recognized this problem, and they have argued that next-to-leading order terms could be able to cure it. However, no explicit discussion had been presented.

\begin{table}[h]
\tbl{Particle content of the type~II extension in addition to Tab.~\ref{tab:A4TA}.}
{\begin{tabular}{@{}lccc@{}} \toprule
Field & $T$        & $\varphi_{\nu,s}$     & $\tilde{\varphi}_{\nu,t}$ \\
\colrule
$SU(2)_L$ & $\mathbf{3}$ & $\mathbf{1}$ & $\mathbf{1}$ \\
$A_4$ & $\mathbf{1}$ & $\mathbf{1}$ & $\mathbf{3}$ \\
$Z_2$ & $+$ & $+$ & $+$ \\
$Z_5$ & $\omega^2$ & $\omega^2$ & $1$ \\
\botrule
\end{tabular}
\label{tab:A4TA_add}}
\end{table}

Putting this problem aside, the seesaw type~II extension can provide realistic expressions for the mass square differences and for $\theta_{12}$, which in the limit $m_\nu' \simeq m_\nu$ look like:
\begin{eqnarray}
 \Delta m_{21}^2    &\simeq& (2 s-y) \sqrt{8 t^2+y^2} m_\nu^2, \\
 \Delta m_{31}^2    &\simeq& \left[(s-3y)^2-\frac{\left(2s-y-\sqrt{8t^2+y^2} \right)^2}{4}\right]m_\nu^2, \\
 \tan^2 \theta_{12} &\simeq& \frac{\left(y+\sqrt{8 t^2+y^2}\right)^2}{8t^2}.
 \label{eq:TA_11}
\end{eqnarray}
Note that, assuming the parameters to be complex in general, normal and inverted ordering are both possible in this model. Furthermore, this extension overcomes the problem with the X-ray bound: the LH mass matrix, Eq.~\eqref{eq:TA_9}, does \emph{not} contribute to active-sterile mixing, cf.\ \ref{sec:seesaw_diag}. Hence, one can choose the problematic elements in $m_D$, Eq.~\eqref{eq:TA_4}, to be small enough while still keeping a realistic light neutrino spectrum alive by the type~II contributions.

We end this section by pointing out that the VEV alignment used in this model can actually be derived from orbifolding, as described in Ref.~\refcite{Adulpravitchai:2011rq}. Furthermore, the reference discussed also various possibilities for leptogenesis within the model. These discussions, however, are beyond the scope of this review.

\subsubsection{\label{sec:keV_split_separate}Separate seesaw}	

We finally want to comment on a variant of the split seesaw mechanism which was discussed by Takahashi in Ref.~\refcite{Takahashi:2013eva} and called \emph{separate seesaw}. As we had already discussed in Sec.~\ref{sec:keV_split_mechamism}, the basic idea of split seesaw is to localize both the Yukawa couplings as well as the RH Majorana neutrino masses on the SM brane, cf.\ Eq.~\eqref{eq:split_8}, while the RH neutrino wave functions exponentially localize at the hidden brane where their values are largest. The idea introduced in Ref.~\refcite{Takahashi:2013eva} is to alter these assignments, and to try to use alternative combinations of the localizations, \emph{e.g.}, having both the Majorana mass matrix and the wave function localized at the hidden brane.

Going through all the possibilities, it unfortunately turns out that no other assignment than the one chosen for the split seesaw can do the job of simultaneously suppressing the lightest RH neutrino mass and the Yukawa couplings. In other words, either we do not even have a motivation for a keV neutrino or we run into problems with the X-ray bound, cf.\ Sec.~\ref{sec:keV_general_X}, if not both at the same time. The way out is to apply different localizations at the same time. In the \emph{separate seesaw}, the one already introduced in the split seesaw is used for the first generation sterile neutrino, while alternative localizations are used for the second and third generations.

The main benefit of the separate seesaw is that one gains some more flexibility inside the light and heavy neutrino mass matrices, at least for the second and third generations. In particular, a hierarchical pattern as the one likely to be observed in the light neutrino sector might be a bit easier to realize. However, the price to pay is having to treat different generations of neutrinos in qualitatively different ways. This contradicts the spirit of unification, but it might be necessary in order to reproduce certain more elaborate mass spectra.

\subsection{\label{sec:keV_ext}Models based on the extended seesaw mechanism}	

The next mechanism which we want to discuss here is the so-called \emph{extended seesaw mechanism}. This mechanism was first proposed in Ref.~\refcite{Chun:1995js} in order to solve the solar neutrino problem, and it was later on revived in Refs.~\refcite{Barry:2011wb,Zhang:2011vh} for the purpose of explaining keV sterile neutrinos. While this mechanism goes beyond the ``minimal'' picture of having only three RH neutrinos, it nevertheless yields a relatively natural explanation for the existence of keV sterile neutrinos, at least when applied to concrete models.

\subsubsection{\label{sec:keV_ext_mechamism}The extended seesaw mechanism}	

Let us start by discussing the mechanism itself. The basic idea is to introduce, in addition to the three RH neutrinos $N_{Ri}$, another chiral singlet field $S_R$ which is also right-handed.\footnote{Note that we depart from the notation used in Refs.~\refcite{Barry:2011wb,Zhang:2011vh}, since in these references it is not so transparent that the singlet field called $S$ must actually be right-handed. If it was not, its left-handed counterpart $S_L$ would exist, which could then be used to form an undesirable Dirac mass term for the singlet, $\mathcal{L}_{S,\rm Dirac} = - \overline{S_L} m_{D,S} S_R + h.c.$ To make clear that such a term does \emph{not} appear, we decided to change the notation, $S \to S_R$.} The next step is to \emph{postulate} the following Lagrangian,
\begin{equation}
 \mathcal{L}_{\rm ES} = -\overline{\nu_L} m_D N_R  - \overline{(S_R)^c} M_S^T N_R - \frac{1}{2} \overline{(N_R)^c} M_R N_R +  h.c.
 \label{eq:ES_1}
\end{equation}
Up to this point, the Lagrangian presented in Eq.~\eqref{eq:ES_1} is only a postulate. The reason for this statement is that the Lagrangian is \emph{incomplete}: gauge invariance would not forbid a Majorana mass term for $S_R$ alone, $\mathcal{L}_{S,\rm Majorana} = - \frac{1}{2} \overline{(S_R)^c} M_{M,S} S_R + h.c.$, but this term does not appear in the above Lagrangian. However, the problem can be cured easily, by imagining that the term $\overline{(S_R)^c} M_S^T N_R$ could arise from the VEV $f = \langle \phi_S \rangle$ of some singlet scalar field $\phi_S$, $y_{SN} \overline{(S_R)^c} \phi_S N_R \to \overline{(S_R)^c} ( y_{SN} f) N_R$. If one used a discrete symmetry, \emph{e.g.} $Z_4$, one could give a charge of $\pm i$ ($\mp i$) to $S_R$ ($\phi_S$), in which case the mixed mass term $\overline{(S_R)^c} M_S N_R$ in Eq.~\eqref{eq:ES_1} would be allowed, while a Majorana mass term for $S_R$ alone would be forbidden, at least at tree-level. Of course, other solutions of this problem are possible.

Let us press on and take Eq.~\eqref{eq:ES_1} for granted. Then, $M_S$ is a $1\times 3$ matrix, since there is only one $S_R$ versus three $N_{Ri}$. Hence, one can rewrite Eq.~\eqref{eq:ES_1} in terms of a $7\times 7$ mass matrix $M_\nu^{7 \times 7}$, similar to rewriting the seesaw matrix in \ref{sec:seesaw_matrix}:
\begin{equation}
 \mathcal{L}_{\rm ES} = - \frac{1}{2} (\overline{\nu_L}, \overline{(S_R)^c}, \overline{(N_R)^c})
 \begin{pmatrix}
 0 & m_D & 0 \\
 m_D^T & M_R & M_S^T \\
 0 & M_S & 0
 \end{pmatrix}
 \begin{pmatrix}
 (\nu_L)^c\\
 S_R\\
 N_R
 \end{pmatrix}
 +  h.c.
 \label{eq:ES_2}
\end{equation}
The important point of this Lagrangian is that there is essentially no constraint on the size of $M_S$. In particular, there is no need for $M_S$ to be as large as the size one would naturally anticipate for $M_R$, since the different masses do not have the same origin (as explained above, $M_R$ is a bare Majorana mass while $M_S$ is part of a mixing term). On the other hand, there is also no reason to have $M_S \lesssim m_D = \mathcal{O}(v)$, which would typically lead to problems with neutrino oscillations~\cite{Zhang:2011vh}. Hence, the most attractive possibility is to assume the following mass hierarchy:
\begin{equation}
 m_D \ll M_S \ll M_R. 
 \label{eq:ES_3}
\end{equation}
In this case, we will see that the inverse seesaw can indeed motivate a keV scale.

The first step is to \emph{integrate out} the RH neutrinos from the Lagrangian in Eq.~\eqref{eq:ES_1}. In theory, this means that the RH neutrino fields are assumed to be very heavy, and hence not dynamical at low energies. Then, one can neglect their kinetic terms and solve their Euler-Lagrange equations explicitly, just as done for the heavy fermions in the FN mechanism, cf.\ Sec.~\ref{sec:neutrino_modeling_FN}. In practice, this amounts to a seesaw-like formula, and the remaining ``low'' energy $4\times 4$ matrix is given by
\begin{equation}
 M_\nu^{4\times 4} = 
 \begin{pmatrix}  
 m_D  M_R^{-1}  m_D^T & {\ } & m_D  M_R^{-1}  M_S^T  \\ 
 M_S  \left(M_R^{-1}\right)^T  m_D^T & {\ } & M_S M_R^{-1} M_S^T
\end{pmatrix} .
 \label{eq:ES_4}
\end{equation}
The determinant of this matrix can be easily computed by the formula ${\rm det} \begin{pmatrix} A & B\\ C & D \end{pmatrix} = {\rm det}(A) {\rm det}\left( D - C A^{-1} B \right)$, which holds for block matrices with an invertible submatrix $A$. With $\left(M_R^{-1}\right)^T = M_R^{-1}$, one obtains:
\begin{eqnarray}
 && {\rm det} \left( M_\nu^{4\times 4} \right) = {\rm det} \left( m_D  M_R^{-1}  m_D^T \right) \nonumber\\
 && \times {\rm det} \left( M_S M_R^{-1} M_S^T - M_S  M_R^{-1} m_D^T \left( m_D^T \right)^{-1} M_R m_D^{-1} m_D  M_R^{-1}  M_S^T \right) = 0.
 \label{eq:ES_5}
\end{eqnarray}
The determinant of a matrix is the product of its eigenvalues, so the minimal extended seesaw mechanism predicts one light neutrino to be exactly massless.

In the case of $M_S \gg m_D$, one can now apply the seesaw type~II formula, cf.\ Eq.~\eqref{eq:seesaw_general}, to Eq.~\eqref{eq:ES_4}:
\begin{equation}
 m_\nu \equiv M_\nu^{3\times 3} = m_D  M_R^{-1}  M_S^T \left(M_S  M_R^{-1} M_S^T \right)^{-1} M_S  M_R^{-1}  m_D^T - m_D M_R^{-1} m_D^T.
 \label{eq:ES_6}
\end{equation}
Note that the right-hand side of Eq.~\eqref{eq:ES_6} does \emph{not} vanish~\cite{Zhang:2011vh}. The reason is that $m_s^{-1} \equiv \left(M_S  M_R^{-1} M_S^T \right)^{-1}$ is a number instead of a matrix, due to the structure of $M_S$. Hence, one cannot rewrite it as $\left(M_S^T \right)^{-1} M_R M_S^{-1}$. In short, $M_S$ as a $1\times 3$ matrix is not invertible.

The above block-diagonalization of  $M_\nu^{4\times 4}$ yields a singlet mass eigenvalue,
\begin{equation}
 m_s = M_S  M_R^{-1} M_S^T.
 \label{eq:ES_7}
\end{equation}
This equation reveals the whole trick of the extended seesaw mechanism: although the matrix product in Eq.~\eqref{eq:ES_7} looks like a seesaw type~I formula, \emph{i.e.}, $m_s \ll |M_{S,i}|$, we had chosen $M_S \gg m_D$ which enforces $m_s \gg |(m_\nu)_{ij}|$. Due to these hierarchies, we have obtained a motivation for $m_s = \mathcal{O}({\rm keV})$ while $m_\nu \lesssim \mathcal{O}({\rm eV})$. Due to the suppression of the new scale $M_S$ in Eq.~\eqref{eq:ES_7}, the extended seesaw mechanism could be viewed as a variant of the top-down mass shifting scheme, cf.\ right panel of Fig.~\ref{fig:keV-schemes}. However, the difference here is that a completely new mass scale is suppressed, so that one actually has four sterile neutrinos ($N_{1,2,3}$ and $S$) instead of three, of which the natural mass scale of one ($S$) is suppressed.

Although the extended seesaw mechanism in principle yields an explanation for the appearance of a scale that could be potentially of $\mathcal{O}({\rm keV})$, we have already remarked after Eq.~\eqref{eq:ES_1} that the required Lagrangian is a bit unmotivated in general. A suitable way out is to enforce the structure given in Eq.~\eqref{eq:ES_1} by a flavor symmetry and, not too surprisingly, the simplest example that is known in the literature is based on an $A_4$ symmetry~\cite{Zhang:2011vh}.

\subsubsection{\label{sec:keV_ext_A4}The $A_4$ extended extended seesaw}	

The $A_4$ extension of the extended seesaw mechanism~\cite{Zhang:2011vh} can resolve some of the problems mentioned in the previous section. Again the task is to find suitable charge assignments, which are summarized in Tab.~\ref{tab:A4HZ}. Note that this table also contains the flavon fields $\varphi$, $\varphi'$, $\varphi''$, $\xi$, $\xi'$, and $\chi$, while $S_R$ is the additional singlet fermion. Using these alignments, the leading order Lagrangian of the leptonic sector is given by,\footnote{Note that certain dangerous terms are actually neglected in Ref.~\refcite{Zhang:2011vh}. It is argued that these terms could be forbidden by an additional global $U(1)$ symmetry, under which $\chi$ and $S$ are suitably charged. However, since $\chi$ will obtain a VEV later on, this will break the symmetry which will unavoidably lead to a massless Goldstone boson~\cite{Goldstone:1961eq,Goldstone:1962es}. While this problem is not solved in Ref.~\refcite{Zhang:2011vh}, a generalization of the model (which is shortly discussed at the end of this section) was later on given by the same author in Ref.~\refcite{Heeck:2012bz}, where the symmetry is gauged and the Goldstone boson is eaten. Alternatively, one could give a non-perturbative mass to the otherwise massless state, making it a \emph{pseudo Nambu-Goldstone boson}.}
\begin{eqnarray}
 \mathcal{L} &=& \frac{y_e}{\Lambda} \overline{e_R} \left(H L \varphi^* \right)_{\mathbf{1}} + \frac{y_\mu}{\Lambda} \overline{\mu_R} \left(H L \varphi^* \right)_{\mathbf{1'}} + \frac{y_\tau}{\Lambda} \overline{\tau_R} \left(H L \varphi^* \right)_{\mathbf{1}''} + \frac{1}{2} \rho \chi \overline{(S_R)^c} N_{R1}  \nonumber \\
 && + \frac{y_1}{\Lambda} \overline{N_{1R}} \left( \tilde H L \varphi^* \right)_{\mathcal{1}} + \frac{y_2}{\Lambda} \overline{N_{2R}} \left( \tilde H L \varphi' \right)_{\mathbf{1''}} + \frac{y_3}{\Lambda} \overline{N_{3R}} \left( \tilde H L \varphi'' \right)_{\mathbf{1}} \nonumber \\ 
 && + \frac{1}{2} \lambda_1 \xi \overline{(N_{R1})^c} N_{R1} + \frac{1}{2} \lambda_2 \xi' \overline{(N_{R2})^c} N_{R2} + \frac{1}{2} \lambda_3 \xi \overline{(N_{R3})^c} N_{R3} + h.c.,
 \label{eq:ES_A4_1}
\end{eqnarray}
where $\Lambda$ denotes the cutoff scale. Note that we have again reported the Hermitean conjugate part of the Lagrangian presented in Ref.~\refcite{Zhang:2011vh}, in order to conform with our notation. However, except for $\lambda_i$ and $\rho$, all couplings and VEVs can be taken to be real by absorbing the phases into the fermion fields, and some phases will be forced to be equal by the VEV alignment. By radiative breaking, one can obtain the following alignment~\cite{King:2006np}:
\begin{equation}
 \langle \varphi \rangle =(v_\varphi, 0, 0)\,\ \ \langle \varphi' \rangle = (v_\varphi, v_\varphi, v_\varphi), \ \ \langle \varphi'' \rangle = (0, -v_\varphi, v_\varphi), \ \ \langle \xi \rangle = \langle \xi' \rangle = v_\varphi, \ \  \langle \chi \rangle = u.
 \label{eq:ES_A4_2}
\end{equation}
This leads immediately to a diagonal charged lepton mass matrix, $M_e = \frac{v v_\varphi}{\Lambda} {\rm diag} (y_e, y_\mu, y_\tau)$, which means that $v_\varphi y_i$ must be real for $i = e, \mu, \tau$. Note that the hierarchy in the charged lepton sector is \emph{not} explained by this model.

\begin{table}[ph]
\tbl{Particle content and representation/charge assignments (in our notation) of the $SU(2)_L\times A_4 \times Z_4$ extension of the extended seesaw mechanism~\cite{Zhang:2011vh}. $SU(2)_L$ is the usual SM gauge symmetry, and $Z_4$ is an additional auxiliary symmetry. For $SU(2)_L$ and for $A_4$ the representations are given, while for $Z_4$ the charge is indicated.}
{\begin{tabular}{@{}lcccccccccccccccc@{}} \toprule
Field & $L_{1,2,3}$ & $\overline{e_R}$ & $\overline{\mu_R}$ & $\overline{\tau_R}$ & $\overline{N_1}$ & $\overline{N_2}$ & $\overline{N_3}$ & $S_R$ & $H$ & $\varphi$ & $\varphi'$ & $\varphi''$ & $\xi$ & $\xi'$ & $\chi$ \\
\colrule
$SU(2)_L$ & $\mathbf{2}$ & $\mathbf{1}$ & $\mathbf{1}$ & $\mathbf{1}$ & $\mathbf{1}$ & $\mathbf{1}$ & $\mathbf{1}$ & $\mathbf{1}$ & $\mathbf{2}$ & $\mathbf{1}$ & $\mathbf{1}$ & $\mathbf{1}$ & $\mathbf{1}$ & $\mathbf{1}$ & $\mathbf{1}$ \\
$A_4$ & $\mathbf{3}$ & $\mathbf{1}$ & $\mathbf{1''}$ & $\mathbf{1'}$ & $\mathbf{1}$ & $\mathbf{1'}$ & $\mathbf{1}$ & $\mathbf{1}$ & $\mathbf{1}$ & $\mathbf{3}$ & $\mathbf{3}$ & $\mathbf{3}$ & $\mathbf{1}$ & $\mathbf{1'}$ & $\mathbf{1}$ \\
$Z_4$ & $1$ & $1$ & $1$ & $1$ & $1$ & $i$ & $-1$ & $i$ & $1$ & $1$ & $i$ & $-1$ & $1$ & $-1$ & $-i$ \\
\botrule
\end{tabular}
\label{tab:A4HZ}}
\end{table}

Going to neutrinos, the mass matrices can be read off from Eq.~\eqref{eq:ES_A4_1} as
\begin{equation}
 m_D = \frac{v v_\varphi}{\Lambda}
 \begin{pmatrix}
 y_1 & y_2 & 0 \\
 0 & y_2 & y_3 \\
 0 & y_2 & -y_3
 \end{pmatrix}, \ \ M_R = {\rm diag} (\lambda_1 v_\varphi, \lambda_2 v_\varphi, \lambda_3 v_\varphi), \ \ M_S = (\rho u, 0, 0),
 \label{eq:ES_A4_3a}
\end{equation}
which yields $m_s = |\rho| u$ as the mass of the keV sterile neutrino. The light neutrino mass matrix is given by
\begin{equation}
 m_\nu = - \frac{v^2 v_\varphi^2}{\Lambda}
 \begin{pmatrix}
 \frac{y^2_2}{\lambda_2} & \frac{y^2_2}{\lambda_2} & \frac{y^2_2}{\lambda_2} \\
 \frac{y^2_2}{\lambda_2} & \frac{y^2_2 \lambda_3 + y^2_3 \lambda_2}{\lambda_2 \lambda_3} & \frac{y^2_2 \lambda_3 -y^2_3 \lambda_2}{\lambda_2 \lambda_3} \\
 \frac{y^2_2}{\lambda_2} & \frac{y^2_2 \lambda_3 - y^2_3 \lambda_2}{\lambda_2 \lambda_3} & \frac{y^2_2 \lambda_3 + y^2_3 \lambda_2}{\lambda_2 \lambda_3}
 \end{pmatrix},
 \label{eq:ES_A4_3b}
\end{equation}
which leads to the light neutrino masses $(m_1, m_2, m_3) = \left( 0, \frac{3 y_2^2 v^2 v_\varphi}{|\lambda_2| \Lambda^2}, \frac{3 y_3^2 v^2 v_\varphi}{|\lambda_3| \Lambda^2} \right)$. These eigenvalues clearly obey normal ordering. However, the matrix diagonalizing $m_\nu$ from Eq.~\eqref{eq:ES_A4_3b} is tri-bimaximal, cf.\ Eq.~\eqref{eq:TBM}, which must equal the PMNS matrix due to the charged lepton mass matrix being diagonal. Hence, the model presented is excluded by the new data on $\theta_{13}$, cf.\ Eqs.~\eqref{eq:angles_NH}. It is claimed in Ref.~\refcite{Zhang:2011vh} that this problem could be cured by next-to-leading order terms, but up to now there exists no explicit proof for this claim.

Nevertheless, other models based on the extended seesaw mechanism could potentially be found, since there is no principle argument why the extended seesaw should be inconsistent with current data. For example, a very recent proposal~\cite{Heeck:2012bz} uses anomaly free settings of a $U(1)'$ symmetry instead of $A_4$. This has the advantage to be able to accommodate for the new data, but for the cost of introducing some more singlet fields like $S_R$ in order to ensure the cancellation of all anomalies. On the other hand, the model has the interesting feature of intrinsically predicting a mixture of light and heavy DM. Furthermore, the additional singlets lead to a modification of the light neutrino mass matrix such that there exists no massless state anymore. Hence, if it turned out that even the lightest neutrino mass is significantly different from zero, the $U(1)'$-extension of the extended seesaw mechanism would still have no problems, while its minimal version would be under pressure.

\subsection{\label{sec:keV_comp}More complicated seesaw and suppression scenarios}	

Finally, we want to shortly comment on more complicated models which, to some extent, rely on a seesaw or similar type of suppression mechanism. These models are, from a purely scientific point of view, not inferior to the ones already presented. However, for one reason or the other they are slightly more complicated and/or their details go beyond the scope of this (hopefully) pedagogical review. Nevertheless, these models offer interesting alternatives to motivate a hierarchy in the sterile neutrino sector. The reader interested in the finer details is referred to the references given in the following.

\subsubsection{\label{sec:keV_comp_331}Type~II seesaw mechanism in a 331-model}	

This model has been presented in Ref.~\refcite{Cogollo:2009yi}, based on earlier considerations in Ref.~\refcite{Dias:2005yh}. It is a slight extension of the ordinary 331-model~\cite{Pisano:1991ee,Frampton:1992wt}, which extends the SM gauge group to a more complicated group $SU(3)_C \times SU(3)_L \times U(1)_N$. This symmetry is broken at a high scale to the SM gauge group. According to elementary group theory~\cite{Ramond}, an $SU(N)$ group has $N^2-1$ generators, while a $U(1)$ group has only one. Hence, in addition to the eight gluons from $SU(3)_C$ which are unchanged in this model, we expect $3^2 - 1 + 1 = 9$ gauge bosons in total, four out of which are the ordinary electroweak gauge bosons while the other five are new: $V^{\pm}$, $U^0$, ${U^0}^\dagger$, and $Z'$.

The SM leptons come in $SU(3)_L$ triplet and singlet representations,
\begin{equation}
 f_{i L} =
 \begin{pmatrix}
 \nu_{i L} \\
 e_{i L} \\
 (N_{i R})^c
 \end{pmatrix} \sim \left( \mathbf{1}, \mathbf{3}, -\frac{1}{3} \right)\ \ \ , \ \ \ e_{i R} \sim \left( \mathbf{1}, \mathbf{1}, -1 \right)\ \ \ ,
 \label{eq:331_1}
\end{equation}
which means that the fields $(N_{i R})^c$, which we will later on call \emph{sterile} neutrinos, are indeed only sterile under the SM gauge group but not under the full gauge symmetry.

Along with the leptons, we have three triplet scalars and one sextet scalar,
\begin{eqnarray}
 && \chi =
 \begin{pmatrix}
 \chi^0 \\
 \chi^- \\
 {\chi'}^0
 \end{pmatrix} \sim \left( \mathbf{1}, \mathbf{3}, -\frac{1}{3} \right)\ \ \ , \ \ \ \rho =
 \begin{pmatrix}
 \rho^+ \\
 \rho^0 \\
 {\rho'}^+
 \end{pmatrix} \sim \left( \mathbf{1}, \mathbf{3}, -\frac{2}{3} \right)\ \ \ , \label{eq:331_2} \\
 && \eta =
 \begin{pmatrix}
 \eta^0 \\
 \eta^- \\
 {\eta'}^0
 \end{pmatrix} \sim \left( \mathbf{1}, \mathbf{3}, -\frac{1}{3} \right)\ \ \ , \ \ \ S = \frac{1}{\sqrt{2}}
 \begin{pmatrix}
 \Delta^0 & \Delta^- & \Phi^0 \\
 \Delta^- & \Delta^{--} & \Phi^- \\
 \Phi^0 & \Phi^- & \sigma^0 
 \end{pmatrix} \sim \left( \mathbf{1}, \mathbf{6}, -\frac{2}{3} \right)\ \ \ , \nonumber
\end{eqnarray}
where the $\Phi^i$ contain SM-like Higgs components, but they would give a Dirac mass to neutrinos. Other $SU(2)_L$ doublets, not necessarily listed here, are still available to give masses to the other SM fermions.

Since the authors of Ref.~\refcite{Cogollo:2009yi} are after a LH seesaw type~II mass, cf.\ Eq.~\eqref{eq:LH_mass}, they forbid a VEV for $\Phi^0$ by imposing suitable discrete symmetries. They furthermore tune their scalar potential such that the components $( {\chi'}^0, \rho^0, \eta^0, \Delta^0, \sigma^0)$ obtain VEVs $(v_{\chi'}, v_\rho, v_\eta, v_\Delta, v_\sigma)$. The key point is that some of the scalar components do carry lepton number, $L({\eta'}^0, \sigma^0, {\rho'}^+) = -2$ and $L(\chi^0, \chi^-, \Delta^0, \Delta^-, \Delta^{--}) = +2$. Hence, the VEV $v_\Delta$ ($v_\sigma$) unavoidably breaks lepton number and leads to a mass term for LH (RH) neutrinos. This fact can be exploited by adding a term to the scalar potential which explicitly breaks lepton number,
\begin{equation}
 V_{\rm LNV} = - M_1 \eta^T S^\dagger \eta - M_2 \chi^T S^\dagger \chi + {\rm (other\ terms)},
 \label{eq:331_3}
\end{equation}
and by furthermore assuming that the corresponding mass scale $M \approx M_{1,2}$ is larger than all VEVs, $v_{\chi', \rho, \eta, \Delta, \sigma} \ll M$. This might sound a little surprising since some of the VEVs do break lepton number, but in reality one cannot even speak of lepton number anymore after it is broken by the explicit violation terms in Eq.~\eqref{eq:331_3}.

Moreover, a strong hierarchy between certain VEVs is assumed, $v_{\Delta, \sigma} \ll v_{\chi', \rho, \eta}$. This is only an assumption, but it could be motivated by observing that $(\Delta, \sigma)$ are components of the sextet scalar field, while $(\chi', \rho, \eta)$ are contained in triplets under $SU(3)_L$, cf.\ Eq.~\eqref{eq:331_2}. This is enforced by the minimization conditions that yield, under the above assumptions,
\begin{equation}
 m_L = Y \frac{v_\eta^2}{M}\ \ \ ,\ \ \ M_R = Y \frac{v_{\chi'}^2}{M},
 \label{eq:331_4}
\end{equation}
which motivates $m_L \ll M_R$. Note that the Yukawa matrices $Y$ are the same for the LH and RH fields.

This is the seesaw-like mechanism we had mentioned before: both mass matrices in Eq.~\eqref{eq:331_4} are suppressed, and in particular we have a motivation for relatively small $M_R$ eigenvalues which could well be at the keV scale. However, a drawback of this model is that it does not explain any \emph{hierarchy} in the sterile neutrino sector. If the goal of model building is to produce settings which resemble the $\nu$MSM~\cite{Asaka:2005an}, then this model unfortunately fails to do that job unless even more symmetries are imposed.

Finally note that a variant of this model was presented in Ref.~\refcite{Dias:2010vt}, where two new $SU(3)$ symmetries, $SU(3)_L$ and $SU(3)_R$, appear. In this ``3331''-model, each of the two $SU(3)$ yields a lepton triplet, which leads to a left-right symmetric version of Eq.~\eqref{eq:331_1}. Without going through the details of this extended model, we nevertheless want to mention that this model yields \emph{three} keV neutrinos $N_L$ which could play the role of DM and have the remarkable property that they do not mix with the ordinary neutrinos, thereby invalidating the X-ray constraint, cf.\ Sec.~\ref{sec:keV_general_X}.

\subsubsection{\label{sec:keV_comp_Composite}Composite Dirac neutrinos}	

The other seesaw-type mechanism we want to briefly discuss arises in the context of \emph{composite neutrinos}. Such settings are based on so-called \emph{preons}~\cite{Pati:1974yy}, which are assumed to be fundamental fermions $q$ charged under some new hidden gauge group which is often called \emph{$\nu$-color}. Preons condensate in a QCD-like manner to bound states of a generic mass scale $\Lambda$. The key point is to assume the existence of an even more fundamental scale $M \gg \Lambda$, below which everything we talk about is an \emph{effective field theory}. The idea of such a scenario to explain the possible existence of keV sterile neutrinos was put forward in Refs.~\refcite{Grossman:2010iq,Robinson:2012wu}.

All Yukawa couplings of preons to SM-fields are effective vertices, \emph{e.g.},
\begin{equation}
 \mathcal{L}_{\rm Yukawa} = -\frac{\lambda}{M^{3(n-1)/2}} \overline{L} \tilde H q^n,
 \label{eq:composite_1}
\end{equation}
where $\lambda = \mathcal{O}(1)$ and $n$ is the number of preons $q$. Certain fields could also be charged under some hidden flavor symmetry, which could give a certain structure to terms like the one in Eq.~\eqref{eq:composite_1}. Note that, since the $n$ preons obviously have to combine to a fermion, and we also need $n > 1$ to have a bound state in the first place, the smallest value possible is $n \geq 3$. If the preons condense at some scale $\Lambda$ to a spin-$\frac{1}{2}$ bound state, $q^n \to N_R \Lambda^{3(n-1)/2}$, then the corresponding Dirac mass terms are suppressed by powers of the small quantity $\epsilon = \frac{\Lambda}{M}$,
\begin{equation}
 \mathcal{L}_{\rm Yukawa} \to -\lambda \epsilon^{3(n-1)/2}  \overline{L} \tilde H N_R + h.c.,
 \label{eq:composite_2}
\end{equation}
where $\frac{3 (n-1)}{2} \geq 3$. Hence, there will be at least an $\epsilon^3$ mass suppression from such terms, compared to their natural scale $v$ from the Higgs VEV. However, if also left-handed (supposed to be heavy) neutrinos $N_L$ exist, which are also \emph{singlets} under $SU(2)$ -- \emph{i.e.}, the fermions $N$ are \emph{vector-like} -- then other terms of the form
\begin{equation}
 \mathcal{L}_{LH} = -\Lambda_N \overline{N_L} N_R + h.c.
 \label{eq:composite_3}
\end{equation}
are possible, too. Here, $\Lambda_N = \mathcal{O}(\Lambda)$, which would lead to heavier masses~\cite{Grossman:2010iq,Robinson:2012wu}. As an intermediate type of mass term, however, the existence of $N_L$ would also lead to masses of the form
\begin{equation}
 \mathcal{L}_{LH'} = -\Lambda'_N \epsilon^{3(n'-1)/2} \overline{N_L} N_R + h.c.,
 \label{eq:composite_4}
\end{equation}
where $\Lambda'_N = \mathcal{O}(\Lambda)$. This would motivate different types of models for the \emph{sterile} neutrinos $N$, and in particular it could introduce a hierarchy such that some of the sterile states have masses of $\mathcal{O}({\rm keV})$.

The remarkable unique point about this model is that the sterile neutrinos are \emph{Dirac particles}, contrary to all other models presented here: due to the existence of the $SU(2)$ singlet fields $N_L$, one can write down mass terms as in Eqs.~\eqref{eq:composite_3} and~\eqref{eq:composite_4} for sterile neutrinos, but they will nevertheless be sterile under any SM interaction. Furthermore, the mass term in Eq.~\eqref{eq:composite_2} simultaneously offers an explanation for the active neutrino masses to be tiny, and the hidden flavor symmetry could even be used to obtain a suitable leptonic mixing structure.

While this model has several nice features, one could object that the new scales introduced are a bit arbitrary and not fundamental, since they essentially rely on some non-perturbative QCD-like mechanism. Furthermore, the whole mechanism is relatively complicated, which might be another negative point. The lesson to learn is that, as always in model building, whenever we want to explain something we have to pay a price.

\subsection{\label{sec:keV_oth}Scenarios and models based on other ideas}	

Apart from the models discussed in Secs.~\ref{sec:keV_FN} to~\ref{sec:keV_comp}, there are several other settings which can accommodate for keV sterile neutrinos. However, as we had argued at the beginning of Sec.~\ref{sec:keV}, the mere fact that a certain setting can involve keV sterile neutrinos, \emph{i.e.}, it does not disagree with their existence, does not \emph{a priori} yield any explanation for them. Hence, in the terminology used in this review, such settings would be called \emph{scenarios} rather than \emph{models}.

Let us nevertheless briefly mention a couple of examples from the literature which can accommodate for keV sterile neutrinos:

\begin{itemize}

\item keV sterile neutrinos in the \emph{scotogenic model}~\cite{Sierra:2008wj,Gelmini:2009xd,Ma:2012if}:

The scotogenic model is a very simple extension of the SM by three RH neutrinos $N_i$, one additional Higgs doublet $\eta$, and a $Z_2$ symmetry under which only the new fields are charged non-trivially, \emph{i.e.}, they receive a factor of $(-1)$ under this parity transformation~\cite{Ma:2006km}. In this setting, there is a very natural explanation for the scale of the \emph{light} (active) neutrinos, since their masses only arise at 1-loop level. This leads to a natural suppression of their mass. However, this mechanism to generate active neutrino masses intrinsically requires a tree-level Majorana mass for the RH neutrinos $N_i$, since the model does not include any mechanism to spontaneously generate any violation of lepton number which must be present in order for it to translate into the light neutrino sector.

On the other hand, the mass spectrum of the RH neutrinos is not fixed. While they could lead to very interesting phenomenology at LHC if their masses were of $\mathcal{O}({\rm TeV})$~\cite{Sierra:2008wj,Cao:2007rm,Atwood:2007zza,Haba:2011nb}, it is not excluded that one of the ``heavy'' neutrinos could be as light as $\mathcal{O}({\rm keV})$, which is exactly what is considered by Refs.~\refcite{Sierra:2008wj,Gelmini:2009xd,Ma:2012if}. While these references discuss very interesting aspects of the phenomenology of such settings, they nevertheless lack an explanation for the appearance of the keV scale.

\item keV sterile neutrinos with \emph{Left-Right symmetry}~\cite{Bezrukov:2009th,Nemevsek:2012cd}:

A very interesting mechanism in order to correct a too large abundance is the release of a sufficient amount of entropy by late decay of a non-relativistic species that temporarily dominates the energy density of the Universe~\cite{Scherrer:1984fd}. This mechanism has also been discussed in the context of the $\nu$MSM, where it is able to open up new windows in the parameter space~\cite{Asaka:2006ek,Asaka:2006nq}.

If a keV ``sterile'' neutrino is only sterile with respect to SM interactions but takes part in gauge interactions beyond the SM, it will naturally be produced thermally~\cite{KolbTurner}. However, just as if active neutrinos had a mass of $\mathcal{O}({\rm keV})$, such a production will \emph{overclose} the Universe, \emph{i.e.}, the amount of DM produced will be too big. This can be corrected by a sufficient entropy production from the decays of the two heavier sterile neutrinos, as discussed in Sec.~\ref{sec:astro_prod_Therm}.

As before, even though this setting can contain keV sterile neutrinos, it does not by itself lead to an explanation of their mass. Nevertheless it is interesting, in particular since it is not consistent with every way to explain the keV scale, cf.\ Sec.~\ref{sec:keV_FN_pure}.

\end{itemize}

There are also further examples in the literature which would be called \emph{models} according to the terminology used here, but which for one reason or the other do not comprise a setting that is suitable for an illustrative example. These include:

\begin{itemize}

\item $U(1)$ symmetries broken close to the Planck scale~\cite{Allison:2012qn}:

In Ref.~\refcite{Allison:2012qn}, certain structures that can also lead to keV sterile neutrinos are discussed. Concretely, a model is presented which is based on a $U(1) \times U(1)' \times Z_4 \times Z_2$ symmetry. It connects both the production of the keV sterile neutrinos and the generation of their mass to a relatively light inflaton field $\phi$. In particular, the continuous symmetries are broken at a scale so high that it could even be close to the Planck scale.

However, one might object that, in fact, the symmetry used in Ref.~\refcite{Allison:2012qn} is actually not too much different from a generic Froggatt-Nielsen $U(1)$. Furthermore, the breaking of the global continuous symmetry leads to the existence of a massless state $\chi$, which is on the other hand argued to contribute to Dark Radiation~\cite{Hamann:2010bk,Dunkley:2010ge,Keisler:2011aw} -- see Ref.~\refcite{Abazajian:2012ys} for an extensive collection of observations. This could even be desirable from a cosmological point of view. On the other hand, certain potentially dangerous terms are neglected in the scalar potential, and the above mentioned $Z_2$ symmetry is a bit artificial as it has no effect other than stabilizing the lightest sterile neutrino $N_1$ and by this avoiding the X-ray bound, cf.\ Sec.~\ref{sec:keV_general_X}, in order to make the model consistent with current (non-) observations.

\item light sterile neutrinos from global symmetries~\cite{Sayre:2005yh}:

Another approach is to generate the sterile neutrino mass only by an effective operator, which itself arises from an unknown (or partially unknown) high energy sector. This possibility has been extensively discussed in Ref.~\refcite{Sayre:2005yh}, where exactly one sterile neutrino called $\nu_S$ is introduced, which receives mass from a Weinberg-like operator. The authors mention the possibility of an operator $\frac{1}{\Lambda} \overline{(\nu_S)^c} \nu_S H_1 H_2$ with two Higgs doublets $H_{1,2}$, but then discard this possibility because they are after $O(1)$ active-sterile mixing with a sterile neutrino mass of about $1$~eV, in order to explain the by then famous LSND anomaly~\cite{Aguilar:2001ty}. Instead, they use an operator $\frac{1}{\Lambda} \overline{(\nu_S)^c} \nu_S S S$ with a two singlet scalars $S$. While a bare (and hence potentially large) mass term $m_S \overline{(\nu_S)^c} \nu_S$ is forbidden by a global symmetry, the VEV $\langle S \rangle = v_S$ reintroduces the possibility of a higher order mass $\frac{v_S^2}{\Lambda}$ for the sterile neutrino. In particular, the authors discuss how to choose the required global symmetry appropriately without being in danger to end up with anomalies, cf.\ Ref.~\refcite{Sayre:2005yh} for details.

While the general approach of Ref.~\refcite{Sayre:2005yh} is certainly valuable, the details of the high energy sector are not specified in great detail, even though the results derived are not very sensitive to that. However, a more serious drawback (if we want to apply this model to the case of keV sterile neutrinos) is that there is no actual spectrum present in the sterile sector, so that the resulting neutrino mass is only a function of two unknown scales, $v_S$ and $\Lambda$. Nevertheless, the authors' approach certainly fulfills the criterion to completely disentangle the mass generation of the sterile neutrino from the electroweak scale, which allows for this freedom and which could be very useful for future model studies.

\item keV sterile neutrinos from Dark GUTs~\cite{Babu:2004mj}:

The possibility of a seesaw mechanism could not only be used in an active sector, but also in a \emph{dark} sector of a theory which hardly talks to the SM part. Such an approach has been taken in Ref.~\refcite{Babu:2004mj}, where different groups $G'$ are studied to be used in a setting similar to a Grand Unified Theory (GUT), which however does not include the SM itself as a gauge group. The field which plays the role of the sterile neutrino is charged non-trivially under this new symmetry $G'$, and in this sector it receives a suppressed mass in pretty much the same way as an active neutrino receives its mass in the ordinary seesaw type~I mechanism, cf.\ \ref{sec:seesaw}. This field is nevertheless a singlet under the SM, and it can hence play the role of a sterile neutrino.

While a couple of choices for the group $G'$ are discussed in Ref.~\refcite{Babu:2004mj}, the approach is nevertheless somewhat sketchy. Furthermore, although a general argument for the smallness of the sterile neutrino mass scale is given, there is no reason for enforcing a certain structure in the sterile sector, which could lead to phenomenological problems when trying to apply it to the case of keV sterile neutrino DM.

\item loop-masses for keV sterile neutrinos~\cite{Ma:2009gu}:

A very natural mechanism to explain hierarchies in the sterile neutrino sector has been presented by Ma in Ref.~\refcite{Ma:2009gu}, by using loop suppressions of the masses of certain sterile neutrinos. In the model suggested, SM singlet fermions $S_{1,2,3}$ are introduced in addition to the ``ordinary'' RH neutrinos $N_{1,2,3}$, along with several additional scalar fields. The arrangement of the fields and charges is such that a beautiful thing happens in the sterile sector: the field $S_1$ is the only additional singlet which has a tree-level mass $M_{S1}$. $S_2$ in turn, only receives a mass at 1-loop order by a diagram which involves the massive fermion $S_1$ in the loop, leading to a suppressed mass $M_{S2}$. Finally, the field $S_3$ also receives a mass by a 1-loop diagram that contains the field $S_2$ as internal fermion, which by itself only receives a 1-loop mass. This means that the corresponding mass $M_{S3}$ is in fact suppressed by a \emph{2-loop mechanism}. Such a setting naturally yields an explanation for $M_{S3} \ll M_{S2} \ll M_{S1}$, and by this it is a good motivation for $M_{S3} = \mathcal{O}({\rm keV})$.

However, what had been presented above is the beauty, and the beast comes along in the actual charge assignments: even though the above mechanism is plain and simple, the charge assignments necessary to generate it are, unfortunately, very complicated. Not only does the model require many more or less unmotivated scalar fields, it also requires an additional $U(1)$ symmetry with either a very large charge of one scalar or with relatively tricky charge assignments for the singlet fields $S_{1,2,3}$. While this model could certainly yield interesting phenomenology, its setting to some extent looks artificial and engineered, which is the price one has to pay for the elegant mechanism illustrated in the previous paragraph.

\item light sterile neutrinos in minimal radiative inverse seesaw~\cite{Dev:2012bd}:

The setting of Ref.~\refcite{Dev:2012bd} is based on the so-called \emph{inverse seesaw}~\cite{Mohapatra:1986aw,Mohapatra:1986bd} and it is, in effect, a combination of the seesaw mechanism with a loop-suppression. The idea is to extend the seesaw type~I setting by further LH singlet fermions $S_L$, which themselves obtain a very large Majorana-type mass $\mu_S$ and which also form a Dirac mass term with the ordinary right-handed neutrinos $N_R$. Integrating out the heaviest singlets $S_L$ then leads, at tree-level, to a mass matrix that looks very similar to an ordinary type~II mass matrix, and in particular it has a seesaw structure in all entries. However, it turns out that the rank of this matrix is nevertheless too small to yield light sterile neutrino masses (in other words, a cancellation forces them to be zero). On the other hand, at 1-loop level this mass matrix receives a correction which induces small sterile neutrino masses that are naturally loop-suppressed in addition to the seesaw structure that was already present. One could say that the loop-correction destroys the exact cancellation just mentioned.

This mechanism is very strong, so that sterile neutrino masses of $\mathcal{O}({\rm keV})$, or even lighter, are no problem to achieve. Furthermore, this mechanism is also quite minimal, in the sense that the seesaw framework is not extended by very much, while the suppressions follow naturally from the general structure. However, one might ask why the Majorana neutrino mass for the left-handed singlets is so much bigger than the (equally unconstrained) one for the right-handed neutrinos. In addition, a further potential issue is that this mechanism again leads to a very strong suppression of certain sterile neutrino masses, but there is no argument for a particular mass pattern.

\item anomalous sterile neutrino masses from gravitational torsions~\cite{Mavromatos:2012cc}:

A more exotic way of generating a light sterile neutrino mass is presented in Ref.~\refcite{Mavromatos:2012cc}. The mechanism relies on so-called \emph{quantum torsions}, which may arise in string theory or in more general quantum gravity settings. The general idea is to introduce two axion-like pseudoscalar fields $a$ and $b$, one providing the chirality change in the coupling of two RH neutrino fields $N_R$ and one being non-propagating and providing the coupling to the quantum torsion. Such a scenario allows to draw a two-loop diagram giving mass to $N_R$, which involves the pseudoscalar $a$, linearized graviton fields, and a coupling to the torsion.

The setting yields two possible suppressions: with only one axion $a$, the diagram is strongly UV-divergent and the mass $M_R$ is proportional to the sixth power of the cut-off scale $\Lambda$. Hence, for a low string mass/cut-off close to the GUT scale, $M_R$ can be of $\mathcal{O}(10~{\rm keV})$ for a Yukawa coupling $y_a = \mathcal{O} (10^{-3})$ in the term proportional to $a \overline{(N_R)^c} N_R$. Alternatively, if there are at least three axions $a_n$ ($n\geq 3$), the mass $M_R$ can be made finite (at two-loop level), with its size being controlled by the ratio of axion mass-splittings to the axion mass scale, $\left( \frac{\delta M_a^2}{M_a^2} \right)^n$.

Although this mechanism yields a general suppression for geometries with torsion in a quantum gravity setting, one could criticize that it relies on a relatively complicated framework, and hardly any scales and couplings in the game are fixed or at least tied to some other scale. Such ties could be naturally given in a more concrete model which, however, does not exist up to know. Nevertheless, the mechanism is powerful and reveals an interesting starting point for further studies.

\end{itemize}

Let us end this chapter by commenting on a generalization of keV sterile neutrinos, and other quasi-sterile fermions. Extremely weakly interacting particles with a mass of around a few keV acting as (typically warm) Dark Matter can also be found in frameworks other than keV sterile neutrinos, for example axinos~\cite{Jedamzik:2005sx}, gravitinos~\cite{Gorbunov:2008ui,Jedamzik:2005sx,Baltz:2001rq}, KK-gravitons~\cite{Jedamzik:2005sx}, majorons~\cite{Lattanzi:2007ux,Frigerio:2011in}, mixed axion-axino DM~\cite{Baer:2008yd}, modulinos~\cite{Dvali:1998qy,Benakli:1997iu,Benakli:1997hb}, or singlinos~\cite{McDonald:2008ua}.

This observation has led the authors of Ref.~\refcite{King:2012wg} to a generalization of the \emph{fermionic} WDM candidates to a new class of DM called \emph{keVins}, which stands for \underline{keV} \underline{in}ert fermion\underline{s}. Just as WIMPs (\underline{W}eakly \underline{I}nteracting \underline{M}assive \underline{P}article\underline{s}), keVins comprise a relatively general class of DM particles, of which keV sterile neutrinos are only one special case. However, also for keVins there is in general no reason to have a mass of a few keV, just as there is no reason for a general WIMP to be weakly interacting. Such a reason has to be imposed in a concrete model.

Note that in Ref.~\refcite{King:2012wg}, the mechanism to produce the correct DM abundance is, just as in the LR-symmetric versions of keV sterile neutrinos~\cite{Bezrukov:2009th,Nemevsek:2012cd}, thermal overproduction with subsequent dilution by entropy production~\cite{Scherrer:1984fd}. For keVins, the particle that decays and produces the required entropy is a heavier version of the DM particle with a mass of a few GeV: the \emph{GeVin}. However, this method is only one possibility to achieve the correct DM abundance, and even the production mechanism mentioned here has not yet been studied in very great detail. In general, the class of keVins offers more possibilities for relatively general studies, which could then be used to discover new successful regions in the parameter space which might not be visible in all concrete models.\\

In this central chapter of the review, we have seen that the key point in keV sterile neutrino model building is to find a motivation for the keV scale in the first place. This by itself is far from trivial, and most of the known mechanisms achieving an explanation have been presented in Secs.~\ref{sec:keV_FN} to~\ref{sec:keV_comp}. Interesting further possibilities arise when trying to generate keV neutrino masses, \emph{e.g.} at loop-level, but although several such settings are available for light neutrinos, there is still no \emph{simple} model to generate a sterile neutrino mass in that way. Of course, further mechanisms and arguments could be found, and by this we could not only establish links between two very fundamental sectors in particle physics, neutrinos and Dark Matter, but we could also gain an improved understanding of our model building techniques by investigating in how far such an extremely hierarchical mass spectrum as needed for keV sterile neutrinos could be realized.

\section{\label{sec:conc}Conclusions}	

We have presented a review of the current ideas present in the literature which can explain or at least motivate the existence of a keV sterile neutrino. Such a particle would be well suited to act as (typically warm) Dark Matter and it can have very interesting connections to the low energy neutrino sector. While keV sterile neutrinos are very well investigated from the astrophysics side by studies of their production mechanisms and their behavior in a cosmological context, the model building side has only become a rising field within the last few years. Nevertheless, a review at exactly this stage is useful in order to motivate the scientists from the different disciplines to collaborate on this exiting topic. It has become clear for some model builders that there are many interesting relations of keV sterile neutrinos to the light neutrino sector, which could lead to both, new types of mass-creating mechanisms as well as testability of the models by the light neutrino sector. These interrelations are under investigation by several groups by now, but nevertheless the corresponding scientific works have sometimes been hard to understand for people from neighboring fields. This gap has hopefully been bridged at least to some extent by the present review.

Ideally, we should soon enter an era where model studies are combined with detailed studies of the cosmological implications of sterile neutrinos. Many exciting new experiments and observations are on the way, and they can be expected to have a considerable impact on the field within the coming years. We should take the opportunity we have now to be prepared in order to make optimal use of the data to be gained. New discoveries are the advances of science, and the growing field of keV sterile neutrino Dark Matter will hopefully benefit from several such discoveries in the near future.

\section*{Acknowledgments}

AM is grateful to his collaborators in his projects on keV neutrinos and on related topics: P.~Di Bari, M.~D\"urr, S.~F.~King, M.~Lindner, W.~Rodejohann, D.~Schmidt, and in particular V.~Niro. AM furthermore wants to thank V.~Niro and A.~Stuart for carefully reading the first version of this manuscript and giving valuable comments, and he is particularly grateful to A.~Stuart for useful discussions on vacuum alignment. Useful comments on the preprint version of this manuscript from J.~Barry, H.~de Vega, A.~Gomes Dias, J.~Heeck, A.~Kusenko, N.~Mavromatos, D.~Meloni, and N.~Sanchez are acknowledged, too. This review could never have been completed without the continuous motivation boosts provided by M.~Abb. Finally, AM acknowledges financial support by a Marie Curie Intra-European Fellowship within the 7th European Community Framework Programme FP7-PEOPLE-2011-IEF, contract PIEF-GA-2011-297557, and partial support from the European Union FP7  ITN-INVISIBLES (Marie Curie Actions, PITN-GA-2011-289442).

\appendix

\section{\label{sec:seesaw}The seesaw mechanism in detail}

In this appendix, we perform the seesaw calculation in great detail. Although known in principle, in many references (sometimes even in the supposed-to-be-pedagogical ones) this calculation is not presented in sufficient detail, and certain logical steps are skipped over. Due to its importance, we will go through it without skipping anything. We will furthermore explain two related technicalities which are of interest here, namely the so-called \emph{Casas-Ibarra parametrization} and the notion of \emph{form dominance}.

\subsection{\label{sec:seesaw_matrix}The seesaw mass matrix}

The Dirac mass term for neutrinos is given by
\begin{equation}
 \mathcal{L}_D = -\overline{\nu_L} m_D N_R + h.c.,
 \label{eq:Dirac_app}
\end{equation}
while the LH and RH Majorana mass terms are given by
\begin{equation}
 \mathcal{L}_M = -\frac{1}{2} \overline{\nu_L} m_L (\nu_L)^c -\frac{1}{2} \overline{(N_R)^c} M_R N_R + h.c.
 \label{eq:Majorana_app}
\end{equation}
The combination of the two terms, $\mathcal{L}_\nu = \mathcal{L}_D + \mathcal{L}_M$, leads to the well-known general seesaw type~I~\cite{Minkowski:1977sc,Yanagida:1979as,GellMann:1980vs,Glashow:1979nm,Mohapatra:1979ia} and type~II situations~\cite{Magg:1980ut,Lazarides:1980nt}, while the limiting case where $m_L \equiv 0$ is just the type~I seesaw and the limiting case where $m_D = M_R \equiv 0$, where $m_L$ arises due to a Higgs triplet field, is just type~II. Note also the structural difference between the LH and RH Majorana mass terms in Eq.~\eqref{eq:Majorana_app}: one could have defined them in an analogous matter, which would simply amount to a redefinition of either $m_L$ or $M_R$, but the choice made here will turn out convenient later on.

Note that Majorana mass matrices are symmetric, $m_L^T \equiv m_L$ and $M_R^T \equiv M_R$: for any Majorana spinor $\Psi$, we have $\Psi^c \equiv e^{i\phi} \Psi$, with a Majorana phase $\phi$, from which it follows that
\begin{eqnarray}
 && \overline{\Psi^c_\alpha} M_{\alpha \beta} \Psi_\beta = e^{-i\phi_\alpha} \overline{\Psi_\alpha} M_{\alpha \beta} \Psi_\beta = e^{-i\phi_\alpha} (\overline{\Psi_\alpha} M_{\alpha \beta} \Psi_\beta)^T = - e^{-i\phi_\alpha} \Psi_\beta^T M^T_{\beta \alpha} \underbrace{(-C^2)}_{={\bf 1}} \overline{\Psi_\alpha}^T \nonumber\\
 && = e^{-i\phi_\alpha} \Psi_\beta^T C M^T_{\beta \alpha} C \overline{\Psi_\alpha}^T = e^{-i\phi_\alpha} \overline{\Psi_\beta^c} M^T_{\beta \alpha} \Psi_\alpha^c = \overline{\Psi_\beta^c} M^T_{\beta \alpha} \Psi_\alpha,
 \label{eq:Majorana_sym}
\end{eqnarray}
where the first minus sign arises from the anti-commutation of fermion fields. We have also used the ordinary charge conjugation formula $\Psi^c = C \overline{\Psi}^T$, which holds for any spinor $\Psi$, Majorana or not, with the charge conjugation matrix $C = i \gamma^2 \gamma^0$. We have further exploited that
\begin{equation}
 \Psi^c = C \overline{\Psi}^T = C (\Psi^\dagger \gamma^0)^T = \underbrace{C \gamma^0}_{-\gamma^0 C} \Psi^* \Rightarrow \gamma^0 \Psi^c = \underbrace{-C}_{C^\dagger} \Psi^* \Rightarrow \overline{\Psi^c} = \Psi^T C. 
 \label{eq:CC_trick}
\end{equation}
Renaming the indices on the right-hand side of Eq.~\eqref{eq:Majorana_sym} finally leads to 
\begin{equation}
 \overline{\Psi^c_\alpha} M_{\alpha \beta} \Psi_\beta = \overline{\Psi_\alpha^c} M^T_{\alpha \beta} \Psi_\beta,
 \label{eq:Majorana_sym_2}
\end{equation}
from which we can immediately read off that $M = M^T$.

The final trick to arrive at the correct form of the seesaw mass matrix is to split up the Dirac mass term, Eq.~\eqref{eq:Dirac_app}, into two pieces:
\begin{equation}
 \overline{\nu_L} m_D N_R = \frac{1}{2}\overline{\nu_L} m_D N_R + \frac{1}{2}(\overline{\nu_L} m_D N_R)^T.
 \label{eq:Dirac_split_1}
\end{equation}
Making use of the anti-commutation of fermion fields as well as of Eq.~\eqref{eq:CC_trick} one more time, we can rewrite the transposed term:
\begin{equation}
 (\overline{\nu_L} m_D N_R)^T = - (N_R)^T m_D^T \underbrace{(-C^2)}_{={\bf 1}} \overline{\nu_L}^T = \underbrace{(N_R)^T C}_{= \overline{(N_R)^c}} m_D^T \underbrace{C \overline{\nu_L}^T}_{= (\nu_L)^c}.
 \label{eq:Dirac_split_2}
\end{equation}
Putting all together, we finally obtain:
\begin{eqnarray}
 \mathcal{L}_\nu &=& \mathcal{L}_D + \mathcal{L}_M = -\frac{1}{2} \overline{\nu_L} m_D N_R - \frac{1}{2} \overline{(N_R)^c} m_D^T (\nu_L)^c -\frac{1}{2} \overline{\nu_L} m_L (\nu_L)^c -\frac{1}{2} \overline{(N_R)^c} M_R N_R + h.c.\nonumber \\
 & = & -\frac{1}{2} (\overline{\nu_L} , \overline{(N_R)^c})
 \begin{pmatrix}
 m_L & m_D\\
 m_D^T & M_R
 \end{pmatrix}
 \begin{pmatrix}
 (\nu_L)^c\\
 N_R
 \end{pmatrix} + h.c.
 \label{eq:total_seesaw}
\end{eqnarray}
This equation involves the full (typically $6\times 6$) neutrino mass matrix.

\subsection{\label{sec:seesaw_diag}The seesaw diagonalization}

The next step is to (approximately) diagonalize the full neutrino mass matrix:
\begin{equation}
 M_\nu = \begin{pmatrix}
 m_L & m_D\\
 m_D^T & M_R
 \end{pmatrix},
 \label{eq:full_mass}
\end{equation}
where one assumes that somehow $m_L \ll m_D \ll M_R$. Of course, this is an intrinsically ill-defined condition when dealing with matrices, in particular since even the ``large'' matrix $M_R$ could have texture zeros. Still, we will follow the practical assumption that the hierarchy between the mass matrices is justified, but we should keep in mind that in some cases it may be useful to numerically check if the low energy mass matrix actually does conserve the structure of the full mass matrix.

Similar diagonalization procedures as the one here are outlined in, \emph{e.g.}, Refs.~\refcite{Barry:2011fp,Kiers:2005vx}, although we slightly depart from both references. The first step is to approximately block-diagonalize the matrix by a nearly unitary matrix $\tilde U$ given by
\begin{equation}
 \tilde U = \begin{pmatrix}
 {\bf 1} - \frac{1}{2} b b^\dagger & b\\
 - b^\dagger & {\bf 1} - \frac{1}{2} b^\dagger b
 \end{pmatrix},
 \label{eq:U_tilde}
\end{equation}
where the parameter $b$, to be defined later, is assumed to be of $\mathcal{O}(m_D M_R^{-1})$, and it can hence be regarded as small. In particular, since $m_L$ is the smallest mass scale in the game and since we expect $\mathcal{O}(m_L) = \mathcal{O}(m_D^2 M_R^{-1})$ for type~II seesaw, we will only need to keep $m_L$ with a coefficient of $\mathcal{O}(1)$, $m_D$ up to a coefficient of $\mathcal{O}(b)$, and $M_R$ up to a coefficient of $\mathcal{O}(b^2)$. Then, we can immediately see that
\begin{equation}
 \tilde U \tilde U^\dagger = \tilde U^\dagger \tilde U = \begin{pmatrix}
 {\bf 1} + \frac{1}{4} b b^\dagger b b^\dagger & 0\\
 0 & {\bf 1} + \frac{1}{4} b^\dagger b b^\dagger b
 \end{pmatrix} = {\bf 1} + \mathcal{O}(b^4) \simeq {\bf 1}.
 \label{eq:U_tilde_unit}
\end{equation}
The next step is to calculate $\tilde D \equiv \tilde U^T M_\nu \tilde U$, keeping only relevant (\emph{i.e.}, large enough) terms. For example, we have
\begin{eqnarray}
 \tilde U^T M_\nu && = \begin{pmatrix}
 m_L - \frac{1}{2} b^* b^\dagger m_L - b^* m_D^T & \  & m_D - \frac{1}{2} b^* b^T m_D - b^* M_R\\
 b^T m_L + m_D^T - \frac{1}{2} b^T b^* m_D^T & \ & b^T m_D + M_R - \frac{1}{2} b^T b^* M_R
 \end{pmatrix} \nonumber\\
 && \simeq \begin{pmatrix}
 m_L - b^* m_D^T & \  & m_D - b^* M_R\\
 m_D^T & \  & b^T m_D + M_R - \frac{1}{2} b^T b^* M_R
 \end{pmatrix}.
 \label{eq:diagM_1}
\end{eqnarray}
All that remains is to calculate the product,
\begin{equation}
 \tilde D \simeq \begin{pmatrix}
 m_L - b^* m_D^T & \  & m_D - b^* M_R\\
 m_D^T & \  & b^T m_D + M_R - \frac{1}{2} b^T b^* M_R
 \end{pmatrix} \begin{pmatrix}
 {\bf 1} - \frac{1}{2} b b^\dagger & b\\
 - b^\dagger & {\bf 1} - \frac{1}{2} b^\dagger b
 \end{pmatrix}. 
 \label{eq:diagM_2}
\end{equation}
The off-diagonal elements of this expression should vanish, which determines the matrix $b$ unambiguously. For the 12-element, we have
\begin{equation}
 \tilde D_{12} \simeq m_L b - b^* m_D^T b + m_D - b^* M_R \simeq m_D - b^* M_R \stackrel{!}{=} 0\ \ \Rightarrow \ \ b^* = m_D M_R^{-1},
 \label{eq:D_12}
\end{equation}
which leads to $b = m_D^* {M_R^{-1}}^*$, $b^T = {M_R^{-1}}^* m_D^\dagger$, and $b^\dagger = {b^T}^* = M_R^{-1} m_D^T$. This also implies that $\tilde D_{21} \simeq 0$. Similarly, we can calculate the diagonal blocks:
\begin{eqnarray}
 \tilde D_{11} &\simeq& m_L - b^* m_D^T - m_D b^\dagger + b^* M_R b^\dagger = m_L - m_D M_R^{-1} m_D^T, \nonumber\\
 \tilde D_{22} &\simeq& M_R + \frac{1}{2} \left[ m_D^T m_D^* {M_R^{-1}}^* + {M_R^{-1}}^* m_D^\dagger m_D \right] \simeq M_R.
 \label{eq:D_1122}
\end{eqnarray}
By this we have recovered the general seesaw formula,
\begin{equation}
 \tilde D \equiv \tilde U^T M_\nu \tilde U \simeq \begin{pmatrix}
 m_L - m_D M_R^{-1} m_D^T & 0\\
 0 & M_R
 \end{pmatrix} \equiv \begin{pmatrix}
 m_\nu & 0\\
 0 & M_R
 \end{pmatrix}.
 \label{eq:seesaw_general}
\end{equation}
In the basis where the charged lepton masses are diagonal, $M_e = {\rm diag}(m_e, m_\mu, m_\tau)$, the remaining $3\times 3$ blocks can be diagonalized by $U_{\rm PMNS}^T m_\nu U_{\rm PMNS} = {\rm diag}(m_1, m_2, m_3) \equiv D_\nu$ and $V_R^T M_R V_R = {\rm diag}(M_1, M_2, M_3) \equiv D_N$, which finally leads to
\begin{eqnarray}
 && D \equiv \overline{U}^T M_\nu \overline{U} = \begin{pmatrix}
 D_\nu & 0\\
 0 & D_N
 \end{pmatrix},\ \ {\rm where} \label{eq:seesaw_general_diag} \\
 && \overline{U}\equiv \begin{pmatrix}
 {\bf 1} - \frac{1}{2} m_D^* {M_R^{-1}}^* M_R^{-1} m_D^T & m_D^* {M_R^{-1}}^*\\
 - M_R^{-1} m_D^T & {\bf 1} - \frac{1}{2} M_R^{-1} m_D^T m_D^* {M_R^{-1}}^*
 \end{pmatrix} \begin{pmatrix}
 U_{\rm PMNS} & 0\\
 0 & V_R
 \end{pmatrix},
 \nonumber
\end{eqnarray}
which is a central result of this appendix.

In certain situations it might be convenient to work in a basis where the charged lepton mass matrix $M_e$ is not diagonal, but is instead diagonalized by a unitary matrix $U_e$ according to $U_e^\dagger M_e^\dagger M_e U_e = {\rm diag}(m_e^2, m_\mu^2, m_\tau^2)$. Then, the PMNS matrix is given by the mismatch of the two bases, $U_{\rm PMNS} = U_e^\dagger U_\nu$, where the unitary matrix $U_\nu$ diagonalizes the neutrino mass matrix according to $U_\nu^T m_\nu U_\nu$ in the Majorana case. We will, in such situations, distinguish the charged lepton and neutrino-related quantities by super- or subscripts $l$ and $\nu$, respectively. For example, in Sec.~\ref{sec:neutrino_modeling_flavor} such a case appears, where the neutrino-related mixing angles are denoted by $\theta^\nu_{ij}$. However, the physical leptonic mixing angles are always denoted without such a super- or subscript.

\subsection{\label{sec:seesaw_CI}The Casas-Ibarra parametrization}

In the context of the seesaw type~I mechanism, a smart parametrization has been proposed by Casas and Ibarra in Ref.~\refcite{Casas:2001sr}. The trick of this parametrization is to make maximal use of the observable parameters in a seesaw type~I setting, while all unknowns are cast in an orthogonal matrix $R$. We will shortly review the derivation of the parametrization in the conventions used in this review.

The starting point is the type~I seesaw formula for light neutrinos, cf.\ Eq.~\eqref{eq:seesaw_general},
\begin{equation}
 m_\nu = - m_D M_R^{-1} m_D^T,
 \label{eq:CI_1}
\end{equation}
in a basis where the charged lepton and RH neutrino mass matrices are diagonal. Then, exactly as in Eq.~\eqref{eq:seesaw_general_diag}, $m_\nu$ can be diagonalized by the PMNS matrix $U \equiv U_{\rm PMNS}$, $D_\nu = {\rm diag}(m_1, m_2, m_3) = U^T m_\nu U$, where $U$ has exactly the form given in Eq.~\eqref{eq:PMNS_expl}. With the following definitions,
\begin{equation}
 D_{\sqrt{\nu}} \equiv {\rm diag}(\sqrt{m_1}, \sqrt{m_2}, \sqrt{m_3})\ \ \ {\rm and}\ \ \ D^{-1}_{\sqrt{N}} \equiv {\rm diag}(1/\sqrt{M_1}, 1/\sqrt{M_2}, 1/\sqrt{M_3}),
 \label{eq:CI_2}
\end{equation}
one can rewrite the diagonalization of Eq.~\eqref{eq:CI_1},
\begin{equation}
 D_{\sqrt{\nu}} D_{\sqrt{\nu}} = D_\nu = U^T m_\nu U = - U^T m_D M_R^{-1} m_D^T U = U^T i m_D D^{-1}_{\sqrt{N}} D^{-1}_{\sqrt{N}} i m_D^T U.
 \label{eq:CI_3}
\end{equation}
Multiplying from left and right by $D^{-1}_{\sqrt{\nu}}$, one obtains
\begin{equation}
 {\bf 1} = \left[ D^{-1}_{\sqrt{N}} i m_D^T U D^{-1}_{\sqrt{\nu}} \right]^T \left[ D^{-1}_{\sqrt{N}} i m_D^T U D^{-1}_{\sqrt{\nu}} \right],
 \label{eq:CI_4}
\end{equation}
which suggests the definition of the complex orthogonal matrix
\begin{equation}
 R \equiv i D^{-1}_{\sqrt{N}} m_D^T U D^{-1}_{\sqrt{\nu}}.
 \label{eq:CI_5}
\end{equation}

Note that the above parametrization can also be extended to seesaw type~II scenarios~\cite{Akhmedov:2008tb}: the trick is simply to make the replacement
\begin{equation}
 m_\nu \to X_\nu \equiv m_\nu - m_L
 \label{eq:CI_6}
\end{equation}
in Eq.~\eqref{eq:CI_1}, where $m_L$ is the LH neutrino mass matrix, cf.\ Eq.~\eqref{eq:total_seesaw}. Of course, this also means that in the subsequent equations one has to replace
\begin{equation}
 D_\nu \to D_X = {\rm diag}(X_1, X_2, X_3)\ \ \ {\rm and}\ \ \ U \to U_X,
 \label{eq:CI_7}
\end{equation}
where $X_i$ are the eigenvalues of the matrix $X_\nu$ and $U_X$ is the matrix that diagonalizes $X_\nu$ according to $D_X = {\rm diag}(X_1, X_2, X_3) = U_X^T X_\nu U_X$. While by this one loses the direct connection to the PMNS matrix, it is nevertheless useful since one can still exploit the general parametrization of the complex orthogonal matrix $R$,
\begin{equation}
 R = R_{12} R_{13} R_{23},
 \label{eq:CI_8}
\end{equation}
where $R_{ij}$ are the rotations by complex angles $\omega_{ij} = \rho_{ij} + i \sigma_{ij}$ in the $ij$-plane,
\begin{equation}
 R_{12} = \begin{pmatrix}
 c_{\omega_{12}} & s_{\omega_{12}} & 0\\
 -s_{\omega_{12}} & c_{\omega_{12}} & 0\\
 0 & 0 & 1
 \end{pmatrix}, \ \ \ R_{13} = \begin{pmatrix}
 c_{\omega_{13}} & 0 & s_{\omega_{13}}\\
 0 & 1 & 0\\
 -s_{\omega_{13}} & 0 & c_{\omega_{13}}
 \end{pmatrix},\ \ \ R_{23} = \begin{pmatrix}
 1 & 0 & 0\\
 0 & c_{\omega_{23}} & s_{\omega_{23}}\\
 0 & -s_{\omega_{23}} & c_{\omega_{23}} \end{pmatrix},
 \label{eq:CI_9}
\end{equation}
where $c_{\omega_{ij}} \equiv \cos \omega_{ij}$ and $s_{\omega_{ij}} \equiv \sin \omega_{ij}$.

\subsection{\label{sec:seesaw_FD}Form dominance}

The notion of \emph{form dominance} was defined in Ref.~\refcite{Chen:2009um}. In a seesaw type~I situation, we first denote the columns of the Dirac mass matrix by 3-vectors in a basis where the right-handed neutrino mass matrix is diagonal,\footnote{Only here, we change the notation of the right-handed masses to $M_{A,B,C}$, since it will make it easier to follow the calculation.}
\begin{equation}
 m_D = (\vec{A}, \vec{B}, \vec{C}),\ \ \ M_R = {\rm diag} (M_A, M_B, M_C).
 \label{eq:FD_1}
\end{equation}
Using this notation, the seesaw formula can be easily shown to have the form,
\begin{equation}
 m_\nu = - m_D M_R^{-1} m_D^T = - \left( \frac{\vec{A} \otimes \vec{A}^T}{M_A} + \frac{\vec{B} \otimes \vec{B}^T}{M_B} + \frac{\vec{C} \otimes \vec{C}^T}{M_C} \right),
 \label{eq:FD_2}
\end{equation}
where the tensor product is defined in the usual way, \emph{e.g.},
\begin{equation}
 \vec{A} \otimes \vec{A}^T = \begin{pmatrix}
 A_1 A_1 & A_1 A_2 & A_1 A_3\\
 A_2 A_1 & A_2 A_2 & A_2 A_3\\
 A_3 A_1 & A_3 A_2 & A_3 A_3
 \end{pmatrix},\ \ \ {\rm where}\ \ \ \vec{A} = \begin{pmatrix}
 A_1\\
 A_2\\
 A_3
 \end{pmatrix}.
 \label{eq:FD_3}
\end{equation}
Let us now write the PMNS matrix $U$ in terms of its column vectors,
\begin{equation}
 U = (\vec{U}_1, \vec{U}_2, \vec{U}_3).
 \label{eq:FD_4}
\end{equation}
Then, in the basis where the charged lepton mass matrix is diagonal, form dominance is defined as the situation where each column of $m_D$ is proportional to the corresponding column of the conjugate PMNS matrix $U^*$,\footnote{Note that in Ref.~\refcite{Chen:2009um} $U$ was taken to be real, which is why the complex conjugation does not appear in the corresponding formula.}
\begin{equation}
 \text{\em form dominance} : \Leftrightarrow (\vec{A}, \vec{B}, \vec{C}) = (a \vec{U}_1^*, b \vec{U}_2^*, c \vec{U}_3^*),
 \label{eq:FD_5}
\end{equation}
where $a,b,c$ are arbitrary complex numbers. Note that, once a proportionality of the columns is given, we can always write this as in Eq.~\eqref{eq:FD_5}, by relabeling the columns $\vec{A}$, $\vec{B}$, and $\vec{C}$. Explicitly, one can write:
\begin{equation}
 \vec{A} = a \begin{pmatrix}
 U_{e 1}\\
 U_{\mu 1}\\
 U_{\tau 1}
 \end{pmatrix}, \vec{B} = b \begin{pmatrix}
 U_{e 2}\\
 U_{\mu 2}\\
 U_{\tau 2}
 \end{pmatrix}, \vec{C} = c \begin{pmatrix}
 U_{e 3}\\
 U_{\mu 3}\\
 U_{\tau 3}
 \end{pmatrix}.
 \label{eq:FD_6}
\end{equation}
Since the columns of a unitary matrix are orthonormal, $\vec{U}_i^\dagger \vec{U}_j = \delta_{ij}$ and $\vec{U}_i^T \vec{U}_j^* = \delta_{ij}$, one can easily show that:
\begin{equation}
 U^T \vec{A} \otimes \vec{A}^T U = a^2 \begin{pmatrix}
 1 & 0 & 0\\
 0 & 0 & 0\\
 0 & 0 & 0
 \end{pmatrix},
 U^T \vec{B} \otimes \vec{B}^T U = b^2 \begin{pmatrix}
 0 & 0 & 0\\
 0 & 1 & 0\\
 0 & 0 & 0
 \end{pmatrix},
 U^T \vec{C} \otimes \vec{C}^T U = c^2 \begin{pmatrix}
 0 & 0 & 0\\
 0 & 0 & 0\\
 0 & 0 & 1
 \end{pmatrix}.
 \label{eq:FD_7}
\end{equation}
This allows to express the light neutrino masses in terms of the heavy neutrino masses and only three free parameters,
\begin{equation}
 U^T m_\nu U =  - U^T \left( \frac{\vec{A} \otimes \vec{A}^T}{M_A} + \frac{\vec{B} \otimes \vec{B}^T}{M_B} + \frac{\vec{C} \otimes \vec{C}^T}{M_C} \right) U = {\rm diag} \left( - \frac{a^2}{M_A}, - \frac{b^2}{M_B}, - \frac{c^2}{M_C} \right).
 \label{eq:FD_8}
\end{equation}
Finally, we note that form dominance is equivalent to setting $R = {\bf 1}$ in the Casas-Ibarra parametrization, cf.\ \ref{sec:seesaw_CI}.\cite{Choubey:2010vs}. Indeed, using Eq.~\eqref{eq:CI_5}, one obtains
\begin{equation}
 m_D =U^* D_{\sqrt{\nu}} (-i R^T) D_{\sqrt{N}} = U^* D_{\sqrt{\nu N}} = -i (\sqrt{m_1 M_1} \vec{U}_1^*, \sqrt{m_2 M_2} \vec{U}_2^*, \sqrt{m_3 M_3} \vec{U}_3^*),
 \label{eq:FD_9}
\end{equation}
which just coincides with Eq.~\eqref{eq:FD_5}.

\section{\label{sec:vacuum}Vacuum alignment}

This appendix is supposed to give some more information on the procedures necessary to get the desired \emph{vacuum alignment} in models based on discrete symmetries. We have already noted that this part of model building is often a bit technical, yet it is necessary in order to arrive at a consistent model. To give a flavor of how this is done, we discuss how to obtain the vacuum alignment for the model from Ref.~\refcite{Chen:2009um}, which had been discussed in Sec.~\ref{sec:neutrino_modeling_flavor}. Note that the vacuum alignment was (purposely) not given in Ref.~\refcite{Chen:2009um}, but it is actually not too difficult to construct.

As we had seen, the $A_4$ model presented contains one triplet flavon $\phi_S\sim \mathbf{3}$ and one singlet flavon $u\sim \mathbf{1}$, which should obtain VEVs as specified in Eq.~\eqref{eq:ex_VEVs} which, using the abbreviations $\tilde \alpha_i \equiv \alpha_i \Lambda_F$, look like
\begin{equation}
 \langle \phi_S \rangle = \begin{pmatrix} 1\\ 1\\ 1\end{pmatrix} \tilde \alpha_S,\ \ \ \langle u \rangle = \tilde \alpha_0.
 \label{eq:ex_VEVs_abbr}
\end{equation}
These VEVs can be obtained in the same way as the VEV of the SM-Higgs, namely by minimizing the full scalar potential of the model. In order to do this, however, we first have to construct the most general potential.

This is done with the $A_4$ multiplication rules we had discussed in Sec.~\ref{sec:neutrino_modeling_flavor}, using the information we have about the transformation properties of the flavons $\phi_S$ and $u$. However, we have to be careful not to forget that the model also contains the SM-Higgs field $H$, which is a doublet under $SU(2)_L$ but a singlet under $A_4$. Thus the combination $H^\dagger H$ is a total singlet, under the gauge group as well as under the discrete symmetry. While it would in principle be correct to also include couplings between the Higgs and the flavon sector, they are often neglected in practice, since one can argue that the flavor breaking happens at a scale much higher than electroweak symmetry breaking, such that the corrections induced by the Higgs are small~\cite{Ding:2013hpa}. Furthermore, the Higgs will not even develop a VEV at the high flavor breaking scale, and and in order to compute the back-reactions of the flavon sector onto the Higgs, one would need to solve the full set of renormalization group equations.

Putting that aside, we are left with the task of constructing all possible singlet combinations of scalar fields which appear in the potential. Since the potential is part of the Lagrangian, only trivial singlet combinations are allowed, and if we want to keep the potential renormalizable we are only allowed to keep terms of a mass dimension smaller than or equal to four.\footnote{Note that this statement actually introduces a slight inconsistency, since in the model under consideration the non-renormalizable Weinberg operator had been used, cf.\ Eq.~\eqref{eq:Wein_op}, so one could ask why renormalizability is required for the scalar potential. Indeed, adding higher order terms to the potential could be a potential solution for a problem we will encounter in the computation of the vacuum alignment. However, since the calculation here is only an example, we will not discuss this subtlety further.} Then we can make use of Eqs.~\eqref{eq:dir_prod}, \eqref{eq:dir_prod_sing}, \eqref{eq:dir_prod_trip}, \eqref{eq:dir_prod_tripsing}, and~\eqref{eq:further_prod} in order to construct all the singlet combinations. For example, a direct product of two fields $\phi_S$ contains exactly one trivial singlet combination, which according to Eqs.~\eqref{eq:dir_prod} and \eqref{eq:dir_prod_sing} is for $\phi_S = (\varphi_1, \varphi_2, \varphi_3)^T$ given by
\begin{equation}
 (\phi_S \otimes \phi_S)_{\mathbf{1}} =(\varphi_1^2 + 2 \varphi_2 \varphi_3)\sim \mathbf{1}.
 \label{eq:sing_phiS}
\end{equation}
Hence, this combination of fields could also be squared or combined with $u$ or $u^2$ (as well as $H^\dagger H$, in principle) and all these terms would be allowed to appear in the scalar potential. Noting furthermore that the asymmetric triplet contraction $\mathbf{3}_A$ contained in Eq.~\eqref{eq:dir_prod} will vanish if two identical fields $\phi_S$ are combined, cf.\ second Eq.~\eqref{eq:dir_prod_trip}, all the possible field combinations transforming as total singlets under $A_4$ can be constructed:
\begin{eqnarray}
 u_{\mathbf{1}} & \to & u, \nonumber \\
 (u \otimes u)_{\mathbf{1}} & \to & u^2, \nonumber \\
 (u \otimes u \otimes u)_{\mathbf{1}} & \to & u^3, \nonumber \\
 (u \otimes u \otimes u \otimes u)_{\mathbf{1}} & \to & u^4, \nonumber \\
 (\phi_S \otimes \phi_S)_{\mathbf{1}} & \to & \varphi_1^2 + 2 \varphi_2 \varphi_3, \nonumber \\
 \left[(\phi_S \otimes \phi_S)_{\mathbf{3}_S} \otimes \phi_S  \right]_{\mathbf{1}} & \to &  \varphi_1^3 + \varphi_2^3 + \varphi_3^3  - 3 \varphi_1 \varphi_2 \varphi_3, \nonumber \\
 \left[(\phi_S \otimes \phi_S)_{\mathbf{1}} \otimes (\phi_S \otimes \phi_S)_{\mathbf{1}} \right]_{\mathbf{1}} & \to & (\varphi_1^2 + 2 \varphi_2 \varphi_3)^2, \nonumber \\
 \left[(\phi_S \otimes \phi_S)_{\mathbf{1'}} \otimes (\phi_S \otimes \phi_S)_{\mathbf{1''}} \right]_{\mathbf{1}} & \to & (\varphi_2^2 + 2 \varphi_1 \varphi_3) (\varphi_3^2 + 2 \varphi_1 \varphi_2), \nonumber \\
 \left[(\phi_S \otimes \phi_S)_{\mathbf{3}_S} \otimes (\phi_S \otimes \phi_S)_{\mathbf{3}_S} \right]_{\mathbf{1}} & \to & \varphi_1^4 + 3 \varphi_2^2 \varphi_3^2 - 2 \varphi_1 (\varphi_2^3 + \varphi_3^3), \nonumber \\
 \left[ (\phi_S \otimes \phi_S)_{\mathbf{1}} \otimes u \right]_{\mathbf{1}} & \to & (\varphi_1^2 + 2 \varphi_2 \varphi_3) u, \nonumber \\
 \left[ (\phi_S \otimes \phi_S)_{\mathbf{1}} \otimes u \otimes u \right]_{\mathbf{1}} & \to & (\varphi_1^2 + 2 \varphi_2 \varphi_3) u^2, \nonumber \\
 \left( \left[(\phi_S \otimes \phi_S)_{\mathbf{3}_S} \otimes \phi_S  \right]_{\mathbf{1}} \otimes u \right)_{\mathbf{1}} & \to & (\varphi_1^3 + \varphi_2^3 + \varphi_3^3  - 3 \varphi_1 \varphi_2 \varphi_3) u .
 \label{eq:scal_combs}
\end{eqnarray}
Each such combination then receives an independent coefficient in the potential:
\begin{eqnarray}
 V &=& a_1 u + a_2 u^2 + b_2 (\varphi_1^2 + 2 \varphi_2 \varphi_3) + a_3 u^3 + b_3 (\varphi_1^2 + 2 \varphi_2 \varphi_3) u \nonumber\\
 && + c_3 (\varphi_1^3 + \varphi_2^3 + \varphi_3^3  - 3 \varphi_1 \varphi_2 \varphi_3) + a_4 u^4 + b_4 (\varphi_1^2 + 2 \varphi_2 \varphi_3) u^2 \nonumber \\
 && + c_4 (\varphi_1^3 + \varphi_2^3 + \varphi_3^3  - 3 \varphi_1 \varphi_2 \varphi_3) u + d_4 (\varphi_1^2 + 2 \varphi_2 \varphi_3)^2 + e_4 (\varphi_2^2 + 2 \varphi_1 \varphi_3) (\varphi_3^2 + 2 \varphi_1 \varphi_2)\nonumber \\
 && + f_4 [ \varphi_1^4 + 3 \varphi_2^2 \varphi_3^2 - 2 \varphi_1 (\varphi_2^3 + \varphi_3^3) ].
 \label{eq:scal_pot}
\end{eqnarray}
Obviously, scalar potentials have the inconvenient tendency to become very lengthy, which could be difficult to handle from a technical point of view. Furthermore, the unknown coefficients $a_1$, $a_2$, $b_2$, $a_3$, $b_3$, $c_3$, $a_4$, $b_4$, $c_4$, $d_4$, $e_4$, $f_4$ are hardly constrained, apart from the necessary requirement to lead to flavor symmetry breaking. However, this is exactly the property we can make use of here: by varying these coefficients, we can try to enforce the VEVs given in Eq.~\eqref{eq:ex_VEVs_abbr}, in order to use them in the model. It is important to understand that this is a necessary requirement for the model to be consistent, and it is for complicated potentials not always easy to achieve. On the other hand, it can be seen as a strength of models based on discrete symmetries to yield a framework for such a consistency check, which can lead to a useful constraint despite the arbitrariness present in the coefficients.

In order to check if the vacuum alignment as required is possible, we need to compute the partial derivatives of the potential in Eq.~\eqref{eq:scal_pot} with respect to the components $u$ and $\varphi_{1,2,3}$. Then, one has to insert the VEVs and check if they can really be simultaneous solutions of the resulting equations (even if either VEV is a valid solution, it may be that all VEVs together do not work consistently). Note that, in principle, one would also need to show that the VEVs do not only lead to an extremal value of the potential, but actually to a minimum (and ideally to the global minimum). However, a detailed discussion of these technicalities lies beyond the scope of this review. 

Computing the derivatives and inserting the VEVs leads to a system of four equations, two of which are identical:
\begin{eqnarray}
 \partial V / \partial u = 0 &\Rightarrow& a_1 + 2 a_2 \tilde \alpha_0 + 3 a_3 \tilde \alpha_0^2 + 3 b_3 \tilde \alpha_S^2 + 4 a_4 \tilde \alpha_0^3 + 6 b_4 \tilde \alpha_S^2 \tilde \alpha_0 = 0, \nonumber \\
 \partial V / \partial \varphi_1 = 0 &\Rightarrow& \tilde \alpha_S [b_2 \tilde \alpha_0 + b_4 \tilde \alpha_0^2 + 6 (d_4 + e_4) \tilde \alpha_S^2] = 0, \nonumber \\
  \partial V / \partial \varphi_{2,3} = 0 &\Rightarrow& \tilde \alpha_S [b_2 \tilde \alpha_0 + b_4 \tilde \alpha_0^2 + 2 (3 d_4 + 2 e_4) \tilde \alpha_S^2] = 0.
 \label{eq:scal_mins}
\end{eqnarray}
The question to answer is if these equations can be simultaneously solved with non-zero VEVs $\tilde \alpha_S$ and $\tilde \alpha_0$. From the last two equations, one can immediately see that this can only be the case if $e_4 = 0$. This may seem like a requirement that is easy to fulfill, but in fact it is a serious problem, since the corresponding term in the potential is a total singlet and there is no argument to simply set it to zero. One could try to modify the potential either by introducing more scalar fields (typically called \emph{waterfall fields} or \emph{driving fields} in case their only purpose is to make the desired vacuum alignment possible) or by adding higher order correction terms~\cite{Altarelli:2005yx}, both of which choices would make the potential much more complicated. Alternative strategies are to forbid certain terms in the potential by imposing supersymmetry~\cite{Altarelli:2005yx,Ma:2005qf,Altarelli:2008bg} or separating conflicting alignments by localizing them on different branes in an extra-dimensional setting~\cite{Altarelli:2005yp}.

Assuming there is a solution for this problem, setting $e_4 = 0$ implies that only the first two Eqs.~\eqref{eq:scal_mins} are actually independent, leaving us with two equations to determine two unknown VEVs. One could, for example, solve the first equation for $\tilde \alpha_S$ and insert the result in the second equation, leaving us with a polynomial of fourth degree in $\tilde \alpha_0$. In any case, the system of equations admits non-zero solutions for the VEVs whose values can be tuned by varying the coefficients in the potential. This shows that a suitable alignment can, at least in principle, be obtained in the model under consideration.

Similar procedures like the one explained here are often used in real models, partially supported by numerical calculations to get a feeling about whether the desired minimum of the potential is stable and can be obtained for reasonable choices of the parameters involved. As we had already pointed out on several occasions, the vacuum alignment problem can be technically very demanding, and sometimes it is close to impossible to solve. Nevertheless, good models should at least include an attempt so solve this problem, in order for the reader to judge about their validity.

A more in-depth discussion of the subjects can be found in several good reviews, see \emph{e.g.} Ref.~\refcite{King:2013eh} for the most recent one. The principles used in the literature for the determination of the vacuum alignment will hopefully be a bit more transparent with the help of this appendix.

\bibliographystyle{ws-ijmpd}
\bibliography{keV_Merle}

\end{document}